\title{Fast Distributed Algorithms for Girth, Cycles and Small Subgraphs} %TODO Please add
\author{Keren Censor-Hillel \thanks{Technion, Israel Institute of Technology, Israel. Email:ckeren@cs.technion.ac.il} ~ Orr Fischer \thanks{Tel-Aviv University, Israel. Email: orrfischer@mail.tau.ac.il} ~ Tzlil Gonen \thanks{Tel-Aviv University, Israel. Email: tzlilgon@gmail.com} \\ Fran{\c c}ois Le Gall \thanks{Nagoya University, Japan. Email: legall@math.nagoya-u.ac.jp} ~ Dean Leitersdorf \thanks{Technion, Israel Institute of Technology, Israel. Email: dean.leitersdorf@gmail.com} ~ Rotem Oshman \thanks{Tel-Aviv University, Israel. Email: roshman@tau.ac.il}}
\date{}
\newtheorem*{rep@theorem}{\rep@title}
\newcommand{\newreptheorem}[2]{%
	\newenvironment{rep#1}[1]{%
		\def\rep@title{#2 \ref{##1}}%
		\begin{rep@theorem}}%
		{\end{rep@theorem}}}
\newenvironment{lemma-repeat}[1]{\begin{trivlist}
		\item[\hspace{\labelsep}{\bf\noindent Lemma \ref{#1}.}]\em }%
	{\end{trivlist}}
\newenvironment{theorem-repeat}[1]{\begin{trivlist}
		\item[\hspace{\labelsep}{\bf\noindent Theorem \ref{#1}.}]\em }%
	{\end{trivlist}}
\newenvironment{claim-repeat}[1]{\begin{trivlist}
		\item[\hspace{\labelsep}{\bf\noindent Claim \ref{#1}.}]\em }%
	{\end{trivlist}}
	\newenvironment{proposition-repeat}[1]{\begin{trivlist}
		\item[\hspace{\labelsep}{\bf\noindent Proposition \ref{#1}.}]\em }%
	{\end{trivlist}}
\newcommand{\qedsymb}{\qed}
\newenvironment{proofof}[1]{\begin{trivlist}
		\item[\hspace{\labelsep}{\bf\noindent Proof of #1: }]
	}{\qedsymb\end{trivlist}}
\newtheorem{theorem}{Theorem}
\newtheorem{claim}{Claim}
\newtheorem{lemma}{Lemma}
\newtheorem{observation}{Observation}
\newtheorem{proposition}{Proposition}
\newtheorem{corollary}[theorem]{Corollary}
\newtheorem{definition}{Definition}
\newcommand{\set}[1]{\left\{ #1 \right\}}
\DeclareMathOperator{\polylog}{\mathrm{polylog}}
\newcommand{\remove}[1]{}
\theoremstyle{remark}
\theoremstyle{definition}
\newcommand{\ip}[1]{\left}
\newcommand{\congest}{\textsc{Congest}\xspace}
\newcommand{\clique}{\textsc{Congested Clique}\xspace}
\newcommand{\parent}{\texttt{parent}}
\DeclareMathOperator*{\ex}{ex}
\newcommand{\hide}[1]{ }
\newcommand{\nat}{\mathbb{N}}
\renewcommand{\mathbf}{\bm}
\newcommand{\var}[1]{\mathit{#1}}
\newcommand{\stt}{\medspace \mid \medspace}
\let\oldnl\nl% Store \nl in \oldnl
\newcommand{\nonl}{\renewcommand{\nl}{\let\nl\oldnl}}% Remove line number for one line
\begin{document}
%	\begin{titlepage}
		\title{
			Fast Distributed Algorithms for Girth, Cycles and Small Subgraphs 
		}
%		
%		\author{AUTHOR \footnote{Affiliation:. Email:. Supported in part by *** grant, number *****.}
%			\and AUTHOR2 \footnote{Affiliation:. Email:. Supported in part by *** grant, number *****.}
%			\and AUTHOR3 \footnote{Affiliation:. Email:. Supported in part by *** grant, number *****.}
%			\and AUTHOR4 \footnote{Affiliation:. Email:. Supported in part by *** grant, number *****.}
%		}
		
		\maketitle
\begin{abstract}

In this paper we give fast distributed graph algorithms for detecting and listing small subgraphs, and for computing or approximating the girth. Our algorithms improve upon the state of the art by polynomial factors, and for girth, we obtain a constant-time algorithm for additive +1 approximation in \clique, and the first parametrized algorithm for exact computation in \congest.

In the \clique model, we first develop a technique for learning small neighborhoods, and apply it to obtain an $O(1)$-round algorithm that computes the girth with only an additive $+1$ error. Next, we introduce a new technique (the partition tree technique) allowing for efficiently listing all copies of any subgraph, which is deterministic and improves upon the state-of the-art for non-dense graphs. We give two concrete applications of the partition tree technique: First we show that for constant $k$, it is possible to solve $C_{2k}$-detection in $O(1)$ rounds in the \clique, improving on prior work, which used fast matrix multiplication and thus had polynomial round complexity. Second, we show that in triangle-free graphs, the girth can be exactly computed in time polynomially faster than the best known bounds for general graphs. We remark that no analogous result is currently known for sequential algorithms.

In the \congest model, we describe a new approach for finding cycles, and instantiate it in two ways: first, we show a fast parametrized algorithm for girth with round complexity $\tilde{O}(\min\{g\cdot n^{1-1/\Theta(g)},n\})$ for any girth $g$; and second, we show how to find small even-length cycles $C_{2k}$ for $k = 3,4,5$ in $O(n^{1-1/k})$ rounds. This is a polynomial improvement upon the previous running times; for example, our $C_6$-detection algorithm runs in $O(n^{2/3})$ rounds,
compared to $O(n^{3/4})$ in prior work.
Finally, using our improved $C_6$-freeness algorithm, and the barrier on proving lower bounds on triangle-freeness of Eden et al., we show that improving the 
current $\tilde\Omega(\sqrt{n})$ lower bound for $C_6$-freeness of Korhonen et al.~by \emph{any} polynomial factor would imply strong circuit complexity lower bounds.		\end{abstract}
		
%		\thispagestyle{empty}
%	\end{titlepage}
\newpage
\section{Introduction}
\label{sec:intro}

A fundamental problem in many computational settings is that of finding cycles and other small subgraphs within a given graph. This paper focuses on finding subgraphs in distributed networks that communicate through limited bandwidth. The motivation for this is two-fold: first, for some subgraphs $H$ there exist distributed algorithms that perform better on $H$-free graphs, such as distributed cut and coloring algorithms in triangle-free graphs~\cite{HRSS17, PS15}. 
%Moreover, graphs without small cycles are known to be sparser than general graphs, a property that can be exploited in some settings (as will also be demonstrated in this paper). 
%\rtodo{The second sentence is maybe not so convincing, because we can check for sparsity faster than cycle-freeness; for example we can check if the graph has low arboricity in $O(\log n)$ rounds. So it's not clear why this is relevant.}
%
The second reason for which we are interested in these problems is that while solving them only requires obtaining \emph{local} knowledge, about small non-distant neighborhoods, the bandwidth restrictions impose a major hurdle for collecting this information. This induces a rich landscape of complexities for subgraph-related problems. We contribute to the effort of characterizing the complexities of subgraph-related problems by providing new techniques, from which we derive fast algorithms for such problems in the two key distributed bandwidth restricted models, namely, \congest and \clique.

In the \clique model, $n$ synchronous nodes can send messages of $O(\log n)$ bits in an all-to-all fashion. The input graph is an arbitrary $n$ vertex graph, partitioned such that every node receives the edges of a single vertex as input. Our main contribution in this model is an algorithm for obtaining a $+1$ approximation for the girth in a \emph{constant} number of rounds, where the girth of a graph is the length of its shortest cycle. 
\newcommand{\TheoremGirthApprox}
{
	Given a graph $G$ with an unknown girth $g$, there exists a deterministic $O(1)$ round algorithm in the \clique model which outputs an integer $a$, such that $g \in \{a, a+1\}$.
}
\begin{theorem}
	\label{theorem:girthApprox}
	\TheoremGirthApprox
\end{theorem}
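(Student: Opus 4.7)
The plan is to have each vertex $v$ locally reconstruct the induced subgraph on a small ball $B_r(v)$ around itself, scan this ball for the shortest cycle through $v$, and then aggregate the global minimum in one additional round. The structural backbone is the Moore bound: in a graph of girth $g$ we have $|B_r(v)| = O(n)$ for every $v$ whenever $r \le \lfloor (g-1)/2 \rfloor$, and for such $r$ the induced subgraph $G[B_r(v)]$ is a tree, hence encodable in $O(n \log n)$ bits.

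The main technical ingredient is a constant-round neighborhood-learning primitive, implemented by stage-wise BFS growth plus Lenzen's $O(1)$-round routing theorem. In each stage, the current frontier of each BFS forwards its list of neighbors toward the BFS root; a double-counting argument bounds the total traffic by $\sum_v |B_r(v)| = O(n^2)$, matching the routing budget. Since $g$ (and thus the correct depth $r$) is unknown a priori, I use an adaptive per-vertex cutoff: each $v$ halts as soon as its ball exceeds a fixed $O(n)$ threshold, which by the Moore bound cannot trigger strictly before depth $\lfloor (g-1)/2 \rfloor$.

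Once every $v$ owns $G[B_r(v)]$, a second routing stage delivers the edges among the boundary of $v$'s ball, allowing $v$ to detect locally any cycle through it of length up to roughly $2r+2$. Each $v$ reports its shortest-cycle length and, via a single extra broadcast round, all nodes learn the global minimum, which they declare as the output $a$. Correctness reduces to showing that the girth-realizing vertex $v^\ast$ sees a cycle of length at most $g+1$ in its (possibly extended) view, while no vertex ever observes a cycle strictly shorter than $g$, yielding $g \in \{a, a+1\}$.

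The main obstacle is the routing-load analysis under unbalanced ball sizes, which is unavoidable when the graph has vertices of very different degrees. The per-vertex adaptive cutoff handles this, together with the structural observation that any vertex whose ball overflows the $O(n)$ budget cannot realize the girth (again by Moore), so ignoring its contribution preserves correctness. The additive $+1$ slack in the output is absorbed precisely by the boundary layer that the primitive can only partially reconstruct, which is exactly what the theorem's approximation guarantee permits.
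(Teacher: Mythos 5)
There is a genuine gap in the round-complexity analysis, and it is precisely the issue the paper devotes most of its effort to overcoming. You describe the neighborhood-learning primitive as ``stage-wise BFS growth'' in which the current frontier forwards its neighbor lists toward each BFS root, stage by stage. Each such stage grows the known radius by one hop, so reaching the required radius $r = \Theta(g)$ takes $\Theta(g)$ applications of Lenzen's routing. Since $g$ can be as large as $\Theta(\log n)$ (the preprocessing in the paper only removes the case $g \geq \log n$), this yields an $O(\log n)$-round algorithm, not an $O(1)$-round algorithm. The adaptive cutoff you add controls the per-stage traffic, but it does nothing to reduce the \emph{number} of stages, which is where the $O(1)$ bound actually breaks.

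The paper's proof is structured specifically to avoid this linear-in-$g$ blowup, and both of its key ideas are absent from your proposal. First, it uses a \emph{non-BFS} path-listing technique (Theorem~\ref{theorem:edgePartitioning}): the edge set is partitioned into $O(n^{2/k})$ segments, every subset of $\lfloor k/4 \rfloor$ segments is assigned to a distinct node, and Tur\'an-type density bounds guarantee that each such subset has $O(n)$ edges; this lets the network list \emph{all} paths of length $\lfloor g/10 \rfloor$, and hence learn radius-$\lfloor g/20 \rfloor$ neighborhoods, in a single $O(1)$-round burst. Second, it uses a \emph{doubling} step (Theorem~\ref{theorem:bfsDoubling}) in which each node learns, in $O(1)$ rounds, the radius-$a$ neighborhoods of its radius-$a$ frontier nodes, jumping from $N_a(v)$ to $N_{2a}(v)$; starting from radius $\lfloor g/20 \rfloor$, only $O(1)$ such doublings are needed to reach radius $\lceil g/2 \rceil - 1$. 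Replacing both of these with single-hop frontier expansion loses the constant round bound. In addition, a smaller issue: your correctness sketch has the girth-realizing vertex ``see a cycle of length at most $g+1$,'' but in the paper's actual terminal state no vertex sees any cycle; the output pair $\{2b+1, 2b+2\}$ is inferred from the radius $b$ at which every ball is still a tree while the next doubling would overflow. If a vertex actually saw a cycle, the algorithm would output the exact girth, and the $+1$ slack would never arise.
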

For comparison, note that the current state-of-the-art algorithm computes the exact girth in $O(n^{0.158})$ rounds~\cite{CHKKLPJ17}.
To obtain our $+1$ approximation algorithm, we devise two main new methods, which we describe here in a nutshell. The first is an algorithm in which each node learns its entire neighborhood up to a radius which is a constant approximation of the girth. To this end, we prove that we can quickly list all paths of sufficient length, as well as efficiently distribute them to the nodes that need to learn them. The second method that we introduce is a way to double the radius of the neighborhoods that all the nodes know, by having each node acquire the knowledge held by the farthest nodes in its currently-known neighborhood. Crucially, both of these procedures can be done in $O(1)$ rounds, and could be useful for additional applications.

Our second contribution in the \clique model is a \emph{partition tree} technique which allows for efficiently detecting or listing all copies of any subgraph with at most $\log n$ nodes, in a deterministic manner. In particular, our main application of the partition tree technique is to obtain the following subgraph listing algorithm, which improves upon the state-of-the-art for non-dense graphs.
%\dtodo{Explain what the contribution of the theorem is - does it derandomize stuff? Does it do stuff in O(1) and not O(logn) which might happen in randomized?}\ftodo{Indeed, it would be nice to write this clearly (currently we only say that it is better for sparse graphs).}

%\ftodo{I felt it may be better to state Theorem \ref{theorem:generalSubgraphTheorem} before the other results, and write that we have two implications (instead of three). I also tried to use a Corollary environment rather than a Theorem environment for implications.}

\newcommand{\TheoremSubgraphsCC}{
	Given a graph $G$ with $n$ nodes and $m$ edges and a graph $H$ with $p \leq \log n$ nodes and $k$ edges, let $\tilde{m}=\max\{m,n^{1+1/p}\}$. 	
	There exists a deterministic \clique algorithm that terminates in $O(\frac{k\tilde{m}}{n^{1+2/p}} + p)$ rounds and lists all instances of $H$ in $G$.
}
\begin{theorem}
	\label{theorem:generalSubgraphTheorem}
	\TheoremSubgraphsCC
\end{theorem}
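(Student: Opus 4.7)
The plan is to distribute the enumeration of $H$-copies across all $n$ clique machines via a balanced partition of $V(G)$, which is the role of the partition tree. Let $q = \lceil n^{1/p}\rceil$. Using a canonical deterministic bucketing of the vertex identifiers, partition $V$ into $q$ blocks $V_1,\dots,V_q$, each of size $O(n^{1-1/p})$. Since $q^p \leq n$, there are at most $n$ ordered $p$-tuples of blocks, so we can injectively assign each tuple $(i_1,\dots,i_p)$ to a distinct clique node $M_{i_1,\dots,i_p}$. The partition tree is the depth-$O(p)$ tree of successive refinements that defines this $q$-ary partition; broadcasting its description together with the tuple assignment takes $O(p)$ rounds, which accounts for the additive $+p$ in the final bound.

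The machine $M_{i_1,\dots,i_p}$ is responsible for listing every copy of $H$ in $G$ whose $j$-th $H$-vertex is embedded into $V_{i_j}$. To perform this check locally it needs, for every edge $(a,b)\in E(H)$, all edges of $G$ between $V_{i_a}$ and $V_{i_b}$. Consequently, each $G$-edge $(u,v)$ with $u\in V_c, v\in V_d$ is routed to every tuple in which some position-pair $(a,b)\in E(H)$ is set to $(c,d)$ or $(d,c)$. For each of the $2k$ ordered $H$-edges, fixing two coordinates to $(c,d)$ leaves $p-2$ free coordinates each ranging over $q$ values, so every edge of $G$ is sent to at most $O(k\, n^{1-2/p})$ destinations, and each machine, in turn, needs only the edges realized between its selected block-pairs.

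The heart of the argument is bounding sending and receiving loads and invoking Lenzen's deterministic routing. Each node of the clique holds $O(m/n)$ input edges and replicates each one to $O(k\, n^{1-2/p})$ destinations, producing a per-node send load of $O(mk/n^{2/p})$. The canonical balanced partition gives an average per-destination receive load of the same order, and dividing by the per-round bandwidth $n$ yields $O(mk/n^{1+2/p})$ routing rounds. After routing, each machine enumerates its local $H$-copies and outputs them; correctness is immediate because every copy of $H$ in $G$ is discovered by the unique machine whose block-tuple matches its vertex-to-block assignment. When $m < n^{1+1/p}$ the routing bound falls below one round and is absorbed by the $+p$ term, which motivates stating the final complexity uniformly as $O(k\tilde m/n^{1+2/p}+p)$ with $\tilde m = \max\{m,n^{1+1/p}\}$.

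The main obstacle is deterministically controlling the worst-case receive load: a naive canonical partition may leave some block-pair much denser than average, overloading the corresponding machine. The partition tree resolves this by adaptively refining any subtree whose induced edge count exceeds the target load, then redistributing that subtree's tuples among additional machines that would otherwise be idle (since $q^p$ can be strictly less than $n$). Carrying out this load-balancing argument carefully, together with the standard analysis of Lenzen's routing primitive, is what turns the average-case bound into the claimed worst-case round complexity.
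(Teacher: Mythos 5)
Your high-level plan is the right one — assign each block-tuple to a machine, route the relevant edges, enumerate locally, and bound the routing load — and you correctly work out the target round bound $O(k\tilde m/n^{1+2/p}+p)$ from the send side. But there is a genuine gap at exactly the place you flag at the end: the worst-case \emph{receive} load per machine. The paper does not use a canonical vertex-ID bucketing at all, and it does not fix the imbalance by ``adaptively refining subtrees and redistributing tuples to idle machines.'' Instead, the partitions in the tree are constructed greedily by \emph{degree}: nodes are scanned in a fixed order and added to a part until a counter of accumulated degree would exceed $c_1 m/x + n$, guaranteeing Condition~1 (each part has $O(m/x+n)$ incident edges). Crucially, the tree is not one $q$-ary partition repeated; it is many \emph{different} partitions: a main partition $R$, and for every subset of at most $p-1$ parts of $R$, a separate refinement $M_{\{j_0,\dots,j_{\ell-1}\}}$ built with a \emph{second} counter that bounds the edges from the new part into the chosen ancestor parts (Condition~2, $\sum_t |E(U,U_t)|\le c_2 d_i\tilde m/x^2+n$). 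These conditions are what let the leaf owner's download be bounded by $O(k\tilde m/x^2+pn)$ and hence the routing by $O(k\tilde m/n^{1+2/p}+p)$ rounds via the load-balanced routing corollary.

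Your ``adaptive refinement'' sketch is not shown to work and faces concrete obstacles the paper's construction avoids: (i) you have essentially no spare machines — with $q=\lceil n^{1/p}\rceil$ the inequality $q^p\le n$ you assert generally fails, so there is no slack to absorb extra leaves created by refinement; (ii) unrestrained refinement can blow up the number of leaves beyond $n$, whereas the paper proves each constructed partition has at most $x$ parts (so the tree has $x^{p-1}$ leaves, $x$ owners each, exactly $n$ machines); and (iii) you never establish a per-leaf receive bound, you only compute an \emph{average} send bound and then assert ``the canonical balanced partition gives an average per-destination receive load of the same order,'' which is precisely what is not true for a naive ID-based partition on an arbitrary graph. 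In short, the statement's whole technical content is the deterministic construction of partitions with the two edge-count guarantees, and the proposal replaces it with an unproven hand-wave, so as written this does not constitute a proof.
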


We give two concrete applications of this result. The first is fast detection of even cycles.
\newcommand{\TheoremCyclesCC}
{
	Given a graph $G$ and an integer $k \leq (\log n)/2$, there exists a deterministic $O(k^2)$-round algorithm in the \clique model for detecting cycles of length $2k$.
}
\begin{corollary}
	\label{corollary:CyclesCC}
	\TheoremCyclesCC
\end{corollary}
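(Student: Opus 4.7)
The plan is to apply Theorem~\ref{theorem:generalSubgraphTheorem} with $H = C_{2k}$. Since $C_{2k}$ has $2k$ vertices and $2k$ edges, in the notation of Theorem~\ref{theorem:generalSubgraphTheorem} we take $p = 2k$ and the edge count also equal to $2k$. The assumption $k \leq (\log n)/2$ guarantees $p \leq \log n$, fulfilling the hypothesis of the theorem. The resulting round complexity is
\[ O\!\left(\frac{2k \cdot \tilde m}{n^{1 + 2/(2k)}} + 2k\right) \;=\; O\!\left(\frac{k \cdot \tilde m}{n^{1+1/k}} + k\right), \]
so the task reduces to bounding $\tilde m = \max\{m,\, n^{1+1/(2k)}\}$.

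To this end, I would combine the listing algorithm with the classical Bondy--Simonovits extremal bound: any $n$-vertex graph with more than $c \cdot k \cdot n^{1+1/k}$ edges (for some absolute constant $c$) already contains a copy of $C_{2k}$. The algorithm first computes $m$ in $O(1)$ rounds (each node broadcasts its degree and all nodes sum), and then branches on the value of $m$. If $m > c\cdot k \cdot n^{1+1/k}$, the graph necessarily contains $C_{2k}$ and the algorithm outputs YES immediately. Otherwise $m = O(k \cdot n^{1+1/k})$, which combined with $n^{1+1/(2k)} \leq n^{1+1/k}$ yields $\tilde m = O(k \cdot n^{1+1/k})$; substituting this into the complexity above gives $O(k^2)$ rounds, at the end of which every copy of $C_{2k}$ present in $G$ has been listed. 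A single additional round of aggregating local ``I-see-a-cycle'' flags converts listing into detection.

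The only step that carries any subtlety is verifying that the Bondy--Simonovits bound has a leading constant that is at most linear in $k$, since this is what underpins the final $O(k^2)$ (had the constant been, say, exponential in $k$, the branching threshold would have been too lax to control $\tilde m$). This is the standard form of the theorem and is therefore more a bookkeeping remark than a genuine technical obstacle. Everything else---computing $m$, the early-exit branch, and the final one-round aggregation---is routine in \clique and adds only $O(1)$ rounds on top of the invocation of Theorem~\ref{theorem:generalSubgraphTheorem}.
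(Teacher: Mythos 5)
Your argument is correct and follows the same route as the paper: compute $m$, use the Tur\'an-type bound for $C_{2k}$ (the paper cites it as Lemma~\ref{lem:prelim_turan} with constant $17k$, which is exactly the ``$c\cdot k$'' form you invoke) to short-circuit when the graph is dense, and otherwise plug $p=2k$, $k$ edges $=2k$, and $\tilde m = O(kn^{1+1/k})$ into Theorem~\ref{theorem:generalSubgraphTheorem} to get $O(k^2)$ rounds. The arithmetic and the branching structure match the paper's proof, and your closing bookkeeping remark about the linear-in-$k$ constant in the extremal bound is precisely what the paper's choice of Lemma~\ref{lem:prelim_turan} supplies.
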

Note that for constant $k$ the above algorithm completes within $O(1)$ rounds. Prior work for cycle detection in the \clique model used fast matrix multiplication (FMM) and thus had polynomial round complexity, apart from detecting $4$-cycles which was shown to have a constant-round algorithm~\cite{CHKKLPJ17}.
The second implication of the partition-tree technique is a fast algorithm for computing the \emph{exact} girth in triangle-free graphs. Prior algorithms for girth in the \clique model are based on fast matrix multiplication (FMM), a technique that can be no faster than checking for triangle-freeness.%, even in triangle-free graphs.
%, a technique which inherently hits the barrier of triangle detection, even in triangle-free graphs.
%\otodo{I rephrased this paragraph. Is it ok?} 
%\rtodo{I re-rephased it, but I'm not sure what this means exactly. How do we know that there there is no faster implementation of FMM in triangle-free graphs? Or is what we want to say that in the \emph{current} implementation of FMM we don't know how to exploit triangle-freeness?} \ktodo{Maybe we wanted to say that since FMM implies triangle detection, that approach could not beat triangle-detection.}\ftodo{I agree that ``even in triangle-free graphs'' is a bit odd (since for triangle-free graphs triangle finding is trivial.)}

\begin{corollary}
	\label{corollary:triangle-free-graphs}
	Given a triangle-free graph $G$ with an unknown girth $g$, there exists a deterministic $\tilde O(n^{1/10})$-round algorithm in the \clique model which outputs $g$.
\end{corollary}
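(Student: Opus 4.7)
The plan is to start from the $+1$ approximation of Theorem~\ref{theorem:girthApprox} and resolve the remaining ambiguity by testing for the existence of a cycle of the shorter of the two candidate lengths. First, I would invoke Theorem~\ref{theorem:girthApprox} in $O(1)$ rounds to obtain an integer $a$ with $g\in\{a,a+1\}$. It then suffices to decide whether $G$ contains a cycle of length exactly $a$: if so, $g=a$, and otherwise $g=a+1$.

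For the cycle-detection step I would split on the value of $a$. When $a=3$, triangle-freeness immediately forces $g=4$. When $a=2k$ is even (with $k\ge 2$), Corollary~\ref{corollary:CyclesCC} detects $C_{2k}$ in $O(k^2)=\tilde O(1)$ rounds. The remaining case, which drives the $n^{1/10}$ term, is $a=2k+1$ odd with $k\ge 2$. Here I would apply Theorem~\ref{theorem:generalSubgraphTheorem} with $H=C_a$, so $p=a$ and the subgraph has $a$ edges. The key structural observation is that $g\ge a=2k+1$ makes $G$ in particular $C_{2k}$-free, so the Bondy--Simonovits even-cycle theorem yields $m=O(n^{1+1/k})$, and hence $\tilde m=O(n^{1+1/k})$. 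Substituting into the bound of Theorem~\ref{theorem:generalSubgraphTheorem} gives a round complexity of
\[
O\!\left(\frac{a\cdot n^{1+1/k}}{n^{1+2/a}}+a\right)=O\!\left(a\cdot n^{1/(k(2k+1))}+a\right),
\]
which is maximized at $k=2$, i.e., $a=5$, giving $O(n^{1/10})$ rounds. For every other odd $a\le \log n$ the exponent $1/(k(2k+1))$ decreases rapidly, so this single case dominates.

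The main obstacle I anticipate is handling $a>\log n$, since Theorem~\ref{theorem:generalSubgraphTheorem} is stated only for $p\le\log n$. In this regime, however, $g>\log n$ combined with Moore/Bondy--Simonovits bounds forces $m=O\!\left(n\cdot n^{2/\log n}\right)=O(n)$, so the graph is extremely sparse. I would handle it by collecting all edges at a single designated node using Lenzen's routing in $O(1)$ rounds, computing the girth locally, and broadcasting the answer. Combining the four cases yields a deterministic algorithm whose worst-case round complexity is $\tilde O(n^{1/10})$, with the bottleneck arising solely from the $a=5$ case.
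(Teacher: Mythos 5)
Your proof is correct, but it takes a genuinely different route from the paper. The paper derives Corollary~\ref{corollary:triangle-free-graphs} as a one-line instantiation (with $\ell=3$) of Proposition~\ref{proposition:exactGirth}, whose proof never touches the $+1$-approximation machinery of Theorem~\ref{theorem:girthApprox}. Instead, the paper reads the relevant cycle length directly off the edge count: it takes $k'$ to be the largest integer with $m \le n^{1+1/k'}+n$, lists all $C_p$ for $p \le 2k'$ in $\tilde O(1)$ rounds via Theorem~\ref{theorem:generalSubgraphTheorem} (these are all cheap because the graph is that sparse), and observes via Lemma~\ref{lem:prelim_girth_turan} that the girth must then be $2k'+1$ or $2k'+2$; the bottleneck is the single $C_{2k'+1}$ listing, costing $\tilde O(n^{1/k'-2/(2k'+1)})$. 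You instead run Theorem~\ref{theorem:girthApprox} to obtain $a$ with $g\in\{a,a+1\}$, and then test $C_a$ membership, which for the dominating odd case $a=2k+1$ gives the identical exponent $1/(k(2k+1))$. Both approaches funnel into the same critical $C_5$-listing step at $\tilde O(n^{1/10})$, and both rely on the Tur\'an-type bound to make $\tilde m$ small. What the paper's route buys is economy: it sidesteps the entire two-phase $+1$-approximation algorithm and needs only the edge count and the partition-tree listing, and it generalizes cleanly to any girth lower bound $\ell$. What your route buys is a particularly transparent disambiguation step once the $+1$ approximation is available, at the cost of importing a much heavier tool than the statement actually needs. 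Your handling of the boundary case $a>\log n$ via $m=O(n)$ and centralized collection is fine and matches the paper's own $k'\ge(\log n)/2$ escape clause.
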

This result leverages the fact that graphs without small cycles become increasingly sparse, and the algorithm of Theorem \ref{theorem:generalSubgraphTheorem} is efficient on sparse graphs. 
We remark that, interestingly, no analogous result going below the complexity of FMM for girth in triangle-free graphs is known for sequential algorithms, since the best known sequential algorithms for cycle detection in sparse graphs (see \cite{AYZ97}) are not fast enough.
%\otodo{It looks like in [3] there are cycle detection algorithms that utilize graph sparsity, so what do we mean in this sentence?}\ktodo{Francois, can you please take a look here?}\ftodo{The algorithms from [3] do indeed exploit sparsity, but not enough to beat the $O(n^\omega)$-time algorithm}. 
We also note that given further lower bounds on the girth beyond triangle-freeness, the runtime of our algorithm improves even further; for instance, if the graph does not contain any $k$-cycle for $k\in\{3,4,5\}$, then our algorithm computes the exact girth in $\tilde O(n^{1/21})$ rounds. 
We refer to Proposition~\ref{proposition:exactGirth} in Section \ref{section:listing} for a more precise statement.

In the \congest model, $n$ synchronous nodes can send messages of $O(\log n)$ bits to their neighbors only, and the input graph is the communication graph. In this model, we develop a new approach for finding cycles of a given size. A key step that is present in all known sublinear-round algorithms for finding cycles in \congest is the elimination of \emph{high-degree vertices}: we check whether there is a cycle that includes a high-degree node, and if we conclude that there is no such cycle, we can remove the high-degree nodes from the graph. The remaining graph is much easier to handle, since it has low degree.
In prior work, the high-degree vertices were eliminated by sequentially enumerating over them and starting a short BFS from each one.
Here we introduce a different method for finding cycles that include a high-degree node:
intuitively, we show that if we start from a \emph{neighbor} of a small even cycle, we can quickly find the cycle itself.
Since high-degree nodes have many neighbors, if we sample a uniformly random node in the graph,
we are somewhat likely to hit a neighbor of the high-degree node, and from there we can find the cycle in constant rounds.

We apply this technique to give a fast algorithm that detects small even cycles, and a fast parameterized algorithm for computing the \emph{exact} girth. Specifically, we obtain the following:

\newcommand{\TheoremCyclesCongest}{
	Given a graph $G$, there exists a randomized algorithm in the \congest model for detection of $2k$-cycles in $O(n^{1-1/k})$ rounds, for $k=3,4,5$.
}
\begin{theorem}
	\label{theorem:cyclesCongest}
	\TheoremCyclesCongest
\end{theorem}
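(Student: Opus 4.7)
The plan is to partition vertices by degree using the threshold $\Delta \coloneq n^{1/k}$. Call a vertex \emph{heavy} if its degree exceeds $\Delta$ and \emph{light} otherwise. I would run two subroutines in parallel: one that searches for a $C_{2k}$ lying entirely in the light subgraph, and one that searches for a $C_{2k}$ containing at least one heavy vertex. If neither reports a cycle, the algorithm outputs ``no $C_{2k}$''.

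For the light-vertex case, the subgraph on light vertices has maximum degree $\Delta$ and therefore at most $n\Delta = n^{1+1/k}$ edges. I would apply a sparse cycle-detection subroutine, combining random color-coding on the $2k$ cycle positions with a Bar-Yehuda--Censor-Hillel--Schwartzman-style random edge scheduling that exploits the low arboricity, yielding detection in $\tilde O(n^{1-1/k})$ rounds in \congest.

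For the heavy case I would apply the new idea sketched in the paper's introduction. Instead of sequentially running a short BFS from each heavy vertex, which already costs $\Omega(n/\Delta)$ rounds just for sequencing, I would sample a uniformly random set $S$ of $\tilde O(n/\Delta) = \tilde O(n^{1-1/k})$ vertices. Since each heavy vertex $v$ has at least $\Delta = n^{1/k}$ neighbors, a union bound ensures that with high probability $S$ contains a neighbor of every heavy $v$ lying on some $C_{2k}$. Starting from each seed $s \in S$, if $s$ is adjacent to a heavy cycle vertex $v$, then the full $2k$-cycle sits within radius $k$ of $s$. I would certify its existence by a structured depth-$k$ exploration rooted at $s$ that, for every pair of $s$'s neighbors, tests whether two internally vertex-disjoint paths join them whose lengths sum to $2k-2$.

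The main obstacle is controlling \congest bandwidth while $\tilde O(n^{1-1/k})$ depth-$k$ explorations run simultaneously, since each such search could naively inflate to $\Delta^k$ messages. I would handle this with random-delay scheduling of the BFS tokens, routing through low-degree portions of the graph whenever possible, and replacing explicit path enumeration in the neighbor-pair test with an algebraic path-counting shortcut tailored to each target length. A careful case analysis for each $k \in \{3,4,5\}$ then combines these ingredients with the setting $\Delta = n^{1/k}$ to give the claimed $O(n^{1-1/k})$ bound; the restriction to $k \le 5$ arises because for $k \ge 6$ the depth-$k$ branching factor outpaces the $\tilde O(n^{1-1/k})$ savings from the sampling step and no further sparsification can absorb the overhead within this budget.
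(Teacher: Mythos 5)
There is a genuine gap. Your high-level decomposition (degree threshold $\Delta = n^{1/k}$, light vs.\ heavy cycles, sampling to find a neighbor of a heavy cycle node) matches the paper's, but the crucial part of the argument --- the one that actually makes the heavy case work --- is absent, and your substitute for it would fail.

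For the heavy case, you sample a whole set $S$ of $\tilde O(n^{1-1/k})$ seeds and try to run depth-$k$ explorations from all of them simultaneously, then propose to tame the resulting congestion by random-delay scheduling and an unspecified ``algebraic path-counting shortcut.'' Random delays cannot help here: they only rearrange a total load that is already bounded, and simultaneous depth-$k$ BFS from $\tilde O(n^{1-1/k})$ roots can force $\Omega(n)$ tokens onto a single edge, which no scheduling trick removes. The paper instead samples a \emph{single} node $s$ per iteration and runs $R' = \Theta(n^{1-1/k})$ iterations, each of which is $O(1)$ rounds (for constant $k$). The heart of the proof is then a structural lemma you never touch: one fixes the highest-degree cycle node $u_0$, restricts to neighbors $s$ of $u_0$ that are not themselves on any $2k$-cycle (``free'' neighbors), and bounds, for each $i$, the number of free neighbors with many node-disjoint color-consistent length-$i$ paths to the cycle node $u_i$ (``bad neighbors''). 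The bound is proved by showing that if two bad free neighbors share too many such paths, one can stitch them into a $2k$-cycle through those free neighbors --- a contradiction. This is what lets each cycle node cap the number of BFS tokens it forwards by a constant $T_k(i)$ without ever dropping the token coming from $u_0$. Without this bad-neighbor analysis there is no congestion control, and the algorithm has no reason to be correct.

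Two smaller points. For the light case, the paper does not need arboricity-based edge scheduling: it simply runs color-coded BFS on the subgraph of degree-$\le n^{1/k}$ nodes, observing that in a correctly colored light cycle the number of color-consistent paths reaching $u_i$ is at most $n^{i/k}$, so step $i$ of the BFS takes $n^{i/k}$ rounds and the total is $O(n^{1-1/k})$. And your explanation of the restriction to $k \le 5$ (branching factor outpacing the sampling savings) is not the actual reason; the restriction arises because the bad-neighbor analysis is carried out by hand with a case-by-case combinatorial argument for each of $k = 2,3,4,5$, and the authors did not extend that casework further.
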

This significantly improves upon the running time of $O(n^{1 - 1/\Theta(k^2)})$ of the previous state-of-the-art \cite{EFFKO19}:
for cycles of length 6,8 or 10, the previous algorithm had running time $O(n^{3/4})$, $O(n^{5/6})$ or $O(n^{10/11})$, respectively.
We believe that going below round complexity of $O(n^{1-1/k})$ for $C_{2k}$-detection in the \congest model would require a breakthrough beyond currently known techniques, with potential ramifications also for the \clique model. % --- for more details regarding this intuition, we refer to Appendix~\ref{appendix:c2kIntuition}. We also believe that the combinatorial arguments used in the proof might be of independent interest.

For exact girth, previously, an $O(n)$-round algorithm for exact girth was known, based on computing all-pairs shortest paths~\cite{HW12}. Our result is as follows:
\newcommand{\TheoremExactGirthCongest}{
	Given a graph $G$ with an unknown girth $g$, there exists a randomized $O(\min\{g\cdot n^{1-1/\Theta(g)},n\})$-round algorithm in the \congest model which outputs $g$.
	%\footnote{``Outputs'' here means that the first node that halts outputs the girth. Other nodes of the graph may halt later, and output larger values. This is unavoidable, unless we introduce a term in the running time that depends on the diameter of the graph.}
}
\begin{theorem}
	\label{theorem:exactGirthCongest}
	\TheoremExactGirthCongest
\end{theorem}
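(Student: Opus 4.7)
The plan is to compute the girth by iterating through candidate lengths $g' = 3, 4, 5, \ldots$ in increasing order, and for each $g'$ running a dedicated cycle-detection subroutine that decides whether $G$ contains a cycle of length exactly $g'$; the first $g'$ for which a cycle is found is returned. In parallel with the main procedure, we also run the trivial $O(n)$-round girth algorithm (based on a pipelined BFS from every vertex) so that the final cost is capped at $n$. If the per-value subroutine can be made to run in $\tilde O(n^{1-1/\Theta(g')})$ rounds, then summing over $g' = 3, \ldots, g$ yields the claimed $\tilde O(g \cdot n^{1-1/\Theta(g)})$ upper bound.

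The per-value subroutine follows the heavy/light dichotomy foreshadowed in the introduction. Fix a degree threshold $\Delta = n^{1/\Theta(g')}$ and call a vertex \emph{heavy} if its degree is at least $\Delta$. In a first phase we search for $g'$-cycles through heavy vertices. We pick a seed set $S$ by including each vertex independently with probability $\Theta((\log n)/\Delta)$, so that $|S| = \tilde O(n/\Delta)$ and every heavy vertex has a seed among its neighbors w.h.p. From each seed $s \in S$ we initiate a bounded-depth exploration that uses $s$ as an ``entry point'' into a potential $g'$-cycle through one of its heavy neighbors: the key observation from the introduction is that, since $s$ is only a neighbor of the heavy vertex $v$ on the cycle, a BFS from $s$ touches $v$ after one step and then has just $O(g')$ further steps to sweep the rest of the cycle. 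Pipelining these explorations across all $\tilde O(n/\Delta)$ seeds, each taking $O(g')$ rounds, yields a total of $\tilde O(g' \cdot n^{1-1/\Theta(g')})$ rounds for this phase. If no cycle is found, we may conclude that no $g'$-cycle passes through a heavy vertex and move to the second phase on the light subgraph, where every degree is at most $\Delta$; here the $\lceil g'/2 \rceil$-ball around each vertex has size at most $\Delta^{g'/2} = n^{O(1/2)}$, so a pipelined BFS lets every vertex learn its $\lceil g'/2\rceil$-neighborhood and locally test for a $g'$-cycle through itself, again within $\tilde O(n^{1-1/\Theta(g')})$ rounds.

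The main obstacle, and where the novelty beyond prior sequential-enumeration approaches lies, is the heavy-vertex phase. An ordinary BFS from $s$ sees any given vertex only once, so it does not by itself certify the existence of a cycle of a prescribed length; what we need to detect is a pair of internally vertex-disjoint paths from $s$ through a heavy neighbor whose concatenation has length exactly $g'$. I expect the correct implementation to have each seed propagate a short signature encoding both the first vertex on its path and the parity/length accumulated so far, so that any vertex receiving two signatures from the same seed with compatible lengths and disjoint first vertices can declare a cycle; for odd $g'$, the two half-paths have unequal lengths and the signatures must additionally track the parity split. The delicate part will be bounding the congestion of these signature messages under \congest bandwidth when many seeds simultaneously explore overlapping regions of the graph, and ensuring that the sampling probability chosen for $S$ is large enough to hit every relevant heavy cycle vertex's neighborhood while small enough to keep the pipelining cost at $\tilde O(n^{1-1/\Theta(g')})$.
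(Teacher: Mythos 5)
Your high-level outline matches the paper: iterate over candidate lengths $g' = 3, 4, \ldots$, split each phase into a heavy and a light case with degree threshold $n^{1/\Theta(g')}$, use random sampling to reach a neighbor of a heavy cycle vertex, and cap everything at $O(n)$ by the trivial algorithm. But the proposal is missing the central technical mechanism that makes the heavy phase go through, and the light phase as you describe it does not fit the bandwidth.

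For the heavy phase, the paper does \emph{not} sample a set $S$ of $\tilde O(n/\Delta)$ seeds and run their explorations in parallel. It samples a \emph{single} uniformly random node $s$ per iteration, repeating $R = \Theta(n^{1-\delta_k})$ times sequentially (each iteration only $O(k)$ rounds). Within one iteration, $s$ first checks via a $k$-round BFS whether $s$ itself lies on a $\leq k$-cycle; if not, every neighbor of $s$ starts a depth-$\lceil k/2\rceil$ BFS, and \textbf{each node forwards at most one BFS token and then stops}. This one-token rule is what bounds congestion to $O(1)$ per edge per depth step. Its correctness rests on a structural lemma you don't have: if $u_0 \in N(s)$ is on the shortest cycle and some cycle node $u_i$ receives a token from $w_0 \neq u_0$ no later than $u_0$'s token, then $s, w_0, \ldots, u_i, u_{i-1}, \ldots, u_0$ closes a cycle of length $\leq k$ through $s$ --- contradicting both that $s$ was checked to be off any $\leq k$-cycle and that $k$ is the current girth (the earlier phases already certified there is no shorter cycle). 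Your parallel version breaks exactly here: with $\tilde O(n/\Delta)$ concurrent seeds, the token that blocks $u_0$'s BFS can originate from a \emph{different} seed's neighborhood, so no short cycle through $s$ is forced and you cannot certify that the cycle BFS survives. You flag the congestion issue but do not resolve it, and I do not see how to salvage the parallel variant; the sequential sampling is load-bearing.

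Two further problems. In the light phase you propose that every vertex learn its full $\lceil g'/2\rceil$-neighborhood, claiming it has size $n^{O(1/2)}$; in fact with $\Delta = n^{1/\lfloor g'/2\rfloor}$ these balls can contain $\Theta(n)$ vertices, so this is infeasible. The paper instead runs a depth-bounded BFS where tokens (origin ID plus hop count) are each forwarded once; the detection is local (a node sees the same origin from two neighbors with hop counts $\lfloor k/2\rfloor$ and $\lceil k/2\rceil$). Finally, the per-phase ``decides whether $G$ contains a cycle of length exactly $g'$'' subroutine you posit does not exist as a standalone object: the phase-$k$ test can produce false positives when shorter cycles are present (two BFS paths may merge into a non-simple walk), and the paper's correctness (Lemma on no false halts) explicitly uses that all phases $< k$ have already certified the graph is $C_{<k}$-free. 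Your proposal leaves this dependency implicit, and the ``signature'' idea you sketch for tracking parity and first-vertex disjointness is a different and unverified mechanism that does not obviously prevent such false positives.
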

``Outputs'' here means that the first node that halts outputs the girth. Other nodes of the graph may halt later, and output larger values. This is unavoidable, unless we introduce a term in the running time that depends on the diameter of the graph.

Our final result is an \emph{obstacle} on proving lower bounds for $C_6$-freeness in \congest.
In~\cite{KR17} it was shown that the $C_{2k}$-freeness problem is subject to a lower bound of $\widetilde{\Omega}(\sqrt{n})$, for any $k$.
For $C_6$-freeness, the best known algorithm is our new algorithm here, which runs in $\tilde{O}(n^{2/3})$ rounds,
and there are reasons to believe that this may be optimal.
Unfortunately, we show that proving a lower bound of the form $\Omega(n^{1/2+\alpha})$, for any constant $\alpha > 0$,
would imply breakthrough results in circuit complexity.
This result uses ideas from our improved $C_6$-freeness algorithm, and the barrier on proving lower bounds on triangle-freeness from~\cite{EFFKO19}.
%
%we show that improving the best known lower bound for $C_6$-freeness of $\widetilde{\Omega}(n^{1/2})$ by \cite{KR17} to $\Omega(n^{1/2+\alpha})$ for any $\alpha > 0$ in the \congest model implies breakthrough results in circuit complexity. Therefore, there is a strong barrier on improving the known lower bound for $C_6$ detection by any polynomial factor. Previously, only a barrier on reductions from two party communication complexity was known \cite{EFFKO19}.
%
\subparagraph*{Related work.}
\label{subsec:related}
The problem of subgraph-freeness, and in particular cycle detection, has been extensively studied in the \clique and \congest models. While there are only a few papers which study girth computation, related problems such as diameter computation or shortest paths were also extensively studied in these models.

In the first work to consider girth computation in the sequential setting, Itai and Rodeh~\cite{IR78} gave algorithms with running time $O(mn)$ and $O(n^2)$ for computing exact girth and $+1$ approximation of the girth, respectively, using a BFS approach, and an $O(n^\omega)$ algorithm for exact girth using an algebric method, where $\omega$ is the exponent of matrix multiplication. Later, various trade-offs between running time and additive or multiplicative approximations for girth were obtained (e.g \cite{LL09,RW12,RT13,RW11}).

In the \clique model, an $O(n^{0.158})$ round algorithm for exact girth and a $2^{O(k)}n^{0.158}$ round algorithm for $C_k$-detection for any $k$ was shown by \cite{CHKKLPJ17} based on matrix multiplication techniques. These algebraic techniques were later extended by~\cite{LG16,CHLT20,CHDK19}.

For general subgraphs, a \clique algorithm for listing all instances of a subgraph $H$ of size $k$ in $\widetilde{O}(n^{1-2/k})$ rounds was shown in \cite{DLP12}, which was shown to be tight for triangles~\cite{PS18,ILG2017} and later for cliques of size $k>3$ as well \cite{FGKO18}. In this work we give a ``sparsity-aware'' version of this result, which has improved performance as the graph becomes sparser. Previously, distributed ``sparsity-aware'' algorithm were studied in the context of distributed sparse matrix multiplication~\cite{CHLT20,CHDK19} and in the context of the $k$-machine model~\cite{PS18}.

Frischknecht et al.~\cite{FHW12} was the first work to consider girth computation in \congest, and showed that at least $\widetilde{\Omega}(\sqrt{n})$ rounds are required in order to obtain a $(2-\epsilon)$-approximation of the girth. Peleg et al. \cite{PRT12} showed an algorithm computing a $(2-1/g)$-approximation of the girth with round complexity $O(D+\sqrt{gn}\log{n})$, where $g$ is the girth of the graph. Holzer et al. \cite{HW12} showed an algorithm for exact girth computation in $O(n)$ rounds, based on an exact all-pairs shortest path algorithm, and an algorithm for computing an $(1+\epsilon)$-approximation of the girth in $O(\min\{n/g+D\log{(D/g)},n\})$ rounds.

In the cycle-freeness problem in the \congest setting, Drucker et al.~\cite{DKO14} showed a near tight lower bound of $\widetilde{\Omega}(n)$ for constant sized odd-length cycles, as well as a lower bound of $\widetilde{\Omega}(n^{1/k})$ for $C_{2k}$-freeness, which was later improved to $\widetilde{\Omega}(\sqrt{n})$ by Korhonen et al.~\cite{KR17}. A \congest randomized algorithm for listing all triangles with round complexity $\widetilde{O}(n^{1/3})$ was shown in~\cite{CS19}, which improved the previous $\widetilde{O}(n^{1/2})$-round algorithm of~\cite{CPZ19} and the $\widetilde{O}(n^{3/4})$-round algorithm of~\cite{ILG2017}. The first sublinear-time algorithm for $C_{2k}$-freeness for $k \geq 3$ was given in~\cite{FGKO18} running in $\widetilde{O}(n^{1-1/(k^2-k)})$ rounds, and was later improved in~\cite{EFFKO19} to round complexity $\widetilde{O}(n^{1-2/(k^2-k+2)})$ for odd $k$ and $\widetilde{O}(n^{1-2/(k^2-2k+4)})$ for even $k$.

%\subsection{Organization}
%In Section

\section{Preliminaries}
%\subsection{Definitions}
\subparagraph*{Definitions.}
 Given a graph $H$, the \emph{$H$-listing} problem is a problem in which each node may output a set of $H$-copies, and the goal of the network is that w.h.p. the union over the sets of outputted $H$-copies by the nodes is exactly the set of $H$-copies in $G$. %\otodo{Do we define the girth problem anywhere? (there was no reviewer comment about this)}

%where a potential $H$-copy is a function $\phi:[|V(H)|] \rightarrow [\poly(n)]$ which maps every node in $H$ to a number in the identifier space.  We say a protocol solves the $H$-listing problem  

The \emph{T{\'u}ran number} of a graph $H$, denoted $\ex(n,H)$, is the maximum number of edges $m$ such that there exists a graph $G$ on $n$ vertices and $m$ edges which contains no copy of $H$. 

\begin{lemma}[T{\'u}ran number of $C_{2k}$ \cite{ZM13}]
	\label{lem:prelim_turan}
	For $k \in \mathbb{N}$, if $G$ is $C_{2k}$-free, then $G$ contains at most $17kn^{1+1/k}$ edges.
\end{lemma}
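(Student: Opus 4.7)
The plan is to follow the classical framework of Bondy and Simonovits, tracking the constant carefully so that it comes out to $17k$ as stated. Suppose for contradiction that $G$ has $m > 17k \cdot n^{1+1/k}$ edges but is $C_{2k}$-free; we will produce a $C_{2k}$.

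First I would do the standard degree-regularization step: iteratively remove every vertex of degree less than the average degree over some constant factor. Since the sum of removed degrees is bounded by a geometric series, one obtains a nonempty subgraph $G' \subseteq G$ with minimum degree $\delta \geq m/n$ (up to constants) and still more than, say, $cm$ edges on $n' \leq n$ vertices. This lets us work with a graph where BFS balls expand reliably.

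Next I would run BFS from an arbitrary vertex $v \in G'$ and consider the layers $L_0, L_1, \ldots, L_k$. The core goal is to lower bound the number of length-$k$ walks (or, better, paths) starting at $v$. Using the minimum-degree property layer by layer, one shows $\sum_{u} P_k(v,u) \geq (\delta/2)^k$ where $P_k(v,u)$ counts length-$k$ walks from $v$ to $u$; summing over $v$ and applying Jensen/convexity in the usual way yields that the total number of length-$k$ walks in $G'$ is at least roughly $n \cdot (2m/n)^k = (2m)^k / n^{k-1}$. Feeding in the hypothesis $m > 17k n^{1+1/k}$ makes this much larger than $k \cdot n^2$.

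The final — and for this proof the most delicate — step is the pigeonhole/theta-graph extraction: since there are only $n^2$ ordered pairs of endpoints but many more length-$k$ walks, some ordered pair $(x,y)$ is joined by more than $k$ such walks. From a family of $>k$ length-$k$ walks between fixed endpoints one can extract two paths that are internally vertex-disjoint; their union is a $C_{2k}$. The main obstacle is exactly this last extraction, because walks may repeat vertices and two length-$k$ paths sharing an internal vertex need not close up into a cycle of length \emph{exactly} $2k$. The standard trick is to restrict attention to walks that are ``non-backtracking'' and use that if two length-$k$ $x$--$y$ paths share an internal vertex $w$, then one can splice them at $w$ to obtain shorter multi-path configurations, and a counting argument (bounding how many walks can be absorbed by each such coincidence) shows that more than $k$ walks force a genuinely disjoint pair, hence a $C_{2k}$. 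Tracking the constants in this splicing argument — which is where the paper \cite{ZM13} improves on earlier bounds — produces the explicit $17k n^{1+1/k}$ bound.
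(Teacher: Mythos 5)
The paper does not prove this lemma; it is imported verbatim from \cite{ZM13} (as is Lemma~\ref{lem:prelim_girth_turan}), so there is no ``paper's own proof'' to compare against. What you have sketched is a proof attempt in the Bondy--Simonovits spirit, which is at least the right family of ideas, but the central step is wrong.

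The gap is in your pigeonhole/extraction step: it is \emph{not} true that a family of more than $k$ length-$k$ walks (or even paths) between a fixed ordered pair $(x,y)$ must contain two internally vertex-disjoint paths whose union is a $C_{2k}$. Take an edge $xw$ and attach $k+1$ internally disjoint paths of length $k-1$ from $w$ to $y$; this gives $k+1$ distinct length-$k$ $x$--$y$ paths, all sharing the internal vertex $w$, and the graph contains a $C_{2(k-1)}$ but no $C_{2k}$. Restricting to non-backtracking walks does not repair this, since the offending paths are already simple. This is precisely why the even-cycle Tur\'an number was hard, and it is why the actual Bondy--Simonovits argument (and its sharpenings such as Pikhurko's and the bound cited here) does \emph{not} proceed by counting walks and applying pigeonhole. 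Instead, after degree-regularization one runs BFS and proves a genuinely structural lemma: in a $C_{2k}$-free graph, the bipartite graph between two consecutive BFS layers $L_i, L_{i+1}$, and the graph induced on a single layer $L_i$, cannot have large average degree --- roughly because a dense such layer graph contains a long path or a ``theta''-type configuration that, combined with the BFS tree paths back to the root, closes a cycle of length exactly $2k$. This yields $|L_{i+1}| \geq \Omega(\delta/k)\,|L_i|$ layer by layer and hence $\delta = O(k\,n^{1/k})$, without ever invoking a walk count. Your sketch omits this layer-density lemma entirely and replaces it with a step that fails, so the argument does not go through; nor is there any visible route from it to the specific constant $17k$.
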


\begin{lemma}[T{\'u}ran number for girth \cite{ZM13}]
	\label{lem:prelim_girth_turan}
	If $G$ is $C_i$-free for all $3 \leq i \leq 2k$, then $G$ contains at most $n^{1+1/k} + n$ edges.
\end{lemma}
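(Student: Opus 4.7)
The plan is to reduce to a dense subgraph via a peeling argument, and then apply the classical Moore bound argument for graphs of high girth. The main idea is that a graph with $m$ edges on $n$ vertices contains a subgraph of minimum degree roughly $m/n$, but a subgraph of high minimum degree and girth greater than $2k$ cannot have too few vertices, so $m/n$ cannot be much larger than $n^{1/k}$.

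First I would invoke the standard peeling lemma: every graph $G$ with $n$ vertices and $m$ edges has a non-empty subgraph $H$ of minimum degree at least $\lceil m/n \rceil$. This follows by iteratively deleting any vertex whose current degree is strictly less than $m/n$; each deletion removes fewer than $m/n$ edges, so after $n$ deletions fewer than $m$ edges have been removed, meaning the procedure must stop at a non-empty subgraph $H$ with $\delta(H)\ge \lceil m/n\rceil$. Crucially, $H$ is still $C_i$-free for all $3\le i\le 2k$, since subgraphs inherit the absence of cycles.

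Next I would apply the Moore-type counting argument to $H$. Pick an arbitrary vertex $v\in V(H)$, and run a BFS from $v$ to depth $k$. If two distinct BFS paths of length $\le k$ starting at $v$ reached a common endpoint, their concatenation would contain a cycle of length at most $2k$ in $H$, contradicting the hypothesis. Hence, writing $\delta:=\delta(H)$, the BFS layers are disjoint and each vertex at depth $j-1$ contributes at least $\delta-1$ fresh vertices at depth $j$ for $1\le j\le k$ (with $\delta$ fresh vertices at depth $1$). Counting gives
\[
|V(H)| \;\ge\; 1 + \delta + \delta(\delta-1) + \cdots + \delta(\delta-1)^{k-1} \;\ge\; (\delta-1)^k,
\]
provided $\delta\ge 2$.

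Finally, I would combine the two ingredients. If $m/n\le 1$, then $m\le n\le n^{1+1/k}+n$ and we are done. Otherwise $\delta\ge m/n>1$, so $\delta\ge 2$, and the above chain gives $(\delta-1)^k\le |V(H)|\le n$, hence $\delta\le n^{1/k}+1$. Together with $\delta\ge m/n$, this yields $m/n\le n^{1/k}+1$, i.e.\ $m\le n^{1+1/k}+n$, as claimed. I do not anticipate significant obstacles: the only point requiring mild care is the boundary case $\delta\in\{0,1\}$, handled separately above, and the verification that BFS paths really produce distinct vertices up to depth $k$ when the girth exceeds $2k$.
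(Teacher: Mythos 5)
Your argument is correct. Note that the paper does not prove this lemma itself --- it is stated with a citation to~\cite{ZM13} --- so there is no in-paper proof to compare against. Your route (peel down to a non-empty subgraph $H$ of minimum degree $\delta\ge \lceil m/n\rceil$, then apply the Moore-bound count on BFS layers out to depth $k$, using the absence of cycles of length at most $2k$ to ensure that each vertex at distance $j\le k$ from the root has a unique BFS parent, no neighbors in its own layer, and that distinct depth-$(j-1)$ vertices do not share depth-$j$ neighbors) is the standard derivation, and it yields exactly the claimed $m\le n^{1+1/k}+n$. The boundary cases --- $m\le n$ dispatched trivially, and $\delta\ge 2$ guaranteed otherwise since $\lceil m/n\rceil\ge 2$ when $m>n$ --- are handled correctly, and the step $(\delta-1)^k\le |V(H)|\le n$ correctly uses that $H$ lives on at most $n$ vertices.
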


Let $N_i(v)$ denote the \emph{graph} defined by the nodes of hop-distance at most $i$ from $v$, that is, %all nodes $u$ such that there exists a path with at most $i$ edges between $v$ and $u$. Further, we define $N_i(v)$ to also 
it includes all such nodes and all the \emph{edges} incident to nodes with hop-distance at most $i-1$ from $v$. For sets $A,B \subseteq V$, denote by $E(A,B) = \{(a,b) \in E \stt a \in A \land b \in B\}$ the set of edges between $A$ and $B$.

%\subsection{Load-Balanced Routing in the \clique Model}
\subparagraph*{Load-Balanced Routing in the \clique Model.}
We introduce a useful routing procedure which extends that of \cite{Lenzen13}, and it is used throughout our results. The routing procedure of \cite{Lenzen13} routes a set of messages where each node needs to send and receive at most $O(n)$ messages, in $O(1)$ rounds. The following shows that it possible to replace the constraint where each node needs to \emph{send} at most $O(n)$ messages with one stating that the messages each node desires to send are based on at most $O(n\log n)$ bits. This allows us to route messages even when the some nodes are each a source of $\omega(n)$ messages.

\newcommand{\LoadBalancedRouting}
{
	Any routing instance $\mathcal{M}$, in which every node $v$ is the target of up to $O(n)$ messages, and $v$ locally computes the messages it desires to send from at most $|R(v)| = O(n\log{n})$ bits, can performed in $O(1)$ rounds. 
}
\begin{lemma}[Load Balanced Routing]
	\label{lem:load_balance_routing}
	\LoadBalancedRouting
\end{lemma}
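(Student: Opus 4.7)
The strategy is to reduce to the standard Lenzen routing~\cite{Lenzen13}, whose only unmet precondition here is the $O(n)$-per-source bound. The key budget is that, since every target receives at most $O(n)$ messages, the total number of messages is at most $O(n^2)$, and thus on average a node is the source of only $O(n)$ messages. The assumption $|R(v)|=O(n\log n)$ is precisely what lets us cheaply deputize other nodes to take over the sending role for those $v$ that are far above average, by just shipping them $R(v)$ and having them locally recompute $v$'s messages.

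\textbf{Helper allocation.} Each $v$ first computes $M_v$, the number of messages it wishes to send, and sets $c_v=\lceil M_v/n\rceil$. Since $\sum_v M_v\le O(n^2)$, we have $\sum_v c_v \le n+\sum_v M_v/n=O(n)$. I would then gather all $c_v$'s at a single coordinator (doable in $O(1)$ rounds via Lenzen), compute prefix sums $p_v=\sum_{u<v} c_u$, and return them to the nodes. The helper set $H_v$ is declared to be the $c_v$ nodes whose ids lie in $[p_v+1,p_v+c_v]\pmod n$; this cyclic assignment ensures every node belongs to $O(1)$ sets $H_v$.

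\textbf{Disseminating $R(v)$ to $H_v$.} This is the main technical step (and, in my view, the main obstacle): each of $v$'s $c_v$ helpers must learn all of $R(v)$ in $O(1)$ rounds, even though $c_v$ may be as large as $\Theta(n)$ and $R(v)$ is $\Theta(n)$ messages, so that having $v$ ship $R(v)$ separately to each helper is too expensive. I would instead use a two-round chunk-and-exchange protocol: in round~1, $v$ partitions $R(v)$ into $c_v$ chunks of $O(n\log n/c_v)$ bits and sends the $i$th chunk to the $i$th helper; in round~2, every helper forwards its chunk to all other helpers in $H_v$. Per source it helps, a helper sends and receives $O(n/c_v)\cdot c_v=O(n)$ messages during these two rounds, and since each node helps $O(1)$ sources, both rounds satisfy the $O(n)$-per-source, $O(n)$-per-destination precondition of Lenzen routing.

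\textbf{Final delivery.} After the exchange, every helper of $v$ holds all of $R(v)$ and can locally recompute $v$'s entire outgoing message list; using its index in $H_v$ it selects its assigned deterministic share of at most $\lceil M_v/c_v\rceil=O(n)$ messages. Each node is then a source of $O(1)\cdot O(n)=O(n)$ messages, every target still receives at most $O(n)$ by hypothesis, and so one final invocation of Lenzen's routing delivers all of $\mathcal{M}$ in $O(1)$ rounds, as required.
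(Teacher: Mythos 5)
Your proof is correct and follows essentially the same strategy as the paper: allocate $\lceil s(v)/n\rceil$ helper nodes per source (with total helper budget $O(n)$ since each target receives $O(n)$ messages), ship $R(v)$ to the helpers, have them locally reconstruct and transmit their deterministic share of $v$'s outgoing messages, and conclude with a single invocation of Lenzen's routing. The only notable variation is in how $R(v)$ reaches the helpers: you split $R(v)$ into $c_v$ chunks and exchange them among the helpers via two Lenzen-routed rounds, whereas the paper spreads $R(v)$ in $O(\log n)$-bit pieces over all $n$ nodes and has each node $i$ relay its piece directly to the $O(1)$ helpers needing it; both are valid $O(1)$-round realizations of the same idea (and note the paper also sidesteps your coordinator/prefix-sum step by simply having every node broadcast $s(v)$, after which the helper assignment is locally computable by all — your scheme needs an extra small step to make the assignment globally known).
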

\begin{proof}
	For every node $v$, let $s(v)$ be the number of messages $v$ is a source of in $\mathcal{M}$, and have $v$ broadcast $s(v)$. The network allocates $h(v) = \lceil s(v)/n \rceil$ \emph{helper nodes} to $v$ in such a way that each node $u$ is a helper node of at most $O(1)$ other nodes, and such that all nodes can locally compute which node helps another node. This is possible as since each node is the target of at most $O(n)$ messages, then the total number of messages is at most $(c-1)n^2$, for some constant $c$, and therefore $\sum_v h(v) \leq cn$. 
	
	Having allocated the helper nodes $h(v)$ to each node $v$, we ensure that these nodes learn $R(v)$ - this will later allow them to reproduce the messages which $v$ desires to send. First, each node $v$ partitions $R(v)$ into $O(n)$ messages of size $\log{n}$ and sends the $i^{th}$ message to node $i$. Then, node $i$ sends to each node $u \in h(v)$, the message which it got from $v$. Notice that since each node $u$ is the helper of at most $O(1)$ other nodes, then every node $i$ needs to send to every other node $u$ at most $O(1)$ messages. Therefore, this takes $O(1)$ rounds of communication since $O(1)$ messages are sent on every communication link $i-u$. 
	
	Every node in $h(v)$ now knows all of $R(v)$, and can thus locally create the messages $v$ desires to send. Node $v$ splits the messages it desires to send into $h(v)$ sets $(M_1(v),\dots,M_{h(v)}(v))$, each of size at most $O(n)$, and assigns each of the sets to one of its helper nodes. As the targets do not change, each node is still the target of up to $O(n)$ messages. Further, every helper node is now the source of up to $O(n)$ messages. Therefore, it is possible to apply the routing scheme from~\cite{Lenzen13} in order to have each helper node route the messages which it is assigned.
\end{proof}

Notice that this lemma implies, in a straightforward manner, the following Corollary~\ref{cor:load_balancing_routing_deterministic} which we refer to extensively.
\begin{corollary}
	\label{cor:load_balancing_routing_deterministic} 
	In the deterministic \clique model, given that each node originally begins with $O(n \log n)$ bits of input, and at most $O(1)$ rounds have passed since the initiation of the algorithm, then any routing instance $\mathcal{M}$, in which every node $v$ is the target of up to $O(n \cdot x)$ messages, can performed in $O(x)$ rounds.
\end{corollary}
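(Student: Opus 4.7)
The plan is to reduce the corollary directly to Lemma~\ref{lem:load_balance_routing} by partitioning the routing instance $\mathcal{M}$ into $x$ sub-instances $\mathcal{M}_1,\ldots,\mathcal{M}_x$, each of which satisfies the preconditions of the lemma, and then invoking the lemma sequentially.

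First I would bound the state held by each node at the moment of invocation. By hypothesis, every node starts with $O(n\log n)$ bits of input. In the \clique model, during a single round a node receives at most $n-1$ messages of $O(\log n)$ bits each, and so accumulates at most $O(n\log n)$ additional bits per round. Since only $O(1)$ rounds have elapsed, the total information available at each node $v$ from which it must compute its messages in $\mathcal{M}$ is $|R(v)| = O(n\log n)$, which is precisely the precondition on $R(v)$ required by Lemma~\ref{lem:load_balance_routing}.

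Next I would split $\mathcal{M}$ into $x$ sub-instances so that in each $\mathcal{M}_i$ every node is the target of at most $O(n)$ messages. Since every node $v$ is the target of at most $cnx$ messages in $\mathcal{M}$ for some constant $c$, a canonical rule (for example: order each node's incoming messages by a globally agreed lexicographic key on $(\text{source},\text{index})$ and assign the $j$-th message of $v$ to sub-instance $\lceil j/(cn)\rceil$) yields the desired partition. Crucially, this rule is purely local: each source node $u$ can determine, from $R(u)$ alone, which sub-instance each of the messages it sends belongs to, so the preprocessing incurs no additional communication and the bound $|R(u)| = O(n\log n)$ is preserved for every $\mathcal{M}_i$.

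Finally I would apply Lemma~\ref{lem:load_balance_routing} to each $\mathcal{M}_i$ in turn: the targets receive at most $O(n)$ messages per sub-instance, and each source can still reconstruct its messages from $O(n\log n)$ bits, so each invocation finishes in $O(1)$ rounds. Summing over the $x$ sub-instances gives total round complexity $O(x)$. The only mild obstacle is ensuring that the partition into sub-instances is globally consistent without extra rounds, which the canonical lexicographic rule above handles cleanly; everything else is immediate from the lemma.
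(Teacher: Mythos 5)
Your high-level plan --- bound the state to satisfy the $R(v)$ precondition, partition $\mathcal{M}$ into $x$ sub-instances each with $O(n)$ messages per target, and invoke Lemma~\ref{lem:load_balance_routing} $x$ times --- is exactly the right reduction, and the state-accounting argument is correct. The paper does not spell out a proof (it says the corollary follows ``in a straightforward manner''), so yours is a reasonable reconstruction of the intended argument.

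However, the specific partitioning rule you propose is \emph{not} locally computable, despite your claim that it is ``purely local.'' Under your lexicographic rule, to decide which sub-instance the message $(u \to v)$ at $u$'s local index $k$ belongs to, source $u$ must know the global rank $j$ of that message in $v$'s incoming order, i.e.\ $j = k + \sum_{u' < u} m(u', v)$ where $m(u',v)$ is the number of messages $u'$ sends to $v$. The offset $\sum_{u' < u} m(u', v)$ is not a function of $R(u)$: node $u$ has no idea how many messages other sources are sending to $v$. This is not merely a presentation issue; your proof explicitly depends on sources being able to evaluate this rule, and as written they cannot. The fix is easy and keeps the same structure: replace the global rank-based rule with a \emph{per-source} rule in which each source $u$ distributes its own $m(u,v)$ messages to $v$ as evenly as possible across the $x$ sub-instances (so sub-instance $i$ receives either $\lfloor m(u,v)/x\rfloor$ or $\lceil m(u,v)/x\rceil$ of them). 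Then each target $v$ receives at most $\sum_{u}\lceil m(u,v)/x\rceil \leq \frac{1}{x}\sum_u m(u,v) + n = O(n)$ messages in each sub-instance, and the assignment is trivially computable from $R(u)$ alone. With this substitution the rest of your argument goes through as stated.
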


While this is a weaker statement than that of Lemma~\ref{lem:load_balance_routing}, it is convenient to use when showing constant-time algorithms in the \clique model, as it completely circumvents the need for a bound on the number of messages each node desires to send.
\section{Deterministic $O(1)$ Round Algorithm for +1 Girth Approximation in the Congested Clique}
In this section we prove Theorem \ref{theorem:girthApprox}: we construct a deterministic $O(1)$ round algorithm for $+1$ girth approximation in the \clique model.
%
%\begin{theorem-repeat}{theorem:girthApprox}
%	\TheoremGirthApprox
%\end{theorem-repeat}
%
%
The algorithm is composed of two phases, each a novel technique on its own, and through their combination, we achieve the desired result. 
The first procedure is based on a \emph{subgraph enumeration} approach and allows each node to learn its $\lfloor \frac{g}{20} \rfloor$ hop-neighborhood in $O(1)$ rounds. Formally, it is shown in the following Theorem~\ref{theorem:edgePartitioning}.

\newcommand{\TheoremEdgePartitioning}
{
	Given a graph $G$, with $n$ nodes, and an unknown girth $g < \log n$, there exists an $O(1)$ round algorithm in the \clique model, which either outputs $g$, or, ensures that every node knows its $\lfloor  g / 20  \rfloor$ hop-neighborhood.
}
\begin{theorem}
	\label{theorem:edgePartitioning}
	\TheoremEdgePartitioning
\end{theorem}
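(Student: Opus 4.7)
The plan is to combine T\'uran-type sparsity bounds on high-girth graphs with a sparsity-aware path-enumeration (edge-partitioning) technique. Throughout, set $r = \lfloor g/20 \rfloor$; note that $r < \log n / 20$ since $g < \log n$.

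First I dispose of the low-girth regime. If $g \le 20$ then $r \le 1$, so every node already knows $N_r(v)$ from its local input. In parallel, I brute-force enumerate every cycle of length at most $20$ in $O(1)$ rounds (by routing constant-length patterns through Corollary~\ref{cor:load_balancing_routing_deterministic}) and output the smallest cycle length detected as $g$. From now on assume $g > 20$.

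For $g > 20$ the key structural observation is that $g > 2r$, so $N_r(v)$ is a tree rooted at $v$ for every vertex $v$. Two consequences follow. (a) By Lemma~\ref{lem:prelim_girth_turan}, $G$ has $m \le n^{1+2/g} + n$ edges, so the graph is strongly sub-quadratic. (b) Between any pair $u,v$ with $\dist(u,v) \le r$ there is a \emph{unique} path of length $\le r$; hence $\sum_v |N_r(v)| \le n^2$, which fits inside the $O(n^2)$-message capacity of a single round of Lenzen-style routing. The algorithm then has two phases. In the \emph{enumeration} phase, I use an edge-partitioning / subgraph-listing procedure in the spirit of Theorem~\ref{theorem:generalSubgraphTheorem}, specialized to paths, to list all simple paths in $G$ of length at most $r$. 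By (a) and (b) the number of paths to enumerate is $O(n^2)$, and they can be assigned ``owners'' in a load-balanced way. In the \emph{routing} phase, each listed path $(v_0,\ldots,v_{r'})$ is routed from its owner to its endpoints $v_0, v_{r'}$, using Corollary~\ref{cor:load_balancing_routing_deterministic}. Each node $v$ then locally assembles $N_r(v)$ from the paths it received, using the uniqueness property (b) to glue paths into a tree.

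The main obstacle is keeping the enumeration phase in $O(1)$ rounds when $r$ is non-constant. A direct invocation of Theorem~\ref{theorem:generalSubgraphTheorem} on the $(r+1)$-vertex path pattern incurs an additive $\Theta(r)$ term from the number of recursion levels, which is $\omega(1)$. A path-specific argument is therefore required. The plan is to exploit the tree structure of $N_r(v)$: in a girth-$g$ graph with $g > 2r$, any partial path has \emph{unique} continuations (no merging can occur without creating a short cycle), and by (a) the degree profile is so sparse that all $r$ layers of extensions can be batched. Concretely, I route each edge of $G$ to $O(1)$ helper nodes using the Load-Balanced Routing of Lemma~\ref{lem:load_balance_routing}, which lets each helper enumerate all paths passing through a fixed position in one shot; combined with the $O(n^2)$ total-size bound from (b), this collapses the $r$ sequential extensions into a constant number of Lenzen-style routing rounds. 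Verifying this collapse --- i.e.\ that both the per-node workload during enumeration and the total communication across all $r$ layers stay within the $O(n^2)$ per-round capacity --- is the technical crux of the proof.
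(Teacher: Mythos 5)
Your high-level skeleton matches the paper's: a constant-round path-enumeration step, followed by a constant-round redistribution step, exploiting that $N_r(v)$ is a tree (for $r=\lfloor g/20\rfloor$) and that high-girth graphs are sparse. But both of your phases, as proposed, have genuine gaps.

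\textbf{Enumeration phase.} You propose to route each edge to $O(1)$ helper nodes and have each helper ``enumerate all paths passing through a fixed position in one shot.'' This cannot work: a helper that sees only $O(1)$ copies of edges --- or even all $O(m/n)$ of ``its'' edges, which for $g$ near $\log n$ is $O(1)$ edges --- will essentially never hold all $r$ edges of a given path, so it cannot certify that path. You flag verifying this collapse as ``the technical crux'' but offer no mechanism that actually achieves it. The paper's mechanism is a combinatorial design: let $k$ be the largest integer with $m \le n^{1+1/k}+n$, partition the edge indices $\{1,\dots,m\}$ into $\lceil kn^{2/k}/(20e)\rceil$ consecutive segments, and assign to each node a distinct \emph{combination} of $\lfloor k/4\rfloor$ segments. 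Any path of length at most $\lfloor k/4\rfloor$ has its edges spread across at most $\lfloor k/4\rfloor$ segments, so \emph{some} node's combination contains them all, and that node witnesses the whole path. The Tur\'an-type bound then shows $\binom{\lceil kn^{2/k}/(20e)\rceil}{\lfloor k/4\rfloor}\le n$ (so the combinations fit on the $n$ nodes), each combination has $O(n)$ edges (so Corollary~\ref{cor:load_balancing_routing_deterministic} delivers it in $O(1)$ rounds), and $\lfloor k/4\rfloor\ge\lfloor g/10\rfloor \ge r$. Your ``$O(1)$ helpers per edge'' scheme has no analogue of this covering property; for $r=\omega(1)$ it simply fails.

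\textbf{Routing phase.} You propose to route each listed path $(v_0,\dots,v_{r'})$ in full to its endpoints and have $v$ glue them into a tree. A path of length $r'$ is $\Theta(r')$ messages, and a node $v$ with $|N_r(v)|=\Theta(n)$ would receive paths of total size up to $\Theta(nr)$, i.e.\ $\Theta(r)=\omega(1)$ routing rounds --- your observation that there are only $O(n^2)$ paths does not bound their total length. The paper instead routes only a \emph{summary}: for each $u\in N_r(v)$, a single hard-coded owner sends $v$ one message $(u,\dist(u,v))$, so $v$ receives $|N_r(v)|\le n$ messages; then, in a second routing step, $v$ asks the nodes at distance $\le r-1$ for their incident edges, and because $N_r(v)$ is a tree these total $O(n)$ messages again. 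Since $v$ needs only the vertex set and edge set of a tree, routing explicit paths is both wasteful and too slow. In short, each routing step must be budgeted to $O(n)$ messages per destination, and the two tricks that achieve this in the paper --- the segment-combination covering design in step~1, and the summarize-then-pull-edges protocol in step~2 --- are exactly what is missing from your proposal.
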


The latter procedure is based on a \emph{BFS-like} approach and allows each node to double the hop-distance of the neighborhood which it knows in $O(1)$ rounds, \emph{at least} until the first cycle is encountered. This is stated formally in Theorem~\ref{theorem:bfsDoubling}

\newcommand{\TheoremBFSDoubling}
{
	Let $G$ be a graph with $n$ nodes, an unknown girth $g < \log n$, and with minimum degree $\delta \geq 2$. Assume that for a given integer parameter $a > 0$, every $v \in V$ knows the edges of $N_{a}(v)$, and that $N_a(v)$ is a tree. There exists an algorithm which completes in $O(1)$ rounds of the \clique model, and either reports $g$ exactly, or, reaches one of the following two states: 
	(1) Every $v \in V$ knows $N_{2a}(v)$, \emph{or}
	(2) For some value $b\geq \lceil \frac{g}{2} \rceil-1$ which is agreed upon by all nodes of the network, every $v \in V$ knows all of $N_b(v)$.
	All nodes know whether $g$ was reported exactly, and if not, which state was reached.
}
\begin{theorem}
	\label{theorem:bfsDoubling}
	\TheoremBFSDoubling
\end{theorem}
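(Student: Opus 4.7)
I will perform a one-shot BFS-doubling via the load-balanced routing of Lemma~\ref{lem:load_balance_routing}, safeguarded by a global bandwidth test. By symmetry of distance, the depth-$a$ frontier $F(v):=\{u:\dist(v,u)=a\}$ satisfies $u\in F(v)\iff v\in F(u)$, and $v$ knows $F(v)$ exactly from its tree $N_a(v)$. As preparation, each $u$ broadcasts $|V(N_a(u))|$, and each $v$ then computes and broadcasts $T(v):=\sum_{u\in F(v)}|V(N_a(u))|$; this takes $O(1)$ rounds and makes $T^*:=\max_v T(v)$ globally known. Next I attempt the routing in which every $u$ sends its $O(n\log n)$-bit tree $N_a(u)$ to every $v\in F(u)$; the per-source data is $O(n\log n)$ bits and the per-target load is exactly $T(v)$ messages. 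Whenever $T^*=O(n)$, Lemma~\ref{lem:load_balance_routing} completes the routing in $O(1)$ rounds, and every $v$ reconstructs $N_{2a}(v)=N_a(v)\cup\bigcup_{u\in F(v)}N_a(u)$: any edge $(x,y)\in E(N_{2a}(v))\setminus E(N_a(v))$ has (wlog) $\dist(v,x)\in[a,2a-1]$, so following a shortest $v$-$x$ path to its depth-$a$ vertex $u\in F(v)$ puts $(x,y)$ inside $E(N_a(u))$. This yields state~(1).

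\textbf{Overflow case: small-girth fallback.} If $T(v^*)>2n$ for some $v^*$, then since $\bigcup_{u\in F(v^*)}V(N_a(u))\subseteq V(N_{2a}(v^*))$ has at most $n$ elements while the sum of sizes exceeds $2n$, pigeonhole yields distinct $u_1,u_2\in F(v^*)$ sharing a vertex $w\in V(N_a(u_1))\cap V(N_a(u_2))$. Splicing the tree paths $v^*\text{-}u_1,v^*\text{-}u_2$ in $N_a(v^*)$ with shortest $u_1\text{-}w$ and $w\text{-}u_2$ paths produces two distinct $u_1$-$u_2$ walks of length $\le 2a$, whose union contains a simple cycle of length $\le 4a$, proving $g\le 4a$. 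Armed with $g\le 4a$, I aim for state~(2) with $b\in[a,2a-1]$ via a partial-doubling variant of the same routing, in which each $u$ sends only the truncation $N_{b-a}(u)\subseteq N_a(u)$ and $v$ receives the smaller load $\sum_{u\in F(v)}|V(N_{b-a}(u))|$. The crucial observation is that the analogous pigeonhole at partial radius $c:=b-a+1$ shows that whenever the partial routing for $b+1$ overflows, two distinct $u_i\in F(v^*)$ share a common $w$ at distance $\le c$, yielding a cycle of length $\le 2a+2c=2b+2$; hence the largest $b\in\{a,\dots,2a-1\}$ for which the partial routing is feasible satisfies $b\ge\lceil g/2\rceil-1$, as required. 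If every candidate routing is feasible but a simple cycle is exposed inside some reconstructed $N_b(v)$, I report the minimum such cycle length as the exact $g$, using $O(1)$ extra rounds of global minimum.

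\textbf{Main obstacle.} The hardest step is identifying the correct fallback radius $b^*$ in Case~B within $O(1)$ rounds: a direct test over all candidates $b\in\{a+1,\dots,2a-1\}$ would require each $u$ to broadcast its full size profile $(|V(N_c(u))|)_{c\le a}$, costing $\Omega(a)$ rounds in general. I expect to bypass this by broadcasting only a single ``threshold'' value per source---for instance the smallest $c_u^*$ at which $|V(N_{c_u^*}(u))|$ first exceeds $n$, which is a single integer per source and thus costs $O(1)$ rounds---and then batching the partial-doubling routings for the relevant $O(\log a)$ candidate radii into one invocation of Corollary~\ref{cor:load_balancing_routing_deterministic}, whose guarantee of $O(x)$ rounds for $O(nx)$-message targets absorbs all the candidate loads into a constant number of rounds. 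Together with Lemma~\ref{lem:prelim_girth_turan} (which certifies $m=O(n^{1+1/a})$ from the assumption that $N_a(v)$ is a tree for every $v$), this delivers either state~(1), state~(2), or an exact report of $g$, in $O(1)$ rounds total.
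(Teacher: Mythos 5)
Your overall architecture matches the paper's (threshold test on per-target load, fall back to a partial doubling when the test fails), but there are two concrete gaps that make the argument fail as written.

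\textbf{The pigeonhole step is broken.} You set $T(v)=\sum_{u\in F(v)}|V(N_a(u))|$ and claim that $T(v^*)>2n$ forces two distinct $u_1,u_2\in F(v^*)$ to share a vertex, from which you splice a cycle. But \emph{every} $u\in F(v^*)$ has $v^*\in V(N_a(u))$, and more generally a large portion of $N_a(v^*)$ is contained in each $N_a(u)$; so two frontier trees always overlap, even when the graph is a tree and has no cycle at all. (Take $v^*$ the root of a high-branching tree: $|F_a(v^*)|$ is large and each $N_a(u)$ contains the whole length-$a$ stem to $v^*$, so $T(v^*)$ can far exceed $2n$ with girth $=\infty$.) The splicing step then produces two walks that may literally coincide. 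Your fallback therefore fires spuriously, yet certifies a bound $g\le 4a$ that need not hold. The paper avoids exactly this by having each $u$ report $|N_a(u)\setminus N_a(v)|$ (the \emph{new} edges only); with that choice, if the sum is $\ge n$ then two frontier subtrees must collide \emph{outside} $N_a(v)$, which genuinely creates a cycle in $N_{2a}(v)$. The same issue propagates to your partial-radius version with $N_{b-a}(u)$.

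\textbf{The $O(1)$-round search for $b$ is not resolved.} You observe correctly that testing all $a$ candidate radii naively costs $\Theta(a)=\Theta(\log n)$ rounds, but the proposed fix (broadcast a single threshold $c^*_u$ per node and ``batch $O(\log a)$ candidate radii'') is not a proof: it is unclear why only $O(\log a)$ radii matter, and even if so, a batched invocation of Corollary~\ref{cor:load_balancing_routing_deterministic} with $O(n\log a)$ messages per target runs in $O(\log a)=O(\log\log n)$ rounds, not $O(1)$. The paper's mechanism is different and works: compute $d=\lfloor n/\max_v|F_a(v)|\rfloor$ globally, have each frontier node $u$ send to $v$ just the $d$ values $|N_1(u)\setminus N_a(v)|,\ldots,|N_d(u)\setminus N_a(v)|$ (this fits within $O(n)$ per target by the choice of $d$), and then use the $\delta\ge 2$ hypothesis to prove $g\le 2a+2d$ via a frontier-growth argument, so that $\min\{i',d\}$ is always enough. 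Note that your proposal never uses $\delta\ge 2$, which is a hint that a step has been skipped; that hypothesis is precisely what makes the paper's bound $g\le 2a+2d$ go through (without it the frontiers can shrink and the counting contradiction fails).

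The State-1 direction of your argument (send $N_a(u)$ to each $v\in F(u)$ and reconstruct $N_{2a}(v)$) is essentially the paper's, modulo the fact that the paper sends only $N_a(u)\setminus N_a(v)$, which is what keeps the load bound meaningful. The reference to Lemma~\ref{lem:prelim_girth_turan} at the end is not needed for this theorem and does not fill either gap.
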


Thus, by invoking the first algorithm once, and then the latter for a constant number of times, we achieve an $O(1)$ round algorithm for the approximation problem. The reason we achieve a $+1$ approximation, and not an exact result, is due to the fact that the second algorithm stops right before detecting the shortest cycle in the graph and cannot differentiate whether it is of odd or even length. 
In Section~\ref{section:thrmGirthApproxProof}, we formally prove Theorem~\ref{theorem:girthApprox}, when $g < \log n$ and the minimum degree is $\delta \geq 2$, using Theorems~\ref{theorem:edgePartitioning},~\ref{theorem:bfsDoubling}. The constraints $g < \log n$ and $\delta \geq 2$ can be quickly overcome by eliminating some trivial, degenerate cases, and it is shown how to remove these constrains in Section~\ref{section:preliminaryPreprocessing}. Finally, in Sections~\ref{section:phase1},~\ref{section:phase2}, we proceed to our fundamental technical contributions by showing the proofs of Theorems~\ref{theorem:edgePartitioning},~\ref{theorem:bfsDoubling}.

\subsection{Preliminary Preprocessing}
\label{section:preliminaryPreprocessing}
Prior to the initiation of the main algorithm which achieves the $+1$ approximation for girth in the \clique model, we perform some preliminary steps in order to treat trivial or degenerate cases. Specifically, we take care of the case when $g \geq \log n$, and ensure that the minimum degree in the graph is $\delta \geq 2$, without changing the girth of the graph.

We assume that the graph is simple, i.e., does not contain self-loops or multiple edges. Notice that such cases are trivial.

\subparagraph*{Graphs With High Girth}
We note that by~\cite[Theorem 4.1]{ZM13}, if $g \geq \log{n}$ or $G$ has no cycles, then $m = O(n)$. Thus, in this case, by using the routing algorithm of~\cite{Lenzen13}, all the nodes in the graph can learn the entire graph in $O(1)$ rounds and output $g$ using local computation. Therefore, for the remainder of the algorithm, we may assume that $g < \log{n}$.

\subparagraph*{Degenerate Nodes}
We remove all nodes that do not participate in any cycles. In particular, after the removal of these nodes, no node $v$ with $d(v) < 2$ remains. This procedure can be seen a specific case of procedures used in \cite{KKM13,AGM12,GP16}.

\begin{definition}[$1$-degenerate nodes]
	A node is called $1$-degenerate if it is marked in the following process. Mark all nodes with $d(v) < 2$; remove all marked nodes; repeat as long as it is possible to mark nodes. 
\end{definition}

\begin{lemma}
	A $1$-degenerate node does not participate in any cycle.
\end{lemma}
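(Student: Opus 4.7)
The plan is to prove the lemma by induction on the iteration in which a node is marked $1$-degenerate. Let $V_0 = V$ and for $i \geq 1$ let $V_i$ denote the vertex set remaining after iteration $i$ of the marking procedure, so that the set $M_i = V_{i-1} \setminus V_i$ consists of exactly those nodes marked at iteration $i$ (i.e.\ the nodes $v$ with degree less than $2$ in the induced subgraph $G[V_{i-1}]$). I will show by induction on $i$ that no node in $M_i$ lies on any cycle of the original graph $G$.

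For the base case $i=1$, a node $v \in M_1$ satisfies $d_G(v) < 2$, so $v$ has at most one neighbor in $G$. Since every vertex on a cycle has at least two neighbors on that cycle, $v$ cannot participate in any cycle of $G$.

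For the inductive step, assume that no node in $M_1 \cup \cdots \cup M_{i-1}$ lies on any cycle of $G$, and consider $v \in M_i$. Suppose for contradiction that $v$ lies on a cycle $C$ in $G$. Then $v$ has two distinct neighbors $u_1, u_2$ on $C$. Since $v$ survives until iteration $i$, we have $v \in V_{i-1}$, and the degree of $v$ in $G[V_{i-1}]$ is less than $2$. Therefore at least one of $u_1, u_2$, say $u_1$, has been removed in some earlier iteration $j < i$, i.e.\ $u_1 \in M_j$. By the induction hypothesis, $u_1$ does not lie on any cycle of $G$, contradicting $u_1 \in C$. Hence $v$ is not on any cycle of $G$, completing the induction.

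Since the set of $1$-degenerate nodes is exactly $\bigcup_{i \geq 1} M_i$, the lemma follows. No step here is a genuine obstacle; the only subtlety is to be careful that the inductive hypothesis is about cycles in the \emph{original} graph $G$ (not in $G[V_{i-1}]$), and to use this to rule out that a neighbor of $v$ on $C$ could have been removed in an earlier round.
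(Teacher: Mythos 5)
Your proof is correct and captures the same essential idea as the paper's: the paper takes the first node of the cycle $C$ to be removed and notes that both its cycle-neighbors are still present, so its degree is at least $2$ at that time, a direct contradiction; your induction on the removal iteration is a repackaging of that extremal argument, using the inductive hypothesis on an earlier-removed cycle-neighbor instead of contradicting the degree condition at $v$ itself. Both are valid and equally short.
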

\begin{proof}
	For a cycle $C$, assume by contradiction there is such a node. Let $v$ be the first node in $C$ removed by the process and let $t$ be the time at which it is removed. Since no other node was removed prior to time $t$, then both neighbors of $v$ in $C$ are still part of the graph at time $t$. Therefore, $d(v) \geq 2$ at time $t$, contradicting the claim that it was removed at that time. 
\end{proof}

The network can detect all $1$-degenerate vertices in $G$ in $O(1)$ rounds in the following manner. Each node $v$ broadcasts $(v.id,d(v),\bigoplus_{(u,v) \in E} u.id)$, which are overall $O(\log{n})$ bits. Each node locally and iteratively does the following process until there are no nodes of degree $1$: \emph{If there is a node $v$ of degree $1$ in the graph, $\bigoplus_{(u,v) \in E} u.id$ is just the ID of its only neighbor. For that $u$, decrease the degree of $u$ by $1$ and XOR the third field of $u$ with $v.id$. Remove $v$ from the graph.}

\subsection{Proving Theorem~\ref{theorem:girthApprox}}
\label{section:thrmGirthApproxProof}

Here, we prove Theorem~\ref{theorem:girthApprox}, in case that $g < \log n$ and the minimum degree is $\delta \geq 2$.
\begin{proofof}{Theorem~\ref{theorem:girthApprox}}
	
	We first invoke the algorithm from Theorem~\ref{theorem:edgePartitioning}, in $O(1)$ rounds. Either $g$ was outputted, or, every node learned its $\lfloor g/20 \rfloor$ hop-neighborhood.
	
	Next, we invoke the algorithm from Theorem~\ref{theorem:bfsDoubling}. The nodes now learned new, larger neighborhoods - regardless of whether the algorithm halted in State 1 (every $v \in V$ knows $N_{2a}(v)$) or State 2 (for some $b\geq \lceil \frac{g}{2} \rceil-1$, every $v \in V$ knows all of $N_b(v)$). If any node sees a cycle, then it broadcasts the length of the shortest cycle which it sees and all the nodes terminate and output the minimum of the values which were broadcast in the network. It is clear that, in this case, the exact value of $g$ is outputted, since all nodes know the  neighborhoods surrounding them of same radius, and thus if any node saw a cycle, one node must have seen the shortest cycle in the graph.
	
	Finally, in the case that no cycle was seen so far, we differentiate between the states at which the algorithm from Theorem~\ref{theorem:bfsDoubling} can halt at. If it halts at State 1, then every node learned twice the radius of the neighborhood it already knew. In such a case, we invoke Theorem~\ref{theorem:bfsDoubling} again and repeat. Notice that we can do this at most $O(1)$ times, before either seeing a cycle or halting at State 2, due to the fact that the nodes originally know their $\lfloor g/20 \rfloor$ hop-neighborhoods. In the case that we eventually halt at State 2, and no cycles were seen by any node so far, all the nodes output that the girth is either $2b+1$ or $2b+2$, where $b$ is the radius of the neighborhoods which they learned. It is clear that if all nodes learned their $b\geq \lceil \frac{g}{2} \rceil-1$ hop-neighborhoods, and none saw a cycle, then it must be that $b = \lceil \frac{g}{2} \rceil-1$ and thus either $g=2b+1$ or $g=2b+2$.
\end{proofof}

\subsection{Phase I: Initial Neighborhood Learning}
\label{section:phase1}
%We are now ready to prove Theorem~\ref{theorem:edgePartitioning}.
%\begin{theorem-repeat}{theorem:edgePartitioning}
%	\TheoremEdgePartitioning
%\end{theorem-repeat}
The key procedure of this phase (formally stated above as Theorem~\ref{theorem:edgePartitioning})
%This is the first phase of the main algorithm, and it 
consists of two major steps. In the first step, either each path of length $\lfloor \frac{g}{10} \rfloor$ in $G$ is detected by at least one node, or $g$ is outputted. This step can be seen as an edge-partition variant of the listing algorithm in~\cite{DLP12}. The second step uses a load-balancing routine in order to redistribute the information computed in the first step so that each node $v$ learn its $\lfloor \frac{g}{20} \rfloor$ hop-neighborhood.

%\subsubsection{Step 1: Path Listing}
\subparagraph*{Step 1: Path Listing.}
We next list all paths of length $\lfloor \frac{g}{10} \rfloor$, or output $g$, in $O(1)$ rounds.

First, each node sends its degree to the rest of the network, and each then locally calculates the number of edges in the graph $m = \sum_v deg(v)/2$. Let $k \in \mathbb{N}$ be the largest integer such that $m \leq n^{1+1/k} + n$. Then, each node $v$ is assigned a hard-coded range of $deg(v)$ indices in $\set{1, \dots, m}$, and locally numbers its edges using these indices.

If $k \leq 4$, then by Lemma~\ref{lem:prelim_girth_turan} the girth is of size at most $10$, and thus, trivially, paths of length $\lfloor \frac{g}{10} \rfloor \leq 1$ are known and we can halt. Thus, from here on, we may assume that $k\geq 5$.

Let $P$ be a partition of the set $\{1,\dots,m\}$ into $\lceil kn^{2/k}/(20e) \rceil$ consecutive segments of size $O\left(\frac{m}{\lceil kn^{2/k}/(20e) \rceil} + 1\right)$, and let $K$ be a family containing all the possible choices of $\lfloor k/4 \rfloor$ segments from $P$ (in this context, $e$ denotes the mathematical constant). It holds that 
$$|K| = {\lceil kn^{2/k}/(20e) \rceil \choose \lfloor k/4 \rfloor} \leq \left(\frac{e\lceil kn^{2/k}/(20e) \rceil}{\lfloor k/4 \rfloor}\right)^{\lfloor k/4 \rfloor} \leq \left(\frac{n^{2/k}}{2}+1\right)^{\lfloor k/4 \rfloor} \leq n^{\frac{2}{k}\lfloor k/4 \rfloor}  \leq n,$$
where the first inequality holds due to the well-known combinatorial statement that ${n \choose k} \leq (\frac{ne}{k})^k$, for all $n \in \mathbb{N}, 1 \leq k \leq n$, and in the other inequalities, the fact that $5 \leq k < \log n$ is used. Thus, it is possible to associate each $k_i \in K$ with a unique node $v_i$. Each $k_i$ is a set of $\lfloor k/4 \rfloor$ sets of $O\left(\frac{m}{\lceil kn^{2/k}/(20e) \rceil} + 1\right)$ edges, and so let $E_i$ denote the edges in the sets contained in $k_i$. Notice that 
$$|E_i| \leq \lfloor k/4 \rfloor\left(\frac{m}{\lceil kn^{2/k}/(20e) \rceil} + 1\right) \leq (k/4) \frac{20en^{1+1/k}}{kn^{2/k}} + k= 5en^{1-1/k} + k \leq n,$$ 
and therefore, by Corollary \ref{cor:load_balancing_routing_deterministic}, it is possible for each $v_i$ to learn all of the edges in $E_i$ in $O(1)$ rounds of communication.

Finally, every node $v_i$ broadcasts the shortest cycle which it witnesses in $E_i$. Notice that every path, $p$, of length at most $\lfloor k / 4 \rfloor$, is fully contained inside some $E_j$, due to the construction of $P$, and therefore the corresponding node, $v_j$, which now knows all of $E_j$, will witness $p$. Thus, if $g \leq \lfloor k / 4 \rfloor$, some node will witness the shortest cycle in the graph and be able to broadcast its length, $g$. Otherwise, notice that since $m\not\le n^{1+1/(k+1)}+n$, Lemma~\ref{lem:prelim_girth_turan} implies that the graph is not $C_i$-free for all $i\le 2(k+1)$, and thus $g\le 2(k+1)$. Thus all paths of length at most $\lfloor k / 4 \rfloor \geq \lfloor g/8 - 1/4 \rfloor \geq \lfloor g / 10\rfloor$ have been listed. Notice that $g/8 - 1/4 \geq g/10$ whenever $g \geq 10$, and this can be assumed, since otherwise, trivially, paths of length $\lfloor \frac{g}{10} \rfloor < 1$ are known. %\dtodo{I changed in the entire proof $k/5$ to $k/4$.}\ftodo{Thanks. Will try to check later.}

If at least one node $v_i$ informs about a cycle in $E_i$, the minimum number sent by a node is outputted as $g$, and the algorithm terminates. Otherwise, it proceeds to the second step.

%\subsubsection{Step 2: Neighborhood Learning}
\subparagraph*{Step 2: Neighborhood Learning.}
We desire to redistribute some of the information learned in the previous step so that each node will know its $\lfloor \frac{g}{20} \rfloor$ hop-neighborhood.

Notice that all paths of length at most $\lfloor g / 10\rfloor$ have been listed. Therefore, also all paths of length at most $\lfloor \frac{g}{20} \rfloor$ have been listed. We strive to redistribute this information so that each node $v$ knows all paths of length at most $\lfloor \frac{g}{20} \rfloor$ which start at $v$, and thus $v$ knows its entire $\lfloor \frac{g}{20} \rfloor$ hop-neighborhood. Notice that we would like for each $v$ to know both the nodes in its $\lfloor \frac{g}{20} \rfloor$ hop-neighborhood, as well as the edges between them.

We begin by ensuring that each $v$ knows every node $u$ in its $\lfloor \frac{g}{20} \rfloor$ hop-neighborhood. Let $v \in V$ and $u$ be some node in its $\lfloor \frac{g}{20} \rfloor$ hop-neighborhood. Since $\lfloor \frac{g}{20} \rfloor < g / 2$, then the $\lfloor \frac{g}{20} \rfloor$ hop-neighborhood of $v$ is a tree. Therefore, there exists exactly one path, $p_{v, u}$, of length at most $\lfloor \frac{g}{20} \rfloor$ between $v$ and $u$. In the previous step, we ensured that at least one node $w$ is aware of $p_{v, u}$. Specifically, notice that it might be the case that many nodes know about $p_{v, u}$, due to the last step, yet, every node $w$ which knows of this path also knows all the other nodes $w'$ which learned this path through their $E_{w'}$. Thus, it is possible to choose, in a hard-coded manner, a single node $w$ which will be responsible for informing $v$ that $p_{v, u}$ exists. Having done that, node $w$ desires to convey to $v$ the message that $u$ is in its $\lfloor \frac{g}{20} \rfloor$ hop-neighborhood, in addition to the hop-distance between $v$ and $u$ --- that is, the length of $p_{v, u}$. Notice that for each such $u$, node $v$ is destined to receive exactly one message, and therefore every node in the graph is the target of $O(n)$ messages. This shows that Corollary~\ref{cor:load_balancing_routing_deterministic} may be invoked in order to deliver all these messages in $O(1)$ rounds. 

Now, we desire to inform every $v$ of the edges in its $\lfloor \frac{g}{20} \rfloor$ hop-neighborhood. Node $v$ now knows all the nodes $u$ in this neighborhood, as well as the hop-distance to each of them. Node $v$ sends a message to each such $u$ which is at most $\lfloor \frac{g}{20} \rfloor - 1$ hops away from it, and requests that $u$ send to $v$ \emph{all} its incident edges in the graph. Notice that all these edges are exactly all the edges contained in the $\lfloor \frac{g}{20} \rfloor$ hop-neighborhood of $v$, and since this neighborhood is a tree, $v$ is the target of at most $O(n)$ messages. As before, this shows that Corollary~\ref{cor:load_balancing_routing_deterministic} may be invoked in order to deliver all these messages in $O(1)$ rounds.

\subsection{Phase II: Neighborhood Doubling}
\label{section:phase2}
The key procedure in this phase (formally stated above as Theorem~\ref{theorem:bfsDoubling}) is an $O(1)$ round algorithm which doubles the radius of the hop-neighborhood known to each node, until the nodes know a neighborhood large enough in order to approximate the girth up to an additive value of 1. 
%Formally, we prove Theorem~\ref{theorem:bfsDoubling}.
%
%\begin{theorem-repeat}{theorem:bfsDoubling}
%	\TheoremBFSDoubling
%\end{theorem-repeat}
%
The algorithm works along the following lines. Denote by $F_a(v)$, the nodes which are exactly at distance $a$ from $v$ --- we refer to these as the \emph{front-line} nodes. Each nodes $v$ initially knows $N_a(v)$, and at once attempt to learn all of $\bigcup_{u \in F_a(v)} (N_a(u) \setminus N_a(v))$, in an efficient manner. If this step succeeds, then all the nodes reach State 1, and halt. Otherwise, they coordinate to increase the radii of the neighborhoods which they know by as much as possible in $O(1)$ rounds, and ultimately arrive at State 2, and halt.

%\subsubsection{Halting at State 1}
\subparagraph*{Halting at State 1.}
Let $v \in V$ and $u \in F_a(v)$. Node $u$ aims to send to node $v$ the edges in $N_a(u) \setminus N_a(v)$. Notice that node $u$ can locally compute these edges as follows. It observes the first node $w$ on the path between $v, u$. Since $N_a(u)$ is a tree, for every node $w' \in N_a(u)$, there is exactly one simple path, $p_{u, w'}$, which $u$ sees to $w'$. Notice that $w' \in N_a(v)$ if and only if $p_{u, w'}$ passes through $w$. Thus, node $u$ knows exactly which edges it desires to send to node $v$. However, before sending them, it first sends to node $v$ the value $|N_a(u) \setminus N_a(v)|$.

We now shift back to the perspective of node $v$. It  computes and broadcasts an upper bound on $|\bigcup_{u \in F_a(v)} (N_a(u) \setminus N_a(v))|$, by calculating $\sum_{u \in F_a(v)} |N_a(u) \setminus N_a(v)|$. If all nodes broadcast values which are at most $n - 1$, then by Corollary~\ref{cor:load_balancing_routing_deterministic}, it is possible in $O(1)$ rounds to perform all the routing requests and have each node double the radius of the neighborhood which it knows. At this point, the nodes collectively reach State~1 and halt.

Otherwise, at least one node reported a value greater than or equal to $n$. This implies that for some node $v$, there is a cycle in $N_{2a}(v)$, since at least two nodes $u, u' \in F_a(v)$ have simple paths in their $a$ hop-neighborhoods to the same node $w$. In this case, the nodes proceed to a second part of the algorithm, which eventually leads to halting at State 2.

%\subsubsection{Halting at State 2}
\subparagraph*{Halting at State 2.}
Our goal, at this stage, is to determine the largest possible value $i' \in \{1, \dots, a - 1\}$, such that for every node $v$, $\sum_{u \in F_a(v)} |N_{i'}(u) \setminus N_a(v)| < n$. Once this is achieved, then the algorithm can complete in a similar manner to that above. To see this, assume that we have this maximal value $i'$. Therefore, all nodes $v$ can learn $N_{a + i'}(v)$ in $O(1)$ rounds, similarly to above. If any cycle is seen, then $g$ is outputted and the algorithm halts. Otherwise, due to the definition of $i'$, there must exist some node $v'$ such that $\sum_{u \in F_a(v')} |N_{i'+1}(u) \setminus N_a(v')| \geq n$. This implies that there is a cycle in $N_{a + i' + 1}(v')$, and therefore $2a + 2i' < g \leq 2a + 2i' + 2$. As such, $a + i' = (2a' + 2i' + 2)/2 - 1 \geq \lceil g/2 \rceil - 1$, and we may halt at State 2.

We now show how to find $i'$. This is trivially possible to accomplish in $O(a)$ rounds --- each node $u$ simply sends to $v$ the values $\{ |N_1(u) \setminus N_a(v)|, \dots, |N_{a-1}(u) \setminus N_a(v)| \}$, node $v$ locally computes the $a-1$ different sums, and broadcasts them. However, this does not suffice for our goal of an $O(1)$ round algorithm, as $a$ can be logarithmic in $n$. Instead, let every node $v$ broadcast $|F_a(v)|$, and denote by $v'$ the node with maximal $|F_a(v')|$, and write $d = \left\lfloor n/|F_a(v')| \right\rfloor$. For every node $v$ and $u \in F_a(v)$, node $u$ sends to $v$ the values  $\{ |N_1(u) \setminus N_a(v)|, \dots, |N_{d}(u) \setminus N_a(v)| \}$, node $v$  computes the $d$ different sums of these values from all $u \in F_a(v)$, and broadcast them. Notice that this takes $O(1)$ rounds, using Corollary~\ref{cor:load_balancing_routing_deterministic} as each node wants to receive at most $O(n)$ messages. Notice that it is now possible in $O(1)$ rounds to compute $\min\{i', d\}$ --- either $i' \leq d$, and thus $ \min\{i', d\} = i'$ and we can compute it, or, $ \min\{i', d\} = d$. If we show that $g \leq 2a + 2d$, then if all $v$ learn $N_{a + \min\{i', d\}}(v)$, this would suffice in order to either find the exact girth or halt at State 2, as required.

We claim that $g \leq 2a + 2d$. To see this, assume that $g > 2a + 2d$. Since $g > 2a + 2d$, there are no cycles in $N_{a+d}(v')$. Combining this with the fact that we assume the minimal degree in $G$ to be at least 2, we can see that for all $j\neq j' \in \set{1, \dots, d}$, it holds that $|F_{a+j}(v')| \geq |F_{a+j-1}(v')|$, and $F_{a+j}(v') \cap F_{a+j'}(v') = \emptyset$. Thus, in $N_{a+d}(v')$ there are at least $\left(d + 1\right) \cdot |F_a(v')| = \left(\left\lfloor n/|F_a(v')| \right\rfloor + 1\right) \cdot |F_a(v')| > n$ nodes, a clear impossibility. As we have arrived at a contradiction, we get that $g \leq 2a + 2d$, as required. 

\section{Subgraph listing in the \clique model}
\label{section:listing}

We show an efficient ``sparsity-aware'' algorithm to \emph{list} subgraphs in the \clique model. Our main result is the following theorem, which is proven in the following sections.

\begin{theorem-repeat}{theorem:generalSubgraphTheorem}
	\TheoremSubgraphsCC
\end{theorem-repeat}

As mentioned in the introduction, we can combine this result with known bounds on the number of edges in graphs without specific subgraphs, to achieve fast subgraph \emph{detection} results. First, by combining Theorem~\ref{theorem:generalSubgraphTheorem} with Lemma~\ref{lem:prelim_turan}, we immediately get Corollary~\ref{corollary:CyclesCC}: If the graph contains more than $17kn^{1+1/k}$ edges (which can be checked in a single round), then by Lemma~\ref{lem:prelim_turan} we can safely output that there must exist a cycle of length $2k$. Otherwise, plugging $p=2k$ and $m=17kn^{1+1/k}$ in Theorem~\ref{theorem:generalSubgraphTheorem} gives that we can detect (and even list, in this case) the existence of a cycle of length $2k$ within $O(k^2)$ rounds. 
% (which states that $C_{2k}$ detection can be solved in $O(k^2)$ rounds for $k \leq (\log n)/2$). 
Next, by combining Theorem~\ref{theorem:generalSubgraphTheorem} with Lemma~\ref{lem:prelim_girth_turan}, we can get the following result:

\newcommand{\ExactGirth}
{
	Given a graph $G$ with $n$ nodes, $m$ edges and an unknown girth $g$ such that $g > \ell$ for some known $\ell$, and defining $f(x) = n^{1/\lfloor(x-1)/2\rfloor - 2/(2 \cdot \lfloor(x-1)/2\rfloor + 1)}$, there is a deterministic $\tilde{O}(\min\{f(g),f(2 \cdot \lfloor(\ell + 1) / 2\rfloor + 1)\})$ round algorithm in the \clique model which outputs g.
}

\begin{proposition}
	\label{proposition:exactGirth}
	\ExactGirth
\end{proposition}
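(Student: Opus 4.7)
The plan is to search for the girth iteratively. For $p = \ell+1, \ell+2, \dots$, the algorithm decides whether $G$ contains a copy of $C_p$; the first $p$ for which the answer is yes equals $g$. Each such decision is made by invoking Theorem~\ref{theorem:generalSubgraphTheorem} with $H = C_p$, and for even $p = 2k$ we additionally prepend a one-round Tur\'an shortcut: the network broadcasts $m$ and checks whether $m > 17k\,n^{1+1/k}$; by the contrapositive of Lemma~\ref{lem:prelim_turan} this certifies a $C_{2k}$, and since shorter cycles have already been ruled out the algorithm outputs $g = 2k$ and halts. Otherwise $m = O(n^{1+1/k})$ and Theorem~\ref{theorem:generalSubgraphTheorem} lists all $C_{2k}$'s in $O(k) = \tilde O(1)$ rounds. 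At an odd step $p = 2k+1$, the previous step has certified $g > 2k$, so Lemma~\ref{lem:prelim_girth_turan} gives $m \le n^{1+1/k}+n$, and Theorem~\ref{theorem:generalSubgraphTheorem} on $C_{2k+1}$ runs in $O\bigl((2k+1)\,n^{1/k - 2/(2k+1)} + (2k+1)\bigr) = \tilde O(f(2k+1))$ rounds.

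The two halves of the claimed $\min$ correspond to two different ways of accounting for the total cost of the loop. For the side $f(2\lfloor(\ell+1)/2\rfloor+1)$, each even iteration costs $\tilde O(1)$ and each odd iteration at $p = 2k+1$ costs $\tilde O(f(p)) = \tilde O(n^{1/(k(2k+1))})$, a quantity strictly decreasing in $k$. Since the smallest odd $p$ that is ever iterated is precisely $2\lfloor(\ell+1)/2\rfloor+1$, the sum of the odd costs is dominated (up to polylogarithmic factors absorbed in $\tilde O$) by its leading term $f(2\lfloor(\ell+1)/2\rfloor+1)$.

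The sharper side $f(g)$ exploits the fact that Theorem~\ref{theorem:generalSubgraphTheorem} uses the \emph{actual} quantity $\tilde m = \max(m, n^{1+1/p})$, which is already constrained by the true girth rather than by whatever lower bound has been certified by the iteration so far. Writing $K = \lfloor(g-1)/2\rfloor$, Lemma~\ref{lem:prelim_girth_turan} already forces $m \le n^{1+1/K}+n$. Hence whenever $p \le K$ we have $\tilde m = n^{1+1/p}$ and the odd step collapses to $\tilde O(1)$, and whenever $K < p \le 2K$ the exponent $1/K - 2/p$ is non-positive and the step is again $\tilde O(1)$. The sole expensive iteration is $p = 2K+1$: if $g$ is odd this is already $p = g$ and the listing succeeds, while if $g$ is even this step finds no cycle and is followed by the $\tilde O(1)$-round Lemma~\ref{lem:prelim_turan} shortcut at $p = g$. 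In both parities the expensive step costs $\tilde O\bigl(n^{1/K - 2/(2K+1)}\bigr) = \tilde O(f(g))$, so the total is $\tilde O(f(g))$.

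The point where I expect to need the most care is the handling of even $g$. Attempting to list $C_g$ at step $p = g$ using only the Tur\'an bound $m \le n^{1+1/K}$ inherited from $C_{g-1}$-freeness would cost $\tilde O(n^{1/(K(K+1))})$, which is strictly larger than $f(g) = n^{1/(K(2K+1))}$. The even-$p$ Lemma~\ref{lem:prelim_turan} shortcut is precisely what collapses this otherwise bottleneck step to $\tilde O(1)$, and verifying that the even shortcut and the odd listing combine tightly enough to deliver $\tilde O(f(g))$ for both parities of $g$ is the technical heart of the analysis.
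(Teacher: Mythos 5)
Your iterative approach differs from the paper's proof, which directly computes $k'$, the largest integer with $m\le n^{1+1/k'}+n$, observes $g\le 2k'+2$, lists all $C_p$ for $p\le 2k'$ in $\tilde O(1)$ rounds total, and then makes a single call at $p=2k'+1$ costing $\tilde O(f(2k'+1))=\tilde O(f(g))$. Both routes rest on the same two observations --- the Tur\'an-based certification or bound on $\tilde m$, and the arithmetic identifying $p=2\lfloor(g-1)/2\rfloor+1$ as the unique expensive step --- so they are equivalent in substance; yours is more explicit about where the cost concentrates, while the paper's is shorter because it jumps straight to $k'$ instead of scanning from $\ell+1$.

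There is, however, a genuine gap in your proof for large girth. If $g$ is, say, $n^{\Omega(1)}$ (which forces $m=O(n)$), your loop runs for $g-\ell$ iterations, each $\tilde O(1)$, for a total of $\tilde O(g)$ rounds; but $f(g)\to 1$ as $g$ grows, so the claimed $\tilde O(f(g))$ bound is violated, and your assertion that the sum of the odd costs is dominated by its first term ``up to polylogarithmic factors'' is false because the number of summands is $\Theta(g-\ell)$, not polylogarithmic. The paper disposes of this case at the very start of its proof: if $k'\ge(\log n)/2$ then $m=O(n)$ and a single node learns the whole graph in $O(1)$ rounds and computes $g$ locally. You need the analogous stopping rule --- e.g., terminate the loop once $p>\log n$, or once $m$ is observed to be $O(n)$, and learn the graph directly. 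A smaller imprecision: saying that the Lemma~\ref{lem:prelim_turan} shortcut at $p=g$ ``collapses'' the bottleneck for even $g$ reads as if the shortcut necessarily fires there; the real reason that step is cheap is the dichotomy you set up earlier --- either $m>17kn^{1+1/k}$ and the shortcut certifies $g=2k$, or $m\le 17kn^{1+1/k}$, which makes $\tilde m=O(kn^{1+1/k})$ and the $C_{2k}$-listing via Theorem~\ref{theorem:generalSubgraphTheorem} cost $O(k^2)=\tilde O(1)$. Both branches are $\tilde O(1)$, but only the first is properly the ``shortcut,'' and it is worth being explicit that the second branch is what actually applies when $g=2K+2$ has a sparse witness graph.
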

Proposition \ref{proposition:exactGirth} first shows that the exact girth can be computed in $\tilde{O}(f(g))$ rounds --- a polynomial improvement over the state-of-the-art for all graphs with $g \geq 5$. Moreover, if it is known that the graph has girth greater than $\ell$,\footnote{It is possible to phrase a slightly stronger result which does not require a lower bound on the girth, but rather that for a specific $k$, which depends on the sparsity of the graph, there will not be any cycles of length $2k+1$.} then the round complexity is additionally guaranteed to be $\tilde{O}(f(2 \cdot \lfloor(\ell + 1) / 2\rfloor + 1))$. For instance, for any odd value $\ell=2r-1$ we get the upper bound $\tilde O(n^{1/r-2/(2r+1)})$. Taking $r=2$ gives Corollary \ref{corollary:triangle-free-graphs} stated in the introduction, which improves upon the state-of-the-art for triangle free graphs. 

We note that more claims can be shown using bounds for the T{\'u}ran numbers of various other graphs --- for example, for detection of $K_{s, t}$ (complete bipartite graph with $s$ nodes on one side and $t$ on the other) for certain values of $s, t$.

\begin{proofof}{Proposition~\ref{proposition:exactGirth}}
	Let $k'$ be the largest integer such that $m \leq n^{1 + 1/k'} + n$. If $k' \geq (\log n)/2$, then $m = O(n)$ and thus the entire graph can be learned by one node in $O(1)$ rounds, completing the proof.
	
	Otherwise, it is known that a cycle of length at most $2k' + 2$ exists in the graph, due to Lemma~\ref{lem:prelim_girth_turan} and $m > n^{1 + 1/(k'+1)} + n$ due to the definition of $k'$.
	Notice that since $m \leq n^{1 + 1/k'} + n$, then for each $p \leq 2k'$, it is possible to list all $C_p$ in the graph in $O(k')$ rounds using Theorem~\ref{theorem:generalSubgraphTheorem}. Therefore, since $k' < (\log n)/2$, it is possible in $\tilde{O}(1)$ rounds to list all cycles of length up to $2k'$. If a cycle is witnessed at this stage, then the nodes know the exact girth of the graph and halt.
	
	We arrive at the last case, which is determining whether a cycle of length $2k' + 1$ exists. We invoke Theorem~\ref{theorem:generalSubgraphTheorem} to list all $C_{2k' + 1}$, which takes $\tilde{O}(n^{1/k' - 2/(2k'+1)})$ rounds, and allows the nodes to determine the exact girth of the graph. Notice that the girth is either $2k'+1$ or $2k'+2$, and thus $f(g) = f(2k'+1) = f(2k'+2) = 1/k' - 2/(2k' + 1)$. The overall round complexity of the algorithm is thus $\tilde O(f(g))$.
	
	We now consider the case where we additionally know that $g>\ell$, and derive another bound on the complexity that depends only on $\ell$. If $\ell$ is even then we simply run the above algorithm; the complexity is $\tilde O(f(2 \cdot \lfloor(\ell + 1) / 2\rfloor + 1))$ since $g\ge \ell+1=2 \cdot \lfloor(\ell + 1) / 2\rfloor + 1$. Now assume that $\ell$ is odd. In that case we first check if the graph is $C_{\ell+1}$-free in $\tilde O(1)$ rounds using the algorithm of Corollary \ref{corollary:CyclesCC}. If the graph is not $C_{\ell+1}$-free, then we know that $g=\ell+1$. Otherwise we know that $g\ge \ell+2$ and we run the above algorithm; the complexity is again $\tilde O(f(2 \cdot \lfloor(\ell + 1) / 2\rfloor + 1))$ since $g\ge \ell+2=2 \cdot \lfloor(\ell + 1) / 2\rfloor + 1$.
\end{proofof}

\subsection{Partition trees}
\label{subsection:partitionTrees}

We introduce the notion of \emph{partition trees}, as a fundamental tool for subgraph listing in the \emph{deterministic} \clique model. Partition trees are a deterministic load-balancing mechanism that evenly divides the work of checking whether any copies of a subgraph are present. In prior work, randomized load-balancing was used for this purpose, but this incurs logarithmic factors which we cannot tolerate here.
Throughout this section, given a subgraph with $p$ nodes, we frequently refer to the value $x = n^{1/p}$. We assume that $x$ is an integer, because $p \leq \log n$ implies $x \geq 2$, and so it is possible to round $x$ to an integer without affecting the round complexity or correctness.

We start with Definition~\ref{def:partitionTreeP}, which defines a $p$-partition tree, which is a tree structure in which every node represents a partition of the graph $G$. Then, in Definition~\ref{def:partitionTreeH}, we define an $H$-partition tree, in which we require certain conditions on the number of edges between parts of a $p$-partition, based on the subgraph $H$ of $p$ nodes which we will want to list.
\begin{definition}[$p$-partition tree, Figure~\ref{fig:partitionTree}]
	\label{def:partitionTreeP}
	Let $G=(V,E)$ be a graph with $n$ nodes and $m$ edges, and let %$H$ be a graph with 
	$p \leq \log n$.  
	%nodes. 
	A \emph{$p$-partition tree} $T=T_{G,p}$ is a tree of $p$ layers (depth $p-1$), where each non-leaf node has at most $x=n^{1/p}$ children. Each node in the tree is associated with a partition of $V$ consisting of at most $x$ parts.
	
	We inductively denote all partitions associated with nodes in $T$ as follows. The partition associated with the root $r$ of $T$ is called the \emph{root partition}, and is denoted by $P_{\emptyset}$. Given a node with a partition denoted by $P_{(\ell_1,\dots,\ell_{i-1})}$, the partition associated with its $j$th child, for $0\leq j\leq x-1$, is denoted $P_{(\ell_1,\dots,\ell_{i-1},j)}$. 
	
	\sloppypar{The at most $x$ parts of each partition $P_{(\ell_1,\dots,\ell_{i})}$ are denoted by $U_{(\ell_1,\dots,\ell_{i}), j}$, for $0\leq j\leq x-1$. For each $0 \leq j \leq x-1$, the part $U_{(\ell_1,\dots,\ell_{i-1}), \ell_i}$ is called the \emph{parent} of the part $U_{(\ell_1,\dots,\ell_{i-1},\ell_{i}), j}$, also denoted as $U_{(\ell_1,\dots,\ell_{i-1}), \ell_i}=\parent(U_{(\ell_1,\dots,\ell_{i-1},\ell_{i}), j})$.}
\end{definition}
\begin{figure}[h]
	\begin{center}
		\includegraphics[trim = 2cm 9.5cm 8cm 2.5cm, clip, scale=0.4]{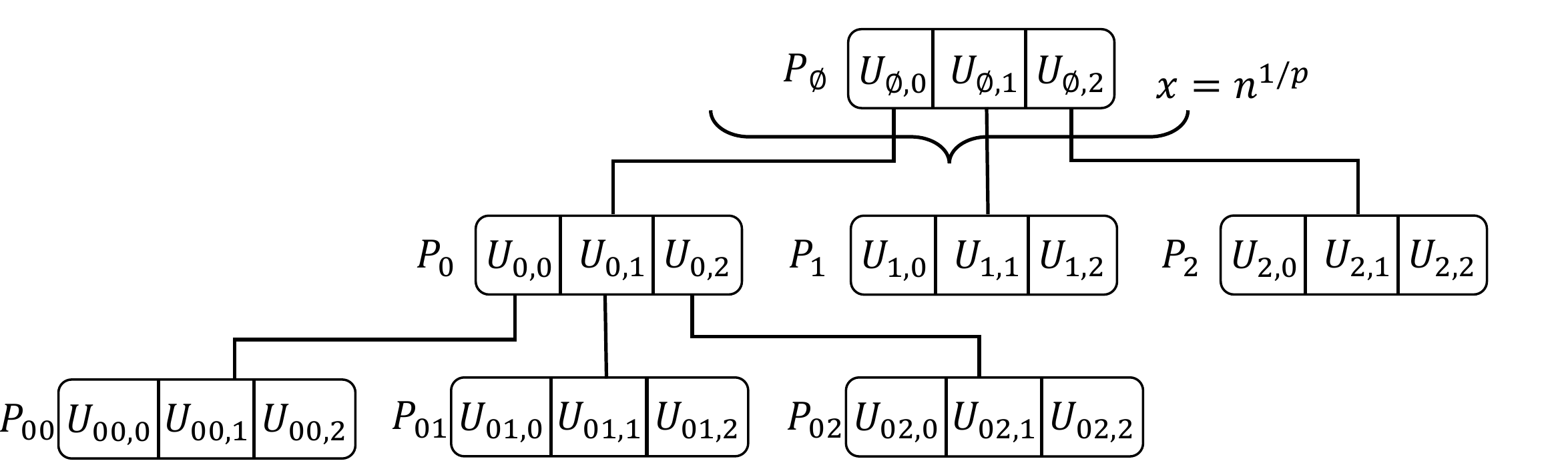}\vspace{-2mm} %Francois' settings
	\end{center}
	\caption{\label{fig:partitionTree} A partial illustration of a partition tree with $p,x=3$.}
\end{figure}

\begin{definition}[$H$-partition tree]
	\label{def:partitionTreeH}
	Let $G=(V,E)$ be a graph with $n$ nodes and $m$ edges, and let $H$ be a graph with $p \leq \log n$ nodes, $\{z_0,\dots, z_{p-1}\}$, and denote $d_i = |\{\{z_i,z_t\} \in E_H\mid t < i\}|$ for each $0\leq i \leq p-1$, $x=n^{1/p}$ and $\tilde{m}=\max\{m,nx\}$. A \emph{$H$-partition tree} $T=T_{G,H}$ is a $p$-partition tree with the following additional constraints, for some constants $c_1,c_2$. 
.
	\begin{enumerate}
		\item\label{EdgesUV} for every part $U=U_{(\ell_1,\dots,\ell_{i-1},\ell_{i}), j}$, it holds that  $|E(U, V)| \leq c_1 m/x +n$, and 
		\item\label{EdgesUParent} for every part $U_i=U_{(\ell_1,\dots,\ell_{i-1},\ell_{i}), j}$, and all of its ancestor parts $U_t = \parent(U_{t+1})$ for $t = i-1,\dots 0$, it holds that  $\sum_{t<i, \{z_i,z_t\} \in E_H}{|E(U, U_t)|} \leq c_2 d_i \tilde{m}/x^2 +n$,
	\end{enumerate}
\end{definition}

Notice that in Definition~\ref{def:partitionTreeH}, we define $\tilde{m}$ as an upper bound on $m$, the number of edges in the input graph. This is done as if the graph is \emph{too} sparse, we use a slightly higher bound on the number of edges in order to make decisions regarding the constraints on the partitions. We note that $\tilde{m}$ is purely a technicality --- we do not \emph{require} that there be at least this many edges in the graph.
In the following two theorems: we show that we can construct an $H$-partition tree and use it to efficiently perform $H$-listing. 

\newcommand{\TheoremPartitionTree}
{
	Let $G=(V,E)$ be a graph with $n$ nodes, and let $H$ be a graph with $p \leq \log n$ nodes. There exists a deterministic \clique algorithm that completes in $O(1)$ rounds and constructs an $H$-partition tree $T$, such that $T$ is known to all nodes of $G$ --- that is, all nodes know all the partitions making up $T$.
}
\begin{theorem}
	\label{theorem:partitionTree}
	\TheoremPartitionTree
\end{theorem}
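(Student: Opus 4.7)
The plan is to build the partition tree top-down, with a deterministic balanced partitioning at each tree-node, and to fit all the communication into $O(1)$ rounds of the \clique. As setup, every node broadcasts its degree via Lenzen's routing in $O(1)$ rounds, so the degree sequence is globally known; each node then locally sorts the vertices in decreasing order of degree and deals them round-robin into the $x=n^{1/p}$ parts of $P_\emptyset$. Since the maximum degree is at most $n$, each root part's total degree is within $n$ of $2m/x$, giving Constraint~\ref{EdgesUV} at the root with $c_1=2$; every deeper part inherits Constraint~\ref{EdgesUV} as a subset of some root part.

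For a tree-node at depth $i\ge 1$ I refine its parent $U_{i-1}$ into $x$ sub-parts by a weighted round-robin: define
\[
w_i(v)=\sum_{t<i,\;\{z_i,z_t\}\in E_H}|N(v)\cap U_t|\qquad(v\in U_{i-1}),
\]
sort $U_{i-1}$ by $w_i$ decreasingly, and deal round-robin into the $x$ sub-parts. A standard round-robin bound then gives
\[
\sum_{t<i,\;\{z_i,z_t\}\in E_H}|E(U_i,U_t)|=\sum_{v\in U_i}w_i(v)\le\frac{1}{x}\sum_{v\in U_{i-1}}w_i(v)+\max_v w_i(v),
\]
and using Constraint~\ref{EdgesUV} on $U_{i-1}$ together with $\tilde m\ge nx$ bounds the averaged term by $O(d_i\tilde m/x^2)$, which matches the rate of Constraint~\ref{EdgesUParent}.

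The main scheduling hurdle is that a naive layer-by-layer implementation takes $O(p)=O(\log n)$ rounds because the weights $w_i$ depend on the partitions at layers $0,\ldots,i-1$. My plan to collapse this to $O(1)$ rounds is to pipeline the weight aggregation into a single Lenzen-routing batch after the root partition is fixed: each vertex $v$ sends $O(p)$ short messages (one contribution per layer) to the designated handlers, each handler receives $O(n)$ messages and runs its sort locally, and the $O(x^{p-1})=O(n^{(p-1)/p})$ resulting partitions --- $O(n\log x)$ bits each --- are broadcast together in $O(1)$ rounds of \clique bandwidth. The delicate point is that handlers at deeper layers need the ancestor partitions in order to identify the vertices in their parent parts; this is resolved by having every vertex pack, into its outgoing messages, enough local edge-count information for any handler to reproduce the deterministic sorts at the ancestor layers that affect it.

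I expect the technically delicate step in the analysis to be controlling the slack $\max_v w_i(v)$: in the worst case it can be as large as $d_i n$, which does not fit into the additive $+n$ of Constraint~\ref{EdgesUParent}. My plan is a two-regime argument based on $\tilde m$: when $\tilde m\ge nx^2$ the $d_i n$ slack is dominated by the $d_i\tilde m/x^2$ rate and is absorbed there; in the complementary sparse regime I would replace the single weighted round-robin by a multi-dimensional balancing that controls each of the $d_i$ summands $|E(U_i,U_t)|$ separately to within additive $O(n/d_i)$, summing to the required $n$.
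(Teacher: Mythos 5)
Your plan misses the structural trick that makes the paper's $O(1)$-round construction possible, and it also mis-reads the definition of the tree in a way that affects the argument.

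First, in Definition~\ref{def:partitionTreeP} each tree-node is associated with a partition of the \emph{entire} vertex set $V$, not of the parent part. Your plan ``refines its parent $U_{i-1}$ into $x$ sub-parts'', which is a nested decomposition, not the structure the theorem asks for. This matters because the ancestor parts $U_t$ appearing in Condition~\ref{EdgesUParent} are \emph{not} nested supersets of $U_i$; they are parts of different full partitions of $V$ selected along the tree path, so a refinement of the parent part cannot even be evaluated against them directly.

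Second, and more fundamentally, your pipelining idea does not break the layer-to-layer dependence. To run your weighted round-robin at depth $i$ you need $w_i(v)=\sum_{t<i,\{z_i,z_t\}\in E_H}|N(v)\cap U_t|$, and $U_t$ is determined by the sorted order used at depth $t$, which in turn needs $w_t$, which needs $U_{t'}$ for $t'<t$, and so on. You propose that each vertex ``packs enough local edge-count information for any handler to reproduce the deterministic sorts at the ancestor layers,'' but what the handler needs is $|N(v)\cap U_t|$ for sets $U_t$ that nobody knows yet --- a vertex cannot precompute its degree into a set it cannot name. The number of distinct ancestor chains is $\Theta(x^{p-2})$, so ``packing enough information'' essentially amounts to shipping adjacency rows, which breaks the $O(n)$-per-target bandwidth budget; and even if bandwidth were no object, the computation is inherently sequential in $i$ in your formulation. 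The paper avoids this entirely by a different idea: it only ever imposes the Condition-\ref{EdgesUParent}-style bound against parts of the fixed root partition $R$, builds in parallel one auxiliary partition $M_S$ of $V$ for every subset $S$ of $\le p-1$ root parts (there are $\le (x/2+1)^{p-1}\le n/x$ of them, so disjoint sets of $x$ builders each), forces each $M_S$ to be a refinement of $R$, and then observes that because every tree-ancestor part sits inside some root part, the bound against root parts dominates the required bound against actual ancestors. After $R$ is broadcast, all the $M_S$ can be built simultaneously in $O(1)$ rounds, and the tree is then assembled by pure local computation. This refinement-of-$R$ idea is precisely what eliminates the recursion, and it is absent from your proposal.

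Finally, your handling of the slack $\max_v w_i(v)$ under round-robin is also not what the paper does: the paper uses a greedy threshold sweep (add vertices to the current part until a counter would overflow, then start a new part), which directly caps each counter at threshold $+\,$ one node's contribution, and crucially the per-node contribution there is bounded by $\deg(v)\le n$ because the relevant reference sets (root parts) are pairwise disjoint. In your formulation the $U_t$'s need not be disjoint, so a single vertex can contribute up to $d_i\cdot\deg(v)$ to $w_i(v)$, and the ``two-regime'' patch you sketch (absorbing the slack when $\tilde m \ge nx^2$, and an unspecified ``multi-dimensional balancing'' otherwise) is not worked out and not clearly correct. In short, the proposal has the right flavor of ideas (deterministic balanced partitioning, global knowledge of the root partition, Lenzen routing) but is missing the refinement-of-$R$ / fixed-reference-partition device that is the actual content of this theorem.
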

And second, that given an $H$-partition tree, we can list all instances of $H$ in $G$.
\newcommand{\TheoremListWithTree}
{
	Let $G=(V,E)$ be a graph with $n$ nodes, let $H$ be a graph with $p \leq \log n$ nodes and $k$ edges, and denote $x=n^{1/p}$ and $\tilde{m}=\max\{m,nx\}$. 	
	There exists a deterministic \clique algorithm that completes in $O(\frac{k\tilde{m}}{n^{1+2/p}} + p)$ rounds and lists all instances of $H$ in $G$, given an $H$-partition tree $T$ that is known to all nodes.
}
\begin{theorem}
	\label{theorem:listWithTree}
	\TheoremListWithTree
\end{theorem}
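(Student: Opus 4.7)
The plan is to put each leaf-part of $T$ in charge of a distinct network node, which will locally gather just enough edges to enumerate every copy of $H$ that is ``routed'' through the corresponding root-to-leaf sequence of parts. Since $T$ has $p$ layers and each internal node has at most $x=n^{1/p}$ children, there are at most $x^p=n$ leaf-parts, so we can fix a hard-coded injection $L\mapsto v_L$ from leaf-parts to network nodes. A leaf-part indexed by $(\ell_1,\dots,\ell_p)$ induces the canonical assignment $\phi_L(z_i)\coloneq U_{(\ell_1,\dots,\ell_i),\ell_{i+1}}$ for $i=0,\dots,p-1$, placing $z_i$ in a part at depth $i$; the sequence $\phi_L(z_0),\phi_L(z_1),\dots,\phi_L(z_{p-1})$ is exactly a root-to-leaf chain of parts in $T$, with $\phi_L(z_{j-1})=\parent(\phi_L(z_j))$. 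Since every vertex of $G$ lies in exactly one part of each partition of $T$, every copy of $H$ is consistent with the assignment $\phi_L$ of a \emph{unique} leaf-part, so the $v_L$'s will collectively list every copy of $H$ exactly once.

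To enumerate the copies consistent with $\phi_L$ locally, $v_L$ needs to know $E(\phi_L(z_i),\phi_L(z_j))$ for every edge $\{z_i,z_j\}\in E_H$. Summing constraint~\ref{EdgesUParent} of Definition~\ref{def:partitionTreeH} along the ancestor chain of $\phi_L(z_j)$ at each $j=0,\dots,p-1$, and using $\sum_j d_j=k$, we bound
\[
\sum_{\{z_i,z_j\}\in E_H}|E(\phi_L(z_i),\phi_L(z_j))| \leq c_2k\tilde{m}/x^2+pn = O\bigl(n\cdot(k\tilde{m}/n^{1+2/p}+p)\bigr),
\]
so writing $X\coloneq k\tilde{m}/n^{1+2/p}+p$, every $v_L$ is the target of $O(nX)$ edge messages. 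Each sender of an edge $(u,w)$ knows $T$ and can therefore locally identify all pairs $(L,\{z_i,z_j\})$ for which the edge is needed, making the routing instance well-defined.

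Finally, we invoke the load-balanced routing of Corollary~\ref{cor:load_balancing_routing_deterministic} to execute this routing in $O(X)=O(k\tilde{m}/n^{1+2/p}+p)$ rounds, after which every $v_L$ enumerates and outputs its copies by unbounded local computation, adding no further rounds. The hard part is establishing the per-receiver bound $O(nX)$: this is precisely what constraint~\ref{EdgesUParent} of the $H$-partition tree is engineered to provide --- the chain structure $\phi_L(z_{j-1})=\parent(\phi_L(z_j))$ is what permits the constraint to be summed telescopically over $j$, and the $+n$ slack per level in the constraint is exactly what absorbs into the additive $+p$ term in the stated complexity.
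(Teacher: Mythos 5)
Your proposal is correct and follows essentially the same approach as the paper: assign each leaf-part a distinct node responsible for the corresponding root-to-leaf chain of parts, sum Condition~\ref{EdgesUParent} along that chain using $\sum_i d_i = k$ to bound the per-node receive load by $O(n(k\tilde{m}/n^{1+2/p}+p))$, and invoke Corollary~\ref{cor:load_balancing_routing_deterministic}. The only minor slip is the claim that each copy of $H$ is listed ``exactly once'' --- a copy with nontrivial automorphisms may be listed by several chains --- but this does not affect correctness for the listing problem.
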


Thus, Theorems~\ref{theorem:partitionTree} and~\ref{theorem:listWithTree}, directly imply Theorem~\ref{theorem:generalSubgraphTheorem}.

\begin{proofof}{Theorem~\ref{theorem:partitionTree}}
	In order to show this proof, we construct a set of preliminary partitions in $O(1)$ rounds, and maintain that it is possible to construct the entire partition tree using only this set of partitions. By ensuring that these partitions are globally known, each node can compute the entire tree locally.

	%	\paragraph{Constructing a preliminary set of partitions}
	\subparagraph*{Constructing a preliminary set of partitions.}
	We construct a main partition, $R$, with at most $x/2$ parts, and then several more partitions, \emph{of the entire graph}, which are refinements of $R$. Specifically, for every set of $1 \leq \ell \leq p-1$ parts, denoted $\{Q_{j_0}, \dots, Q_{j_{\ell - 1}}\}$, from $R$, we create a specific partition denoted as $M_{\{j_0, \dots, j_{\ell - 1}\}}$, which has at most $x$ parts $N_{\{j_0, \dots, j_{\ell - 1}\}, k}$ for $0 \leq k \leq x-1$. Notice that this is a total of at most $(x/2 + 1)^{p-1} \leq x^{p-1} = n/x$ different partitions. We emphasize that each $M_{\{j_0, \dots, j_{\ell - 1}\}}$ is a \emph{partition of the entire graph}, and not of $\{Q_{j_0}, \dots, Q_{j_{\ell - 1}}\}$.
	
	For each partition, we consider a set of $x$ nodes that are called \emph{the builder nodes}. We assign some $x$ nodes to build the main partition, denoted by $B_\emptyset$, and then we assign sets of builder nodes to each additional partition in a mutually disjoint manner. That is, denoting by $B_{\{j_0, \dots, j_{\ell - 1}\}}$ the set of builder nodes for a partition $M_{\{j_0, \dots, j_{\ell - 1}\}}$, gives that $B_{\{j_0, \dots, j_{\ell - 1}\}}\cap B_{\{j'_0, \dots, j'_{\ell - 1}\}} = \emptyset$ for every two such additional partitions. Due to the fact that there are at most $n/x$ additional partitions, it is clear that this assignment is possible. The builder nodes $B_\emptyset$ initially construct the main partition in $O(1)$ rounds, and then the additional partitions are constructed concurrently by their respective builder nodes in $O(1)$ rounds.
	
	For the main partition $R$, the only condition that we maintain is Condition~\ref{EdgesUV}, which requires that each of its  parts $Q$ satisfies $|E(Q, V)| \leq c_1 m/x +n$. To ensure this, each node $v$ sends its degree to all builder nodes in $B_{\emptyset}$. Then, the builder nodes go over the nodes in an arbitrary order (known to all nodes) and add them to parts of the (initially empty) partition, as follows. The first processed node $v$ is added to a part $Q_{0}$, and a counter is set to $deg(v)$. Then, every following node $v$ is added to $Q_{0}$ and the counter is increased by $deg(v)$, as long as it does not exceed $c_1m/x + n$. Once adding $v$ to a part would make the counter exceed the threshold, the next part  $Q_{1}$ is started, initialized to contain $v$ and its counter is $deg(v)$. We continue in this manner until all nodes are processed. 
	
	Notice that this creates at most $x/2$ parts in the partition by choosing $c_1 \geq 4$, since each part has at least $c_1 m/x$ edges out of $2m$ (counting each edge twice for both of its endpoints). Finally, note that the builder nodes in $B_{\emptyset}$ can inform all other nodes about the partition $R$ within $O(1)$ rounds, since there are at most $x/2$ parts that can each be described by their first and last nodes in the globally known order, and the description of each part can be broadcast to all nodes by a different builder node in $B_{\emptyset}$.
	
	When constructing the partition $M_{\{j_0, \dots, j_{\ell - 1}\}}$, we maintain three conditions. Primarily, we maintain Condition~\ref{EdgesUV}; that is, for every part $N=N_{\{j_0, \dots, j_{\ell - 1}\}, k}$ it holds that  $|E(N, V)| \leq c_1 m/x +n$. Furthermore, similarly to Condition~\ref{EdgesUParent}, we ensure that for each part, $N=N_{\{j_0, \dots, j_{\ell - 1}\}, k}$, $\sum_{0 \leq i < \ell}{|E(N, Q_{j_i})|} \leq c_2 \ell \tilde{m}/x^2 + n$. Lastly, we ensure that $M_{\{j_0, \dots, j_{\ell - 1}\}}$ is a refinement of $R$. 
	
	Each $M = M_{\{j_0, \dots, j_{\ell - 1}\}}$ is constructed in a similar manner to the way in which $R$ was constructed - every node $v$ in the graph sends some $O(1)$ messages to each builder node in $B = B_{\{j_0, \dots, j_{\ell - 1}\}}$, the builder nodes locally compute the partition $M$, and then each builder node broadcasts to the entire graph some part of $M$ in $O(1)$ rounds. To begin construction of $M$, every node $v$ sends the values $deg(v)$, $\sum_{0\leq t< \ell}{deg_{j_{t}}(v)}$ to all nodes in $B$, where $deg_{j_t}(v)$ is the number of neighbors of $v$ in $Q_{j_t}$. Similarly to the construction of the root partition, the builder nodes in $B$ go over the nodes in a known order and add them one by one to parts of the (initially empty) partition. In order to promise Condition~\ref{EdgesUV}, that $|E(N, V)| \leq c_1 m/x +n$ for every part $N$ that is constructed, a counter is maintained that accumulates the degrees $deg(v)$ of every node $v$ that is added to the current part. If adding a node $v$ would make this counter exceed the threshold, then a new part is started. To promise $\sum_{0 \leq i < \ell}{|E(N, Q_{j_i})|} \leq c_2 \ell \tilde{m}/x^2 + n$, for each part that is being constructed, a second counter is maintained. This counter accumulates $\sum_{0\leq t< \ell}{deg_{j_{t}}(v)}$ , for every $v$ that is added to the current part. If adding a node $v$ would make this counter exceed the threshold $c_2 \ell \tilde{m}/x^2 +n$, then a new part is started. Once a new part is started because adding a node $v$ would make one of the counters of the previous part exceed its threshold, the new part is initialized to contain $v$, and its counters are initialized to $deg(v)$ and $\sum_{0\leq t< \ell}{deg_{j_{t}}(v)}$ , respectively. We continue in this manner until all nodes are processed. Finally, in order to ensure $M$ is a refinement of $R$, we split every part in $M$ to parts completely contained in parts of $R$.

	We claim that this creates at most $x$ parts in each $M$ by choosing $c_1 = 8$ and $c_2 = 32$. Starting a new part can only happen due to one of the two counters exceeding its threshold, or due to a split of a part in order to ensure $M$ is a refinement of $R$. We first bound the number of parts created only according to the counters, and then proceed to the parts added due to splitting the parts of $M$ according to the parts of $R$. For the first counter, as in the analysis of the root partition, exceeding the threshold means that the part already has at least $c_1 m/x$ edges that touch it out of $2m$ possible edges counted for both endpoints. Therefore the first counter can exceed the threshold no more than $2x/c_1$ times. For the second counter to exceed its threshold, we have that the part already contains $c_2 \ell \tilde{m}/x^2$ edges to the relevant parts in $R$. Each of the corresponding parts in $R$ satisfies Condition~\ref{EdgesUV}  --- has at most $c_1 m/x+n$ edges touching it altogether --- and so in total there are at most  $c_1 \ell m/x + \ell n$ edges touching the corresponding parts in $R$.
	We thus claim the second counter can exceed its threshold no more than $c_1 x /c_2$ times. To see why, note that the second counter can exceed its threshold at most a number of times which is $\frac{c_1 \ell m/x + \ell n }{c_2 \ell \tilde{m}/x^2} \leq \frac{2 c_1 \ell \tilde{m}/x}{c_2 \ell \tilde{m}/x^2} =  2 c_1 x /c_2$. The final condition, that $M$ is a refinement of $R$, can add to $M$ at most the number of parts in $R$ - that is, at most $2x/c_1$ additional parts. To see this, notice that since all the partitions are created by going over all the nodes in the graph in some predetermined order and creating a new part once some counter has exceeded its threshold, then each part in $R$ can only incur a single additional point in time at which the builder nodes have to start a new part in $M$. Therefore, when setting $c_1 = 8, c_2 = 32$,  in total each $M$ has at most $4x/c_1 + c_1 x / c_2 \leq 3x / 4 \leq x$ parts.
	
	Finally, as each $M_{\{j_0, \dots, j_{\ell - 1}\}}$ has at most $x$ parts, the builder nodes $B_{\{j_0, \dots, j_{\ell - 1}\}}$ can ensure that $M_{\{j_0, \dots, j_{\ell - 1}\}}$ is globally known in $O(1)$ rounds.
	
	%	\paragraph{Locally constructing the entire tree}
	\subparagraph*{Locally constructing the entire tree.}
	We now show that using the preliminary set of partitions, each node can locally construct the entire partition tree. 
	
	%\dtodo{Please let me know if the following paragraph is clear.}
	We set the root partition as $P_\emptyset = R$, and proceed to setting the remaining layers of the tree. We begin by setting the first layer below the root partition. Each partition in this layer needs to maintain Condition~\ref{EdgesUParent} with respect to at most one part in $P_\emptyset$. Assume we need to construct $P_{(j)}$ for some $0 \leq j \leq x - 1$. We need to ensure that all of the parts in $P_{(j)}$ have a bounded number of edges entering part $U_{\emptyset, j}$. Thus, $M_{\{j\}}$ certainly maintains all the required conditions and we can set $P_{(j)} = M_{\{j\}}$. Next, we attempt to build the $i^{th}$ layer below the root partition. In this layer, every partition created, $P_{(j_0, \dots, j_{i-1})}$, has to maintain Condition~\ref{EdgesUParent} of Theorem~\ref{theorem:partitionTree} with respect to some subset of the parts $\{ U_{(j_0, \dots, j_{k-1}), j_k} | 0 \leq k < i \}$. However, since every $M_{\{j_0, \dots, j_{\ell - 1}\}}$ is a refinement of $P_\emptyset$, then each part in $\{ U_{(j_0, \dots, j_{k-1}), j_k} | 0 \leq k < i \}$ can be replaced by some part in $P_\emptyset$ which contains it, and thus if $P_{(j_0, \dots, j_{i-1})}$ maintains the required conditions w.r.t. a specific set of at most $i$ parts of $P_\emptyset$, then it would also maintain them w.r.t. $\{ U_{(j_0, \dots, j_{k-1}), j_k} | 0 \leq k < i \}$. We have already computed partitions which maintain all the required conditions with respect to any set of at most $p-1$ parts in $P_{\emptyset}$, and thus there exists a partition which we already computed in our preliminary set of partitions which can be used as $P_{(j_0, \dots, j_{i-1})}$.
\end{proofof}

\begin{proofof}{Theorem~\ref{theorem:listWithTree}}
	Denote by  $\{z_0,\dots, z_{p-1}\}$ the nodes of $H$, and denote $d_i = |\{\{z_i,z_t\} \in E_H\mid t < i\}|$ for each $0\leq i \leq p-1$. 
	
	We assign each leaf of the $H$-partition tree $T$ to $x$ different nodes. Note that there are $x^{p-1}$ leaves, which is at most $n/x$ due to our choice of $x=n^{1/p}$. We abuse the notation and denote a node in $T$ with the same notation as we use for the partition that is associated with it. Each leaf $P_{(\ell_1,\dots,\ell_{p-1})}$ is thus assigned to $x$ different nodes, and each part $U_{(\ell_1,\dots,\ell_{p-1}),j}$ in each leaf partition is assigned to a different node. For each node $v \in V$, we denote by $U_{v,p-1}$ the part of the leaf partition that it is assigned to. Then, inductively, for every $i = p-2,\dots 0$, we denote $U_{v,i} = \parent(U_{v,i+1})$. Note that for all $v \in V$ we have that $U_{v,0}$ is a part in the root partition.
	
	We now let every node $v \in V$ learn all the edges in $\bigcup_{t<i \text{ s.t. } \{z_i,z_t\} \in E_H}{E(U_{v,i}, U_{v,t})}$ and list all the instances of $H$ that it sees. We need to prove that all instances of $H$ in $G$ are indeed listed by this approach, and that learning the required edges by all nodes can be done in $O(\frac{k\tilde{m}}{n^{1+2/p}} + p)$ rounds.
	
	We first show that indeed all instances of $H$ are listed. Let $H'$ be an instance of $H$ in $G$, with nodes $\{z'_0,\dots, z'_{p-1}\}$, such that $\{z'_i,z'_t\}$ is an edge in $H'$ if and only if $\{z_i,z_t\}$ is an edge in $H$. Let $U^0$ be the part of the root partition that contains $z'_0$. Denote by $j_0$, where $0\leq j_0 \leq x-1$, the index of $U^0$ in the root partition, and let $P^1 = P_{(j_0)}$. Let $U^1$ be the part of $P^1$ that contains $z'_1$, and denote by $j_1$ the index of $U^1$ in $P^1$. Continue inductively, for $i = 2,\dots,p-1$: Let $P^{i} = P_{(j_0,j_1\dots,j_{i-1})}$. Let $U^{i}$ be the part of $P^{i}$ that contains $z'_{i}$, and denote by $j_{i}$ the index of $U^{i}$ in $P^{i}$. We now have a sequence of parts $U^{p-1}, U^{p-2},\dots,U^{0}$, and notice that for every $i$, $0\leq i \leq p-2$, we have that $U^{i} = \parent(U^{i+1})$. This implies that for the node $v \in V$ that is assigned to part $j_{p-1}$ of the leaf partition $P^{p-1}$, it holds that $U_{v,i}=U^{i}$ for every $0\leq i \leq p-1$, which means that $H'$ is contain in $\bigcup_{t<i \text{ s.t. } \{z_i,z_t\} \in E_H}{E(U_{v,i}, U_{v,t})}$, and thus the node $v$ indeed lists the instance $H'$ of $H$ given by $\{z'_0,\dots,z'_{p-1}\}$, as needed.
	
	It remains to bound the round complexity of having each node $v \in V$ learn about all of the edges in $\bigcup_{t<i \text{ s.t. } \{z_i,z_t\} \in E_H}{E(U_{v,i}, U_{v,t})}$. Since the $H$-partition tree $T$ satisfies Condition~\ref{EdgesUV} of Theorem~\ref{theorem:partitionTree}, we have that the number of edges that each node needs to learn is bounded by 
	\begin{eqnarray*}
		\label{eqn:learn}
		\bigcup_{t<i \text{ s.t. } \{z_i,z_t\} \in E_H}{E(U_{v,i}, U_{v,t})} &\leq& \sum_i{\sum_{t<i, \{z_i,z_t\} \in E_H}{|E(U, U_t)|}}\\
		&\leq & \sum_{i}{c_2 d_i \tilde{m}/x^2 + n}\\	
		&\leq & c_2(\sum_{i}{d_i}) \tilde{m}/x^2 + pn\\	
		&\leq & O(\frac{k\tilde{m}}{n^{2/p}} + pn). 
	\end{eqnarray*}
	Thus, by Corollary~\ref{cor:load_balancing_routing_deterministic}, all information can be learned in $O(\frac{k\tilde{m}}{n^{1+2/p}} + p)$ rounds, and so the algorithm completes in $O(\frac{k\tilde{m}}{n^{1+2/p}} + p)$ rounds, as claimed.
\end{proofof}

\section{Detecting Even Cycles and Computing the Girth in \congest}
\label{sec:app_congest}
In this section we present our $\congest$ algorithms for finding small even cycles and for computing the girth.

\subsection{Algorithm for Detecting Small Even Cycles}
Throughout, we assume the convention that negative indices are taken to be modulo the cycle size, that is, 
if we are working with cycles of length $\ell$, then we denote $u_{-i} = u_{\ell-i}$.
Likewise, when nodes choose random colors, the colors are numbers in $[\ell] = \set{0,\ldots,\ell-1}$,
but for convenience, we sometimes write $-i$ for color $\ell - i$.

Fix $k \in \set{2,3,4}$. We show that we can find a copy of $C_{2k}$, if there is one, in $O(n^{1-1/k})$ rounds, improving on previous algorithms, which had running time $n^{1-1/\Theta(k^2)}$.

We say that a $2k$-cycle $u_0,\ldots,u_{2k-1}$ is \emph{light} if each cycle node $u_i$ has degree at most $n^{1/k}$.
Otherwise we say that the cycle is \emph{heavy}.

\subsubsection{Finding Light Cycles}

Light cycles are easily found as follows:
repeat, for $R = \Theta((2k)^{2k})$ iterations, the following steps.
\begin{enumerate}
	\item Each node $u \in V$ chooses a random color $c(u) \in [2k]$. 
	\item We start a \emph{color-BFS} to depth $k$, in the subgraph of nodes that have degree $\leq n^{1/k}$:
		each node $u$ that has color $c(u) = 0$ and $\deg(u) \leq n^{1/k}$
		sends out a BFS token carrying its ID to all its neighbors that have color 1.
		Next, nodes with color $b \in \set{-1,+1}$ and degree $\leq n^{1/k}$ forward all the BFS tokens they receive to their neighbors with color $2b$; this requires at most $n^{1/k}$ rounds.
		We proceed similarly: in the $i$-th step of the BFS, nodes with degree $\leq n^{1/k}$ and color $b \cdot i$, where $b \in \set{-1,+1}$,
		forward all the BFS tokens they receive to their neighbors that have color $b \cdot (i+1)$.
		This requires at most $n^{i/k}$ rounds.
		Eventually, nodes with color $b \cdot (k-1)$ for $b \in \set{-1,+1}$ and degree $\leq n^{1/k}$ forward
		their BFS tokens to nodes with color $-k = k$.
	\item If a node with color $k$ receives the same BFS token from a neighbor with color $k-1$ and a neighbor with color $k+1=-(k-1)$,
		then it rejects.
\end{enumerate}

\paragraph*{Correctness.}
First, note that the algorithm never rejects unless the graph contains a copy of $C_{2k}$: 
for each $i \in \set{1,\ldots,k-1}$,
the BFS initiated by a node $u$ with $c(u) = 0$ can only reach a node $v$
with $c(v) \in \set{-i,+i}$ if there is a path of length $i$ from $u$ to $v$, whose nodes are colored $0,1,\ldots,i$ (if $c(v) = i$) or $0,-1,\ldots,-i$ (if $c(v) = -i$).
Therefore, a node $v$ with color $k$ rejects only if there is some node $u$ that has two disjoint length-$k$ paths to $v$, or in other words, node $v$ participates in a $2k$-cycle.

Next, suppose that the graph contains a light $2k$-cycle, $u_0,\ldots,u_{2k-1}$.
In a given iteration, with probability $1/(2k)^{2k}$, each cycle node $u_i$ chooses $c({u_i}) = i$.
Since the cycle is light, the number of BFS tokens that reach nodes $u_i, u_{2k-i}$ where $i \in \set{1,\ldots,k-1}$ is at most $n^{i/k}$:
since each cycle node has degree at most $n^{1/k}$,
there are at most $n^{i/k}$ nodes with color 0 that have a path of length $i$ to node $u_i$ or $u_{2k-i}$,
and as we said above, a given BFS token $u$ can only reach a node $v$ with color $i$ or $-i$ if there is a path of length $i$ from $u$ to $v$ (with ascending or descending colors).
This means that no cycle node is forced to stop participating in the middle of the BFS because it has too many tokens to forward.
The BFS token of node $u_0$ is forwarded in the color-ascending direction by $u_1,\ldots,u_{k-1}$ and in the color-descending direction by $u_{-1},\ldots,u_{-(k-1)}$ until it reaches node $u_k$, which rejects.

Since a given iteration succeeds with probability $1/(2k)^{2k}$,\footnote{Actually, the success probability is $1/(2k)^{2k-1}$, because we do not care about cyclic shifts of the colors on the cycle; but the next step of the algorithm does depend on getting the correct shift, so for simplicity we stick with the same number of iterations here as well.}
after $R = \Theta((2k)^{2k})$ iterations, we succeed with probability $2/3$.

\subsubsection{Finding Heavy Cycles}
It remains to find cycles where at least one node has degree greater than $n^{1/k}$.
To find such cycles, we exploit the fact that if we choose a random node in the graph, we have noticeable probability ($1/n^{1-1/k}$) of hitting a \emph{neighbor} of the cycle.
We show that with the exception of a small number of ``bad'' neighbors, if we find a neighbor of the cycle, we can find the cycle itself.

We first describe a ``meta-algorithm'' $\mathcal{A}$ that cannot quite be implemented in \congest,
analyze it, and then give an implementation $\mathcal{A}'$ in \congest;
the implementation is such that there is a high-probability event $\mathcal{U}$, conditioned on which $\mathcal{A}$ and $\mathcal{A}'$ are in some sense equivalent.

The meta-algorithm $\mathcal{A}'$ proceeds as follows:
let $T_k : \set{1,\ldots,2k-1} \rightarrow \nat$ be a function.
We repeat the following steps for $R'=\Theta(n^{1-1/k})$ iterations:
\begin{enumerate}
	\item We choose one uniformly random node $s \in V$.
		\label{step:choose_s}
	\item We carry out $R$ color-coded BFSs starting from $s$,
		each time using fresh independently-chosen colors for all the nodes in the graph.
		The BFS proceeds to depth $2k$, and it is only allowed to cross an edge $(u,v)$ if $c(v) = (c(u) + 1) \bmod 2k$.
		If one of the color-BFS instances finds a $2k$-cycle, we reject.
	\item Each node $u$ chooses a random color $c(u) \in [2k]$.
	\item We start a color-BFS from each neighbor of $s$ that has color $0$, in parallel.
		In step $i = 1,\ldots,k-1$ of the BFS, nodes colored $i$ or $-i$ (resp.) check if they have received more than $T_k(|i|)$ BFS tokens;
		if they have at most $T_k(|i|)$ tokens, they forward all of them to all neighbors colored $i+1$ or $-(i+1)$ (resp.),
		and if they have more than $T_k(|i|)$ tokens, they send nothing.
	\item If some node colored $k$ receives the same BFS token from neighbors colored $k-1$ and $k+1$,
		then it rejects.
\end{enumerate}
If after $R'$ iterations no node has rejected, then all nodes accept.
Note that the running time of the meta-algorithm is $R \cdot R' = \Theta(n^{1-1/k})$ (treating $k$ as a constant).

\subsubsection{High Level Overview of the Analysis}
When we search for heavy cycles, we sample a uniformly random node $s$, check if it is part of a $2k$-cycle,
and if not, we start a color-coded BFS from each 0-colored neighbor of $s$.
There can be many such neighbors, potentially leading to congestion;
however, we show that if the cycle is colored correctly,
it suffices for each node with color $i \in [2k]$ to forward a constant number $T_k(i)$ of BFS tokens.

Our main concern is that the node $s$ that we sampled is ``bad'', in the sense that it has many short node-disjoint paths to some cycle node $u_i$.
If we sample such a ``bad neighbor'' of $u_0$, its 0-colored neighbors could initiate many BFS instances, which would then reach $u_i$ and cause congestion.
See Figure~\ref{fig:bad_neighbor} for an illustration.

\begin{figure}[h]
     \centering
     \hfill
     \begin{subfigure}[t]{0.4\textwidth}
         \centering
	 \includegraphics[width=0.4\textwidth]{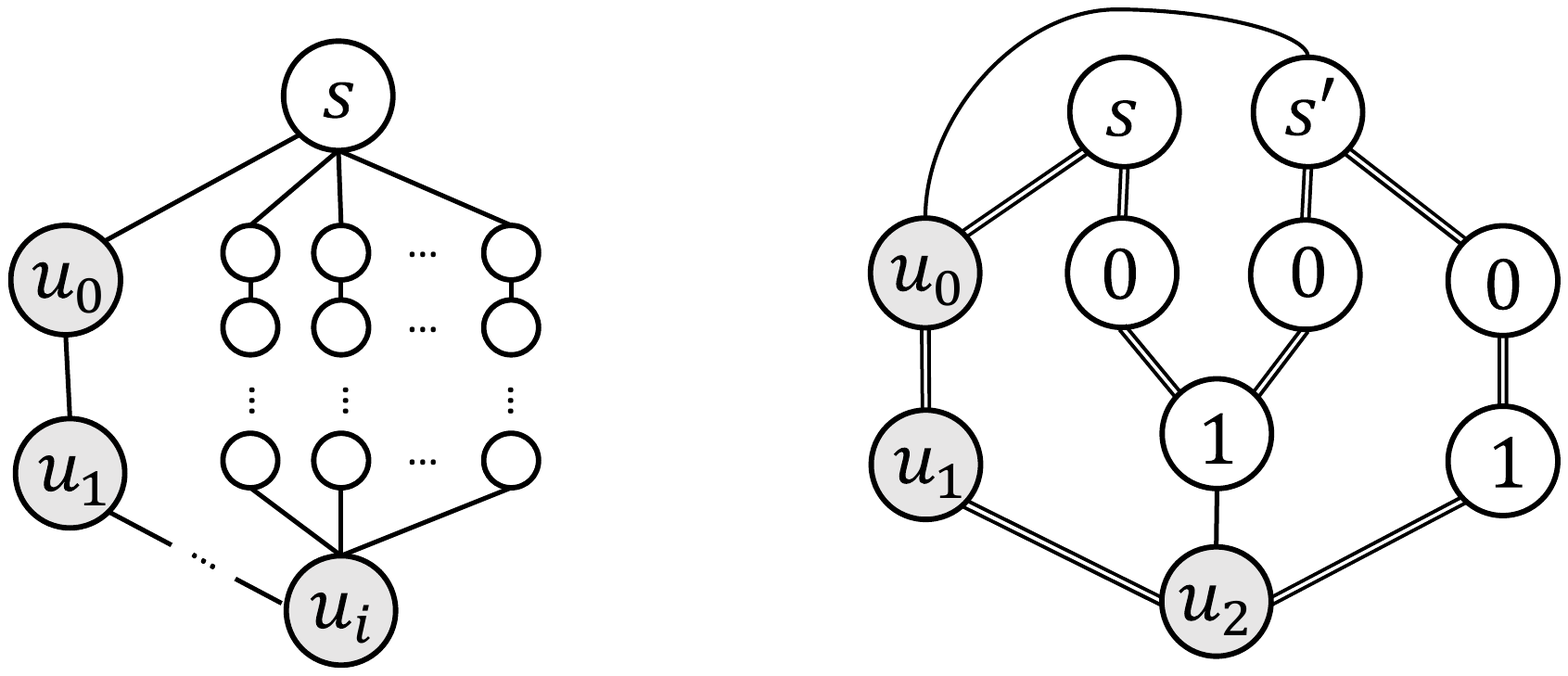}
	 \caption{A ``bad neighbor'' $s \in N(u_0)$.}
	 \label{fig:bad_neighbor}
     \end{subfigure}
     \hfill
     \begin{subfigure}[t]{0.45\textwidth}
         \centering
	 \includegraphics[width=0.4\textwidth]{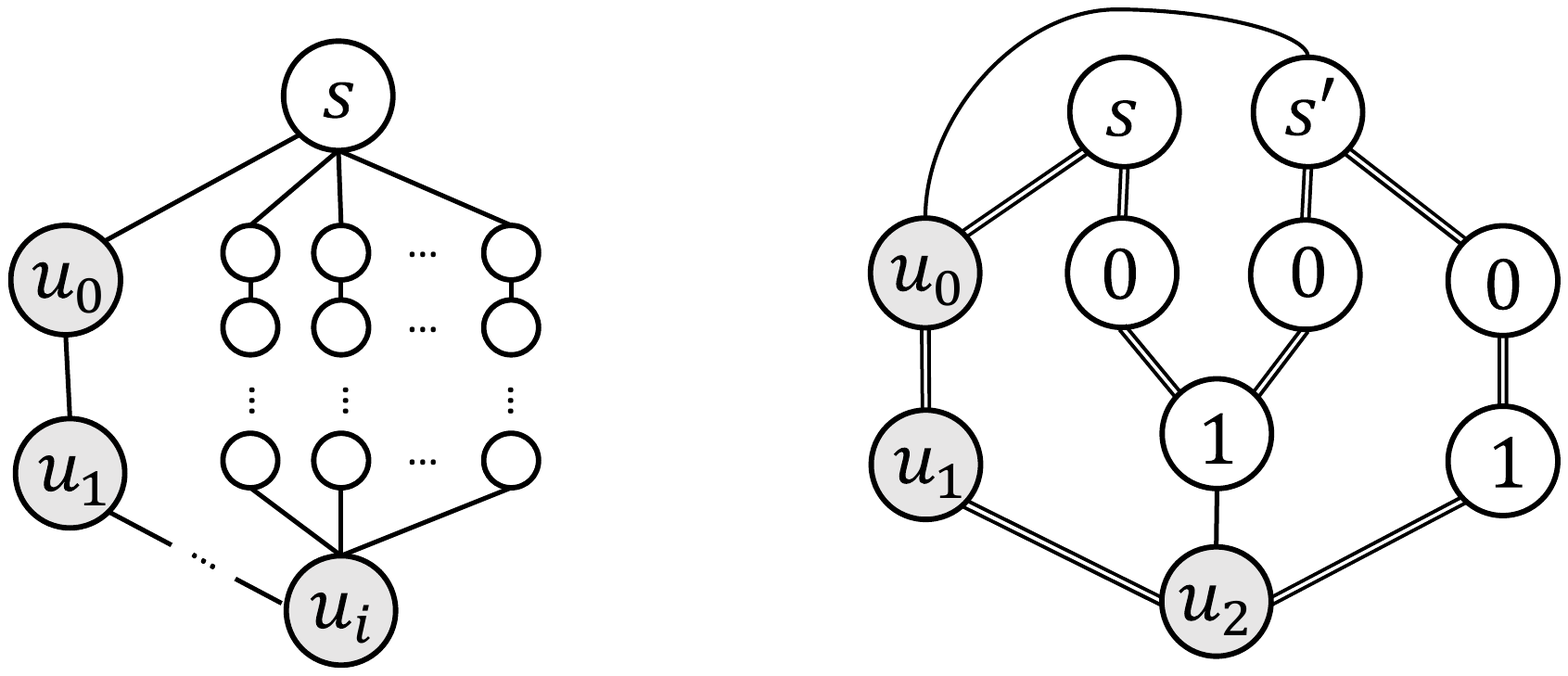}
	 \caption{``Shared paths'': the edges of the 10-cycle are indicated by double lines.}
	 \label{fig:bad_sharing}
     \end{subfigure}
     \hfill
     \caption{Illustrations for the proof sketch of the $2k$-cycle algorithm}
\end{figure}

To bound the probability that we hit a ``bad neighbor'', we first rule out any neighbor of $u_0$ that itself
participates in a $2k$-cycle.
Next, we argue that if $s$ has many node-disjoint paths to some cycle node $u_i$,
such that the path nodes are colored $0,1,\ldots,i$ (so that a BFS can be initiated by the first path node and flow across the path),
then we can charge these paths against the degree of $u_i$, as each path ends at a different neighbor of $u_i$.
Since $\deg(u_0) \geq \deg(u_i)$, this means only a small constant fraction of $u_0$'s neighbors have many such disjoint paths.
When we sample a random node, we are unlikely to hit a ``bad neighbor''.
(We are not worried about non-disjoint paths, as they do not contribute any ``new'' BFS tokens; see Lemma~\ref{lemma:disjoint_paths}).

The problem with this argument is that if different neighbors of $u_0$ \emph{share} paths to $u_i$,
we might be overcounting when we charge each path against the degree of $u_i$.
Our solution is to show that there is ``not too much'' sharing, otherwise a $2k$-cycle appears --- and since we only consider neighbors of $u_0$ that are not on a $2k$-cycle, we know that this is impossible.

In Figure~\ref{fig:bad_sharing}, we show an example of one situation that must be ruled out (among others): consider $k = 5$ (i.e., 10-cycles),
and suppose that
two distinct neighbors $s, s' \in N(u_0)$ each have at least 10 node-disjoint paths with the ``right colors'', 0-1, to $u_2$.
Suppose further that one of these paths is \emph{shared}, as shown in the figure.
In addition, node $s'$ has at least one additional path (the rightmost path in the figure),
which must exist because $s'$ has at least 10 node-disjoint paths to $u_2$ (so at least one of these paths
avoids all the other nodes shown in the figure).
We see that there is a 10-cycle involving nodes $s$ and $s'$;
since we only consider neighbors of $u_0$ that do not themselves participate in a 10-cycle, this situation cannot arise.

\subsubsection{Analysis of the Meta-Algorithm}
Since nodes reject only if they \emph{find} a copy of $C_{2k}$ (by having their BFS token return to them in $2k$ color-coded steps,
or by receiving the BFS of some node at distance $k$ through two node-disjoint paths),
if the graph contains no copy of $C_{2k}$, then all nodes accept.
We therefore focus on the case where the graph does contain a heavy copy of $C_{2k}$, and show that the meta-algorithm can find it.

\begin{lemma}
	If the graph contains a heavy $2k$-cycle, then with probability $9/10$, some node rejects.
	\label{lemma:heavy_reject}
\end{lemma}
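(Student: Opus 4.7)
The plan is to fix a heavy $2k$-cycle $C=u_0,\ldots,u_{2k-1}$ in $G$, relabel it so that $\deg(u_0)=\max_i\deg(u_i)\ge n^{1/k}$, and argue that at least one iteration of the meta-algorithm samples a ``good'' neighbor $s$ of $u_0$ and colors $C$ correctly. I will call an iteration \emph{favorable} if (i) the node $s$ chosen in Step~\ref{step:choose_s} lies in $N(u_0)$, which happens with probability $\deg(u_0)/n\ge n^{-(1-1/k)}$, and (ii) the random colors of Step~3 satisfy $c(u_i)=i$ for every $0\le i\le 2k-1$, which happens with probability $(2k)^{-2k}$. These events are independent, so taking $R'=\Theta((2k)^{2k}n^{1-1/k})$ with a sufficiently large hidden constant gives, by a standard concentration argument, that some iteration is favorable with probability at least $99/100$.

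\textbf{Good versus bad neighbors.} Within a favorable iteration I call $s$ \emph{bad} if either (a) $s$ itself lies on a $2k$-cycle in $G$, or (b) for some $i\in\{1,\ldots,k-1\}$ there are more than $T_k(i)$ internally node-disjoint length-$i$ paths from $s$ to $u_i$ (or to $u_{2k-i}$) whose internal vertices admit a consistent color labeling $0,1,\ldots,i-1$. Condition~(a) is handled by the preliminary $R=\Theta((2k)^{2k})$ color-coded BFSs launched from $s$: each detects any fixed $2k$-cycle through $s$ with probability $(2k)^{-2k}$, so by independence over the $R$ BFSs the probability of missing is at most $1/100$. For condition~(b) I will show that the number of bad $s\in N(u_0)$ is at most $\deg(u_0)/100$, so conditioned on $s\in N(u_0)$ (which is uniform over $N(u_0)$) the probability of badness is at most $1/100$. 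The intuition is that each ``bad'' path ends at a distinct neighbor of $u_i$, and $\deg(u_i)\le\deg(u_0)$ by the choice of labeling, so a pigeonhole-style count over all bad $s$ yields the bound, provided the path systems of different bad neighbors do not overlap too much.

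\textbf{Finishing and the main obstacle.} For a favorable iteration with a good $s$, the BFS initiated by the color-$0$ neighbor $u_1\in N(s)$ propagates along the cycle: at each subsequent cycle node $u_i$ with $1\le i\le k-1$, the number of arriving tokens is at most $T_k(i)$ by the failure of condition~(b), so $u_i$ does not drop them and forwards the token of $u_1$ to $u_{i+1}$. Symmetrically, $u_{-1}$'s token traverses $u_{-2},\ldots,u_{-(k-1)}$ and reaches $u_k$. Both tokens carry the identifier of $s$, so $u_k$ receives the same token from neighbors of colors $k-1$ and $k+1$ and rejects; combining the probabilities yields an overall rejection probability of at least $9/10$. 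The hard part will be the sharing analysis hinted at in Figure~\ref{fig:bad_sharing}: bounding the number of $s$ satisfying~(b) requires showing that whenever two neighbors $s,s'\in N(u_0)$ that are not themselves on a $2k$-cycle share ``too many'' internally disjoint good paths to the same $u_i$, splicing an initial segment of $s$'s path with a terminal segment of $s'$'s (or vice versa), together with the edges $su_0$ and $s'u_0$, produces a $2k$-cycle through $s$ or $s'$, contradicting~(a). Making this accounting quantitatively tight, so that $T_k(i)$ can be fixed as an absolute constant while keeping the bad-neighbor count below $\deg(u_0)/100$, will be the technical heart of the proof.
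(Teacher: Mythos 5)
Your proposal recovers the paper's high-level architecture accurately: fix a heavy cycle with $u_0$ of maximum degree, define an iteration to be favorable when $s\in N(u_0)$ and the cycle is well-colored, split $s$ into ``on a $2k$-cycle'' (caught by the preliminary color-BFSs) versus ``free'', and for a free $s$ control token congestion along the cycle by bounding the number of internally disjoint color-consistent $i$-paths from $s$ to $u_i$. The paper organizes this slightly differently (an explicit Case~1/Case~2 dichotomy on whether $u_0$ has $\geq n^{1/k}/100$ neighbors on a $2k$-cycle), but your framing is equivalent once the constants are set carefully, and you correctly identify Lemma~\ref{lemma:disjoint_paths} (tokens arriving at $u_i$ are controlled by the maximum number of node-disjoint $i$-paths) as the mechanism linking bad-neighbor counts to congestion.

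The genuine gap is that you have not proved the statement you flagged as the ``technical heart,'' and in the paper that is not a uniform accounting lemma but rather a separate, $k$-specific construction for each $k\in\{2,3,4,5\}$ occupying the bulk of the section. For $k=2$ the bad set is empty; for $k=3$ it is bounded via common neighbors of $s$ and $u_{\pm 1}$; for $k=4$ one cannot charge shared paths directly and instead must introduce the $b$-pattern subgraph and Observation~\ref{obs:dangerous}, yielding three distinct sub-cases for $B_4(u_0)$ and a final bound of only $(3/4)\deg(u_0)$; for $k=5$ a further layered argument (Lemma~\ref{lemma:C10_u1} through Corollary~\ref{cor:C10_u4}) is required. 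Your sketch only gestures at the kind of splicing that rules out ``too much sharing,'' which matches the $k=5$ argument but not the $k=4$ pattern-based reasoning. Relatedly, your claim that the number of $(b)$-bad neighbors is at most $\deg(u_0)/100$ is tighter than what the paper actually establishes for $k=4$, where the bound is roughly $(3/4)\deg(u_0)$; the paper compensates by using Case~2 to ensure that the free neighbors already constitute a $99/100$ fraction of $N(u_0)$, so even a $3/4$-fraction of bad ones leaves $\Omega(\deg(u_0))$ free good neighbors. Without either tightening the bad-neighbor bound (which would require choosing larger thresholds and re-deriving $T_k$) or adopting the paper's case split, your constants do not close. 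So while the skeleton is right, the lemma is not actually proved: the $k$-by-$k$ constructions and their correctness proofs are the substance of the paper's argument and are absent here.
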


To prove Lemma~\ref{lemma:heavy_reject}, we show that for each $k = 2,3,4,5$,
there is a choice of $T_k : \set{1,\ldots,2k-1} \rightarrow \nat$ such that one iteration of the meta-algorithm detects a heavy copy of $C_{2k}$, if there is one, with probability $1/O(n^{1-1/k})$ (treating $k$ as a constant).
Therefore, after $R = \Theta(n^{1-1/k})$ iterations, we reject with high probability.

Let $u_0,\ldots,u_{2k-1}$ be a heavy cycle, and assume that $u_0$ is a node with the largest degree in the cycle (i.e., $\deg(u_0) \geq \deg(u_i)$ for each $i \in \set{1,\ldots,2k-1}$).
In particular, since the cycle is heavy, we have $\deg(u_0) > n^{1/k}$.
We consider two cases:
\begin{enumerate}
	\item Node $u_0$ has at least $n^{1/k}/100$ neighbors that each belong to some $2k$-cycle.
		In this case, when we sample a uniformly random node $s \in V$, we have probability at least $(n^{1/k}/100)/n = 1/(100n^{1-1/k})$ that $s$ is on a $2k$-cycle; and given that $s$ is indeed on a $2k$-cycle, we will find the $2k$-cycle with probability $99/100$ after $R$ iterations of color-BFS (provided we choose a large enough constant in $R$).
		Therefore, in this case, we reject with probability $\Omega(1/n^{1-1/k})$.
	\item Node $u_0$ has at least $(99/100)n^{1/k}$ neighbors that do not belong to any $2k$-cycle.
		We consider the following event $\mathcal{E}_k$:
		\begin{enumerate}
			\item $s \in N(u_0)$, and
			\item $s$ is not on any $2k$-cycle, and
			\item $c({u_i}) = i$ for each $i \in [2k]$, and
			\item $s \not \in B_k(u_0)$, where $B_k(u_0)$ is a set of ``bad neighbors'' of node $u_0$,
				which is defined in a different way for each $k$.
			%\item Possibly further conditions, which occur with constant probability given all of them above.
		\end{enumerate}
\end{enumerate}

Next, we consider each $k = 2,3,4,5$ separately,
define $T_k$ and $B_k(u_0)$, and prove that
\begin{enumerate}
	\item The number of ``bad neighbors'' $B_k(u_0)$ that are not on any $2k$-cycle is bounded from above by $\alpha_k \cdot \deg(u_0)$,
		where $\alpha_k \in (0,99/100)$ is some constant fraction that depends only on $k$.
Therefore, $u_0$ has $\Omega(n^{1/k})$ neighbors that are not in $B_k(u_0)$ and are also not on any $2k$-cycle.
The probability of hitting such a neighbor is $\Omega(1/n^{1-1/k})$.
Independent of this event, the probability that the cycle $u_0,\ldots,u_{2k-1}$ is colored correctly is constant,
and therefore $\mathcal{E}_k$ occurs with probability $1/O(n^{1-1/k})$.
	\item 
Conditioned on $\mathcal{E}_k$,
each cycle node $u_i$ for $i \in [2k]\setminus\set{0,k}$ receives no more than $T_k(i)$ distinct BFS tokens.
This means that conditioned on $\mathcal{E}_k$, the color-BFS completes successfully, causing node $u_k$ to reject.
\end{enumerate}
Together we see that we have probability $1/O(n^{1-1/k})$ of detecting $u_0,\ldots,u_{2k-1}$ in each of the $R'$
iterations, as desired.

This part of the analysis proceeds as follows.
We say that a neighbor $s \in N(u_0)$ is \emph{free} if $s$ does not participate in a $2k$-cycle.
Let $N'(u_0)$ denote the free neighbors of $u_0$,
and let $\deg'(u_0) = |N'(u_0)|$.

After choosing a neighbor $s \in N(u_0)$, we 
check if $s$ participates in a $2k$-cycle,
and if not,
we initiate a BFS from every 0-colored neighbor of $s$.
We must show that not too many BFS tokens --- at most $T_k(i)$ --- can reach a given cycle node $u_{ b \cdot i}$ where $b \in \set{-1,+1}$ and $i \in \set{1,\ldots,k-1}$.
Thus, we want to show that the ``typical'' free neighbor $s \in N'(u_0)$
does not have many disjoint paths of length $i+1$ to $u_i$, through which BFS tokens can flow to $u_i$.

Given $b \in \set{-1,+1}, i \in \set{1,\ldots,k-1}$,
we say that a path 
$\pi = w_0, \ldots,w_{i-1}$ is an \emph{$(i,b)$-path of $s$} if 
\begin{enumerate}
	\item $w_0 \in N(s)$ and $w_{i-1} \in N(u_{b \cdot i})$,
	\item The path has ``the right colors'' so that node $w_0$ initiates a BFS that flows across the path and reaches $u_{b \cdot i}$:
		that is, $c(w_j) = b \cdot j$ for each $j = 0,\ldots,i-1$.
	\item The path is node-disjoint from the prefix $u_0, u_b, \ldots, u_{b\cdot (i-1)}$ of the cycle.
\end{enumerate}

In the sequel, to simplify the presentation, we consider $b = 1$; the case $b = -1$ is symmetric.
We simplify our notation by writing ``$i$-path'' instead of ``$(1,i)$-path''.

Our goal is to show that a large fraction of free neighbors $s \in N'(u_0)$ have only a small number of node-disjoint $i$-paths, for each $i = 0,\ldots,k-1$,
as this ensures that congestion is well-controlled:
\begin{lemma}
	Suppose we have sampled a neighbor $s \in N'(u_0)$ which has no more than $p$ node-disjoint $i$-paths.
Then the number of BFS tokens that arrive at cycle node $u_{i}$ is bounded by $(p+1) \left[ \sum_{j = 1}^{i-1} T_k(j) \right]$.
	\label{lemma:disjoint_paths}
\end{lemma}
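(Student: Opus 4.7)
The plan is to partition the BFS tokens that reach $u_i$ according to the structure of the colored path that carried them, and to bound each part separately using a combination of the forwarding rule and a Menger-type argument.

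First I would observe that any token that reaches $u_i$ must travel along a properly colored path $w_0, w_1, \ldots, w_{i-1}, u_i$ of length $i$, with $w_0 \in N(s)$ having color $0$ (so that $w_0$ was an initiator) and $c(w_j) = j$ for each $j < i$. Split the set of such token-carrying paths into two classes: those whose intermediate nodes $w_1, \ldots, w_{i-1}$ intersect the cycle prefix $u_1, \ldots, u_{i-1}$, and those that avoid the prefix entirely. For the first class, a token whose path first meets the cycle prefix at $u_j$ must have been forwarded by $u_j$; since $u_j$ has color $j$, the forwarding rule guarantees that at most $T_k(j)$ distinct tokens leave $u_j$ along the cycle (and the remaining cycle segment $u_j \to u_{j+1} \to \cdots \to u_i$ can only thin this count further). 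Summing over $j \in \{1, \ldots, i-1\}$ yields a contribution of at most $\sum_{j=1}^{i-1} T_k(j)$ on-cycle tokens.

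For the off-cycle class, the middle segment $w_0, w_1, \ldots, w_{i-1}$ of any such path is exactly an $i$-path of $s$, so the assumption gives at most $p$ node-disjoint such segments. The key technical step is to convert this Menger-style hypothesis into a bound on the number of forwarded tokens. I would build an auxiliary layered graph: one layer per color $0, 1, \ldots, i-1$, with $w \in V$ duplicated in layer $j$ iff $c(w) = j$; edges go between layer $j$ and layer $j{+}1$ when $(w, w') \in E$; all cycle-prefix nodes $u_0, \ldots, u_{i-1}$ are deleted; all $N(s)$ vertices in layer $0$ are merged into a single source and all $N(u_i)$ vertices in layer $i-1$ into a single sink. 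By construction, internally vertex-disjoint source-to-sink paths in this auxiliary graph are in bijection with node-disjoint $i$-paths of $s$, so Menger's theorem furnishes a vertex cut of size at most $p$ meeting every off-cycle token-carrying path. Each cut vertex $v$ has some color $j \in \{1, \ldots, i-1\}$, and the forwarding rule at $v$ caps at $T_k(j)$ the number of distinct tokens pushed through it; the downstream segment (through color-$(j{+}1), \ldots, (i{-}1)$ nodes and on to $u_i$) cannot multiply this count. Crudely bounding each of the $p$ cut vertices by $\sum_{j=1}^{i-1} T_k(j)$ gives an off-cycle contribution of at most $p \cdot \sum_{j=1}^{i-1} T_k(j)$.

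Adding the on- and off-cycle contributions gives the stated bound $(p+1)\sum_{j=1}^{i-1} T_k(j)$. The main obstacle I anticipate is setting up the auxiliary graph so that the Menger correspondence is tight: I must ensure that the vertex-splitting preserves both the color constraint along paths and the forbidden-vertex constraint coming from the cycle prefix, and that merging the $N(s)$ and $N(u_i)$ vertices into a single source/sink does not introduce spurious disjoint paths that would weaken the cut bound. Once this auxiliary construction is in place, the remainder of the argument reduces to a routine application of the forwarding rule at each relevant node.
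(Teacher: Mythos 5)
Your approach is genuinely different from the paper's. The paper fixes a \emph{maximal} set $\pi_1,\dots,\pi_p$ of node-disjoint $i$-paths and runs a direct peeling argument: level by level, from color $i-1$ down to color $1$, it accounts for the at most $(p+1)T_k(j)$ tokens that can be forwarded by the $p+1$ nodes $u_j,\pi_1^j,\dots,\pi_p^j$; after all levels are peeled, any surviving token must have traveled a correctly colored path disjoint from $u_1,\dots,u_{i-1}$ and from all the $\pi_\ell$, i.e.\ a new $i$-path, contradicting maximality. No auxiliary graph and no appeal to Menger is needed. You instead split tokens into ``on-cycle'' (witnessing path meets $u_1,\dots,u_{i-1}$) and ``off-cycle'' and invoke Menger for the off-cycle part. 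The on-cycle bound $\sum_{j=1}^{i-1}T_k(j)$ is fine, and the two decompositions lead to the same $(p+1)\sum_j T_k(j)$ total, but the paper's peeling treats the cycle prefix and the $\pi_\ell$'s uniformly and avoids the overhead of building and reasoning about a layered auxiliary graph.

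There is, however, a real gap in your Menger step, and it is exactly the concern you flag but do not resolve: \emph{merging} all of $N(s)$ in layer $0$ into a single source (and $N(u_i)$ in layer $i-1$ into a single sink) does introduce spurious disjoint paths. Two $i$-paths that share only their initiator $w_0$ are \emph{not} node-disjoint, yet after merging $w_0$ into the source their images in the auxiliary graph are internally vertex-disjoint. Consequently the maximum number of internally-disjoint source--sink paths in your merged graph can strictly exceed $p$, and Menger then yields a cut of that larger size rather than a cut of size at most $p$, so the off-cycle bound $p\cdot\sum_j T_k(j)$ does not follow. The claimed ``bijection with node-disjoint $i$-paths of $s$'' is false as stated. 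The standard repair is to \emph{not} merge but instead attach a fresh super-source $s^*$ with edges to every layer-$0$ vertex of $N(s)$ and a fresh super-sink $t^*$ with edges from every layer-$(i-1)$ vertex of $N(u_i)$; then internally-disjoint $s^*$--$t^*$ paths do correspond exactly to node-disjoint $i$-paths, Menger yields a cut of size at most $p$, and each cut vertex at color $j\ge 1$ forwards at most $T_k(j)\le\sum_{j'}T_k(j')$ tokens (a cut vertex at color $0$ contributes at most one), giving the desired off-cycle bound. With that fix your argument goes through, but the peeling proof in the paper reaches the same place more directly.
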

\begin{proof}
	Let $\pi_1 = (\pi_1^0,\ldots,\pi_1^{i-1}),\ldots,\pi_p = (\pi_p^0,\ldots,\pi_p^{i-1})$
	be a maximal set of node-disjoint $i$ paths from $s$ to $u_{i}$.
	Suppose for the sake of contradiction that $u_{i}$ receives $t > (p+1) \left[ \sum_{j = 1}^{i-1} T_k(j) \right]$ BFS tokens.

	Note that for each $j = 1,\ldots,i-1$, nodes $\pi_1^j,\ldots,\pi_p^j$ each forward at most $T_k(j)$ tokens.
	In particular, since the last node of each path $\pi_1,\ldots,\pi_p$ forwards at most $T_k(i-1)$ tokens,
	and node $u_{i-1}$ also forwards at most $T_k(i-1)$ tokens,
	we have at least $t - (p+1) T_k(i-1) > (p+1) \left[ \sum_{j = 1}^{i-2} T_k(j) \right]$ BFS tokens
	that arrived at $u_{i}$ without passing through $u_{i-1}$ or through any of the nodes $\pi_1^{i-1},\ldots,\pi_p^{i-1}$.
	Next, since each next-to-last node on $\pi_1,\ldots,\pi_p$, as well as $u_{i-2}$, each forward at most $T_k(i-2)$ tokens,
	we have at least $t - (p+1) \left[ T_k(i-1) + T_k(i-2) \right] > (p+1) \left[ \sum_{j = 1}^{i-3} T_k(j) \right]$ BFS tokens
	that arrived at $u_{i}$ without passing through the last two nodes on any path $\pi_1,\ldots,\pi_p$,
	or through $u_{i-2}, u_{i-1}$.
	Continuing in a similar manner, we eventually see that there must be at least 
	$t - (p+1) \left[ \sum_{j = 1}^{i-1} T_k(j) \right] > 0$ tokens --- i.e., at least one token --- that arrived at $u_{i}$ without passing through $u_0,\ldots,u_{i-1}$
	or through any of the nodes on paths $\pi_1,\ldots,\pi_p$;
	this token must have been forwarded along some path $\tau = \tau^0,\ldots,\tau^{i-1}$,
	where $\tau^0 \in N(s)$,
	$\tau^{i-1} \in N(u_{i})$, and $c(\tau^j) = j$ for each $j = 0,\ldots,i-1$,
	and $\tau$ is node-disjoint from all the paths $\pi_1,\ldots,\pi_p$.
	Note that $\tau$ is ``colored correctly'', otherwise a BFS token could not flow across it;
	so $\tau$ is in fact an $i$-path.
	This contradicts our assumption that $\pi_1,\ldots,\pi_p$
	is a maximal set of node-disjoint $i$-paths from $s$ to $u_{i}$.
\end{proof}

For each $i = 1,\ldots,k-1$ and $b \in \set{-1,+1}$,
define 
\begin{equation*}
	B_k^{b,i}(u_0) = \set{ s \in N(u_0) \stt \text{$s$ is free and has at least $d_k$ node-disjoint $(b,i)$-paths}}
\end{equation*}
to be the ``bad neighbors'' of $u_0$, where here $d_k \in \mathbb{N}$ is some constant (which depends on $k$).
We prove that there are not too many bad neighbors:
\begin{equation}
	\sum_{i = 1}^{k-1} | B_k^{b,i}(u_0) | < \alpha \deg(u_i),
	\label{eq:good1}
\end{equation}
where $\alpha < 1/4$ is some constant.
Since we assume that $\deg(u_0) \geq \deg(u_i)$ and that $\deg'(u_0) \geq \deg(u_0) / 2$ (in this part of the analysis),
and accounting for both $b = +1$ and $b = -1$,
we have
\begin{equation}
	\left|
		N'(u_0)
		\setminus 
		\bigcup_{b \in \set{-1,+1}} \bigcup_{i = 1}^{k-1} B_k^{b,i}(u_0)
	\right|
	> (1 - 4\alpha) \deg'(u_0) = \Omega(n^{1/k}).
	\label{eq:good}
\end{equation}
By Lemma~\ref{lemma:disjoint_paths}, when we sample a good neighbor, the cycle nodes do not have too much congestion,
and the BFS token of $u_0$ is able to reach $u_{k-1}$ and $u_{k+1}$.

\subparagraph*{Controlling the number of bad neighbors.}
Let us again assume $b = +1$ and drop $b$ from our notation.

To prove~\eqref{eq:good1}, we observe that 
any bad neighbor $s \in B_k^i(u_0)$ contributes at least $d_k$ to the degree of $u_i$, as $s$ has at least $d_k$ node-disjoint $i$-paths
which connect to $u_i$ through $d_k$ different neighbors of $u_i$.
Unfortunately, it could be that two different bad neighbors $s, s' \in B_k^i(u_0)$ \emph{share} some of their $i$-paths,
so we cannot immediately argue that the number of bad paths is bounded by $\deg(u_i) / d_k$.
The bulk of the proof consists of showing that ``not too many'' bad neighbors can share ``too many'' of their $i$-paths,
and therefore we can still show that the number of bad neighbors is $O(\deg(u_i))$.
Indeed, we show that ``too much sharing'' of $i$-paths between different bad neighbors creates a $2k$-cycle through them,
and since we only consider \emph{free} neighbors of $u_0$, this cannot happen.

We proceed to consider each $k = 2,3,4,5$ separately.

\paragraph*{Analysis for $k = 2$ (i.e., 4-cycles).}
We set $T_2(1) = 1$ and $B_2(u_0) = \emptyset$.

Suppose for the sake of contradiction that node $u_b$, where $b \in \set{-1,+1}$, receives more than one BFS token.
Then there is some node $v \neq u_0$, whose BFS token $u_b$ received;
both $u_0$ and $v$ are neighbors of $u_b$.
In addition, since we only start a BFS from neighbors of $s$, node $v$ must be a neighbor of $s$.
Thus, the graph contains a $4$-cycle that includes $s$: $u_0, u_b, v, s$.
This contradicts our assumption that $s$ is not on a $4$-cycle.

\paragraph*{Analysis for $k = 3$ (i.e., 6-cycles).}
We set $T_3(1) = T_3(2) = 3$, and define ``bad neighbors'' as follows:
\begin{equation*}
	B_3(u_0) = \set{ v \in N(u_0) \stt |N(v) \cap N(u_1)| > 3 \text{ or } |N(v) \cap N(u_{-1})| > 3}.
\end{equation*}

First, observe that $u_0$ has at most two bad neighbors that are not on any $6$-cycle:
we show that there is at most one node $v \in N(u_0)$ which is not on any $6$-cycle and has $|N(v) \cap N(u_1)| > 3$, and similarly when we replace $u_1$ with $u_{-1} = u_5$.
Suppose for the sake of contradiction that there are two nodes $v \neq v'$ such that $v, v' \in N(u_0)$, neither $v$ nor $v'$ are on a $6$-cycle,
and also $|N(v) \cap N(u_1)| > 3, |N(v') \cap N(u_1)| > 3$.
Then there exist nodes $w \in \left( N(v) \cap N(u_1) \right) \setminus \set{ u_0, v' }$,
$w' \in \left( N(v') \cap N(u_1) \right) \setminus \set{ u_0, v, w}$ (because after removing at most 3 nodes from $N(v) \cap N(u_1)$ or from $N(v') \cap N(u_1)$, the sets are still not empty).
Since we assume the graph contains no self-loops, we also have $w \neq v, u_1$ and $w' \neq v', u_1$, as $v, v'$ are not neighbors of themselves.
Therefore the following 6-cycle is in the graph: $v, w, u_1, w', v', u_0$.

Next we show that conditioned on $\mathcal{E}_3$, each cycle node $u_i$ or $u_{-i}$ where $i \in \set{1,2}$ receives at most $T_3(|i|) = 3$ BFS tokens.
We prove it for $u_1$ and $u_2$; the proof for $u_{-1}$ and $u_{-2}$ (resp.) is similar.
\begin{itemize}
	\item $u_1$: since $\mathcal{E}_3$ requires that $c(u_1) = 1$, node $u_1$ only receives BFS tokens in the first step of the color-BFS; that is, $u_1$ only receives BFS tokens from its own neighbors which are also neighbors of $s$ (and are colored 0). Because $s \not \in B_3(u_0)$ under $\mathcal{E}_3$, there are at most three such BFS tokens.
	\item $u_2$: since $\mathcal{E}_3$ requires that $c(u_2) = 2$, node $u_2$ only receives BFS tokens in the second step of the color-BFS.
		We already showed that $u_1$ receives at most three BFS tokens; thus, in order for $u_2$ to receive more than three,
		the fourth token must come through some node other than $u_1$.

		If $u_1$ receives no more than three BFS tokens, these include the BFS token of $u_0$,
		so the fourth token received by $u_2$ cannot originate at $u_0$.
		Therefore 
		there must exist $v \neq u_0$ and $w \neq u_1$ such that
		\begin{itemize}
			\item $v \neq u_0$ is the originator of the fourth BFS token received by $u_2$:
				we have
				$v \in N(s)$ (and $c(v) = 0$, but we do not need this fact).
			\item $w \neq u_1$ is the node that forwards $v$'s token to $u_2$: we have $w \in N(v) \cap N(u_2)$.
		\end{itemize}
		However, this means that $s$ has two node-disjoint paths of length two to $u_2$, so it participates in the following $6$-cycle:
		$s, u_0, u_1, u_2, w, v$.
		Under $\mathcal{E}_3$ we know that $s$ does not participate in any $6$-cycle, so this is impossible.

		On the other hand, if $u_1$ receives more than three BFS tokens, it forwards no tokens to $u_2$.
		Of the four (or more) tokens received by $u_2$, at least one belongs to some node $v \neq u_0$.
		So again, we have nodes $v \neq u_0$ and $w \neq u_1$ such that $v \in N(s)$ and $w \in N(v) \cap N(u_2)$,
		and we get a 6-cycle that includes $s$, as above.
\end{itemize}

\paragraph*{Analysis for $k = 4$ (i.e., 8-cycles).}
%Let us define the \emph{dangerous neighbors} of $u_0$ as follows:
%\begin{equation*}
	%D(u_0) = \set{ v \in N(u_0) \st |N(v) \cap N(u_1)| > 1}.
%\end{equation*}
%These are the nodes we have to ``worry about'' in terms of $u_1$ receiving too many BFS tokens:
%if we choose $s \not \in D(u_0)$, then $u_1$ receives only the BFS token of $u_0$, as in this case $N(s) \cap N(u_1) = \set{u_0}$.
%But if $s \in D(u_0)$, then $s$ has more neighbors that may start a BFS that reaches $u_0$.
We say that a node $s$ is \emph{free} if it does not participate in any 8-cycle.
Our analysis considers two cases, depending on whether or not a certain pattern is present in the graph.

With respect to the fixed cycle $u_0,\ldots,u_7$,
and given $b \in \set{-1,+1}$,
we define a \emph{$b$-pattern} $D$ to be the following 4-node subgraph, which is node-disjoint from the cycle:
$D = (\set{ s, s', w, w'}, \set{ \set{ s, w}, \set{s, w'}, \set{ s', w'}})$, such that 
in addition to the internal edges of $D$, we have
\begin{itemize}
	\item $s, s' \in N(u_0)$,
	\item $w, w' \in N(u_b)$.
\end{itemize}
Nodes $s, s'$ are called the \emph{heads} of the dangerous pattern,
and $w, w'$ are called the \emph{tails}.

\begin{observation}
	If there is a $b$-pattern $D$ with heads $s, s'$, at least one of which is free,
	and with tails $w, w'$,
	then there cannot exist any free node $s'' \in N(u_0) \setminus \set{ s, s', w, w'}$ such that $N(s'') \cap N(u_b) \not \subseteq \set{ w, w', u_0, s, s'}$.
	\label{obs:dangerous}
\end{observation}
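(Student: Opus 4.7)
The plan is to derive a contradiction by exhibiting an explicit $8$-cycle that passes through the hypothetical free node $s''$. Assume for contradiction that a free node $s'' \in N(u_0) \setminus \{s, s', w, w'\}$ exists together with some $z \in N(s'') \cap N(u_b)$ satisfying $z \notin \{w, w', u_0, s, s'\}$. I would then consider the closed walk
\[
C \;:\; u_0 \to s' \to w' \to s \to w \to u_b \to z \to s'' \to u_0.
\]
Three of its edges, namely $\{s',w'\}, \{w',s\}, \{s,w\}$, are exactly the three edges of the $b$-pattern $D$; the remaining five, $\{u_0,s'\}, \{u_0,s''\}, \{u_b,w\}, \{u_b,z\}, \{s'',z\}$, are immediate from $s',s'' \in N(u_0)$, $w,z \in N(u_b)$, and $z \in N(s'')$.

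The main task is to verify that the eight vertices appearing in $C$ are pairwise distinct, so that $C$ is a genuine simple $8$-cycle. The four pattern nodes $s, s', w, w'$ are distinct from one another and node-disjoint from the original $8$-cycle by the definition of a $b$-pattern, so in particular they differ from $u_0$ and $u_b$. The node $s''$ differs from the four pattern nodes by hypothesis, from $u_0$ because $G$ has no self-loops, and from $u_b$ because $u_b$ lies on the original $8$-cycle while $s''$ is free. The node $z$ is excluded from $\{w, w', u_0, s, s'\}$ by hypothesis, from $u_b$ and $s''$ as a neighbor of each, and from $w'$ by the same hypothesis. The only subtle sub-case is the possibility $z = u_{2b}$, since $u_{2b}$ is the other cycle-neighbor of $u_b$; but even then $u_{2b}$ stays distinct from every other vertex of $C$ (from the pattern nodes by node-disjointness, from $u_0, u_b$ as distinct cycle nodes, and from $s''$ because $s''$ is free), so $C$ is still a simple $8$-cycle. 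Its existence contradicts the freeness of $s''$.

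I expect the bulk of the work to be this distinctness bookkeeping rather than discovering the cycle itself: the shape of $C$ is essentially forced by the three pattern edges together with the ``new'' edges $\{u_b,z\}$ and $\{s'',z\}$, which compel one to enter the pattern at $s'$ and leave it at $s$ in order to traverse all three of its edges. I also note that the hypothesis ``at least one of $s, s'$ is free'' is not actually invoked by this argument; the contradiction is obtained purely from the freeness of $s''$. That extra hypothesis is presumably present to match the way the observation is applied later in the analysis, not because it is needed for the local proof.
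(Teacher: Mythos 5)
Your proof is correct and constructs exactly the same $8$-cycle as the paper's: your $u_0, s', w', s, w, u_b, z, s'', u_0$ is a rotation/reversal of the paper's $s, w', s', u_0, s'', w'', u_b, w$ (with $z = w''$), and the distinctness checks are the same. The only cosmetic difference is that you derive the contradiction from the freeness of $s''$, while the paper invokes the freeness of $s$ or $s'$ — both are valid since all three lie on the constructed cycle, and your remark that the ``at least one of $s,s'$ is free'' hypothesis is dispensable for this local argument is accurate.
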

\begin{proof}
	Suppose otherwise,
	and let $w'' \in N(s'') \cap N(u_b) \setminus \set{ w, w', u_0, s, s'}$.
	Then the following 8-cycle is in the graph:
	$s, w', s', u_0, s'', w'', u_b, w$, contradicting our assumption that at least one of the nodes $s, s'$ is free (i.e., does not participate in an 8-cycle).

	We verify that this is indeed a simple 8-cycle:
	\begin{itemize}
		\item $w' \neq s$ because they are distinct nodes of $D$,
		\item $s' \not \in \set{s, w'}$ for the same reason,
		\item $u_0 \not \in \set{s, w', s'}$ because $D$ is disjoint from the cycle,
		\item $s'' \not \in \set{s, w', s', u_0}$ because we assumed that $s'' \in N(u_0) \setminus \set{s, s', w, w'}$ and the graph contains no self-loops,
		\item $w'' \not \in \set{s, w', s', u_0, s''}$ by choice of $w''$, together with the fact that $w'' \in N(s'')$ and the graph contains no self-loops,
		\item $u_b \not \in \set{s, w', s', u_0, s'', w''}$: we know that $w, w', w'' \in N(u_b)$, and the graph contains no self-loops;
			we cannot have $u_b = u_0$ because these are distinct nodes of our fixed 8-cycle;
			and we cannot have $u_b \in \set{s, s'}$ because $D$ is node-disjoint from the cycle.
		\item $w \not \in \set{s, w', s', u_0, s'', w'', u_b}$: we know that $w \not \in \set{ s, s', w'}$
			because these are distinct nodes of $D$;
			also, $w \not \in \set{ u_0, u_b}$ because $D$ is distinct from the 8-cycle;
			and finally, $w \not \in \set{ s'', w''}$ by choice of $s'', w''$.
	\end{itemize}
\end{proof}

The set of bad neighbors, $B_4(u_0)$, is defined as follows:
\begin{enumerate}[(I)]
	\item Define a \emph{0-1 $b$-path} from node $v \in N(u_0)$ to node $u_{2b}$ to be a path of length 2 between these nodes, $\pi = v, w_0, w_1, u_{2b}$,
		such that $c(w_0) = 0, c(w_1) = 1, w_0 \neq u_0, w_1 \neq u_b$.
		Two 0-1 paths $\pi_1 = v, w_0^1, w_1^1, u_{2b}$ and $\pi_2 = v, w_0^2, w_1^2, u_{2b}$ are called \emph{node-disjoint}
		if $w_0^1 \neq w_0^2$ and $w_1^1 \neq w_2^2$ (note that because of differing colors, node-disjoint 0-1 paths cannot share any nodes,
		except the two endpoints $v, u_{2b}$).

		Any neighbor $v \in N(u_0)$ that has at least four node-disjoint 0-1 paths to $u_{2b}$ is added to $B_4(u_0)$.
	\item For each $b \in \set{-1,+1}$, if the graph contains a $b$-pattern w.r.t.\ $u_0,\ldots,u_7$, we fix one such pattern arbitrarily, and add its heads to $B_4(u_0)$.
	\item For each $b \in \set{-1,+1}$, if the graph does not contain a $b$-pattern w.r.t.\ $u_0,\ldots,u_7$, then we add to $B_4(u_0)$ any node $v$ with $|N(v) \cap N(u_b)| \geq 4$ for $b \in \set{-1,+1}$.
\end{enumerate}

First, we bound the number of free bad neighbors of $u_0$ of each type I-III, and show that $|B_4(u_0)| \leq (3/4)\deg(u_0)$:
\begin{enumerate}
	\item For each $b \in \set{-1,+1}$, there is at most one free node $v \in N(u_0)$ that has four node-disjoint 0-1 paths to $u_{2b}$:
		suppose for the sake of contradiction that there are two such nodes, $v \neq v'$.
		Then we have the following paths in the graph:
		\begin{itemize}
			\item $v, u_0, u_b, u_{2b}$,
			\item A 0-1 path, $v, w_0, w_1, u_{2b}$, which is node-disjoint from the previous path by definition,
			\item A path $v', w_0', w_1', u_{2b}$ which is node-disjoint from the previous paths (such a path exists because $v'$ has at least four node-disjoint 0-1 paths to $u_{2b}$, none of which include $u_0$ or $u_b$; at least one of these paths avoids $v, w_0, w_1$).
		\end{itemize}
		Therefore, the graph includes the following 8-cycle:
		$v, u_0, v', w_0', w_1', u_{2b}, w_1, w_0$,
		contradicting our assumption that $v, v'$ are free.
	\item For each $b \in \set{-1,+1}$, if the graph contains a $b$-pattern $D$, then it has exactly two heads, so we add two nodes to $B_4(u_0)$.
	\item For each $b \in \set{-1,+1}$, if the graph does not contain a $b$-pattern, then for any two neighbors $s, s'\in N(u_0)$, if either $|N(s) \cap N(u_b)| \geq 4$ or $|N(s') \cap N(u_b)| \geq 4$, then we must have $N(s) \cap N(s') \cap N(u_b) = \set{u_0}$: otherwise, if w.l.o.g.\ we had $|N(s) \cap N(u_b)| \geq 4$ and also $N(s) \cap N(s') \cap N(u_b) \supsetneq \set{u_0}$, then there would exist tails, $w \in N(s) \cap N(s') \cap N(u_b) \setminus \set{u_0}$ and $w' \in N(s) \cap N(u_b) \setminus \set{ u_0, w, s}$, such that $s, s', w, w'$ are a $b$-pattern w.r.t.\ $u_0,\ldots,u_7$.

		Let $U$ be the set of nodes $s$ with $|N(s) \cap N(u_b)| \geq 4$.
		As we just said, for any distinct $s, s' \in U$,
		we have $N(s) \cap N(s') \cap N(u_b) = \set{u_0}$,
		that is, $\left( N(s) \cap N(u_b) \right) \cap \left( N(s') \cap N(u_b) \right) = \set{u_0}$.
		Since we assumed that $u_0$ has maximal degree among $u_0,\ldots,u_7$,
		\begin{equation*}
			\left|N(u_0)\right| \geq \left| N(u_b) \right| \geq \left| \bigcup_{s \in U} N(s) \cap N(u_b) \right|
			\geq 1 + 3 \cdot |U|.
		\end{equation*}
		We see that
		$|U| < \deg(u_0) / 3$.
\end{enumerate}
Summing across both $b = -1,+1$, we see that the total number of bad neighbors is bounded by $6+2\deg(u_0)/3 < (3/4)\deg(u_0)$, assuming $n$ is large enough (recall that $\deg(u_0) \geq n^{1/k}$, so for $n$ large enough we have $\deg(u_0)/12 > 6$).

Next, assume we have sampled a free good neighbor $s \in N(u_0) \setminus B_4(u_0)$, and let us bound the number of BFS tokens that
each cycle node can receive.
Let $b \in \set{-1,+1}$.
\begin{itemize}
	\item $u_b$ can receive at most $5$ BFS tokens: since we assume that $c(u_b) = b$,
		the only BFS tokens received by $u_b$ are those sent by 0-colored nodes in $N(s) \cap N(u_b)$.
		We consider two cases:
		\begin{enumerate}
			\item The graph contains a $b$-pattern with heads $v, v'$ and tails $w, w'$:
				then by definition, since $s$ is not a bad neighbor, $s \not \in \set{v, v'}$.
				By Observation~\ref{obs:dangerous},
				we have $N(s) \cap N(u_b) \subseteq \set{ v, v', w, w', u_0}$,
				so at most 5 BFS tokens can reach $u_b$.
			\item The graph does not contain a $b$-pattern: then since $s$ is not a bad neighbor,
				we have $|N(s) \cap N(u_b)| < 4$, and hence fewer than 5 BFS tokens can reach $u_b$.
		\end{enumerate}
	\item $u_{2b}$ can receive at most 30 BFS tokens:
		since $s$ is not bad, it has at most four node-disjoint 0-1 paths to $u_{2b}$.
		Let $\pi_1,\ldots,\pi_{\ell}$, $\ell \leq 4$, be a maximal set of node-disjoint 0-1 paths
		from $s$ to $u_{2b}$.
		For each such path $\pi_i = s, w_0^i, w_1^i, u_{2b}$, if node $w_1^i$ receives more than 5 BFS tokens, it sends none of them;
		and if it receives at most 5 BFS tokens, it forwards them to $u_{2b}$.
		The same goes for the path $s, u_0, u_b, u_{2b}$.
		Thus, node $u_{2b}$ receives at most 25 tokens from nodes $\set{w_1^i}_{i = 1,\ldots,\ell}$ and $u_b$.
		We also ``throw in for free'' the BFS tokens of nodes $\set{ w_0^i}_{i = 1,\ldots,\ell}$ and $u_0$,
		for a total of at most 30 tokens received at $u_{2b}$.
		(These latter tokens may reach $u_{2b}$ through some node other than $\set{w_1^i}_{i=1,\ldots,\ell}, u_b$, and we pessimistically assume that they do.)

		Suppose for the sake of contradiction that $u_{2b}$ receives more than 30 tokens.
		Then one of these tokens was neither originated by one of the nodes $\set{ w_0^i}_{i = 1,\ldots,\ell}, u_0$,
		nor forwarded by one of the nodes $\set{w_1^i}_{i = 1,\ldots,\ell}, u_b$.
		This means that there is some 0-colored neighbor $x \in N(s)$, such that $x \not \in \set{ w_0^i}_{i=1,\ldots,\ell} \cup \set{u_0}$,
		whose token was received by $u_{2b}$,
		and a 1-colored neighbor $y \in N(s) \cap N(u_{2b})$, such that $y \not \in \set{ w_1^i}_{i = 1,\ldots,\ell} \cup \set{u_b}$,
		that forwarded $x$'s token to $u_{2b}$.
		But then the path $s, x, y, u_{2b}$ is a 0-1 path that is node-disjoint from $\pi_1,\ldots,\pi_{\ell}$,
		contradicting our assumption that this is a maximal set of node-disjoint 0-1 paths from $s$ to $u_{2b}$.
		%Let $i_1,\ldots,i_t \subseteq \set{1,\ldots,\ell}$ be the indices of the paths $\pi_{i_j}$ such that $w_1^{i_j}$ receives at most
		%5 BFS tokens (the other paths do not send any tokens to $u_2$).
		%Note that these 5 BFS tokens include the BFS token of $w_0^{i_j}$.
		
		%In total, from nodes $w_1^{i_1},\ldots,w_1^{i_t}$, node $u_2$ receives a total of at most $5 \cdot t \leq 5 \cdot 4 = 20$ BFS tokens;
		%from node $u_1$, it may receive up to 5 additional BFS tokens.
		%This brings the total to 25.
	\item $u_{3b}$ can receive at most 36 BFS tokens: suppose for the sake of contradiction that $u_{3b}$ receives more than 36 tokens.
		Since $u_0$ originates one token, and nodes $u_{b}, u_{2b}$ forward 5 tokens and 30 tokens, respectively,
		this means that node $u_{3b}$ receives some token originated by a neighbor $w_0 \neq u_0$ of $s$,
		and forwarded first by $w_1 \neq u_b$ and then by $w_2 \neq u_{2b}$.
		Therefore the graph contains the 8-cycle $s, u_0, u_b, u_{2b}, u_{3b}, w_2, w_1, w_0$, contradicting our assumption that $s$ is free.

\end{itemize}

\paragraph*{Analysis for $k = 5$ (i.e., 10-cycles).}
Let $B_5^1(u_0)$ be the set of free neighbors of $u_0$ that have 100 or more different 1-paths to $u_1$.
(Recall that a 1-path is simply one node $w_0$, colored 0, and connected to both $s$ and $u_1$.)

\begin{lemma}
	Suppose nodes $s, s' \in B_5^1(u_0)$ ($s \neq s'$) have a common 1-path,
	$w_0 \in N(s) \cap N(s') \cap N(u_1)$.
	Then for any two other nodes $s'', s''' \in B_5^1(u_0) \setminus \set{s, s', w_0}$ ($s'' \neq s'''$),
	there is no common 1-path $w_0' \in N(s'') \cap N(s''') \cap N(u_1) \setminus \set{ s, s', w_0, u_0 }$.
	\label{lemma:C10_u1}
\end{lemma}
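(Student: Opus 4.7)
The plan is to derive a contradiction by exhibiting a simple $10$-cycle passing through $s'$, which would contradict the freeness of $s'$. Concretely, combining the two hypothesized shared $1$-paths with a return leg through $u_1$, I would consider the closed walk
\[
s' \to w_0 \to s \to u_0 \to s'' \to w_0' \to s''' \to a \to u_1 \to b \to s',
\]
for suitable $a \in N(s''') \cap N(u_1)$ and $b \in N(s') \cap N(u_1)$. Each of the ten edges is immediately present in the graph: the three edges incident to $u_0$ exist because $s, s', s'' \in N(u_0)$; the four edges incident to $w_0$ and $w_0'$ are exactly the two given shared $1$-paths; and the remaining edges exist by definition of the chosen nodes $a,b$, which will come from $1$-paths of $s'''$ and $s'$ respectively.

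The key step is to verify that the ten vertices can be chosen to be pairwise distinct. First, I would argue that the eight ``fixed'' vertices $s, s', s'', s''', w_0, w_0', u_0, u_1$ are pairwise distinct, using: the lemma's exclusions ($w_0' \notin \set{s, s', w_0, u_0}$ and $s'', s''' \in B_5^1(u_0) \setminus \set{s, s', w_0}$); the no-self-loop convention to handle pairs like $w_0 \neq s,s'$ and $w_0' \neq s'', s'''$; the fact that $u_0 \neq u_1$ as distinct cycle vertices and $w_0, w_0' \neq u_0$ by the $1$-path definition; and the freeness of $s, s', s'', s'''$ to separate them from the cycle vertex $u_1$. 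Then, because $s''' \in B_5^1(u_0)$ admits at least $100$ distinct $1$-paths to $u_1$, at least $100 - 8 > 0$ of them yield an $a \in N(s''') \cap N(u_1)$ avoiding the eight fixed vertices; similarly, at least $100 - 9 > 0$ of $s'$'s $1$-paths yield a valid $b$ avoiding the eight fixed vertices and $a$.

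With such a choice of $a, b$, the walk above is a simple $10$-cycle through $s'$, contradicting $s' \in B_5^1(u_0)$ being free, and completing the proof. I expect the only non-trivial part to be the bookkeeping of the disjointness of the ten vertices: once that is done carefully, the numeric slack between $100$ and the constant number of forbidden vertices makes the selection of $a$ and $b$ essentially automatic. (The symmetric roles of $s \leftrightarrow s'$ and $s'' \leftrightarrow s'''$ in the hypothesis mean there is also flexibility in which node plays which role, which can be useful if some edge of the walk needs to be re-routed, but the configuration above already suffices.)
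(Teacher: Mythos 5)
Your proof is correct and takes essentially the same route as the paper: both build a simple $10$-cycle from the two shared $1$-paths, the three edges through $u_0$, and two additional $1$-paths picked (by the abundance guarantee of $B_5^1(u_0)$) to avoid the finitely many already-used vertices. Your cycle $s', w_0, s, u_0, s'', w_0', s''', a, u_1, b$ is just the paper's $s, w_0, s', u_0, s'', w_0', s''', y_0, u_1, x_0$ with $s$ and $s'$ swapped, which you already observe is harmless by symmetry; your distinctness bookkeeping is, if anything, slightly more explicit than the paper's.
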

\begin{proof}
	Suppose the lemma is false, and let $s, s', s'', s''', w_0, w_0'$ be as in the lemma.
	Since $s \in B_5^1(u_0)$, it has at least 100 1-paths to $u_1$,
	and at least one of them, call it $x_0$, excludes nodes $s, s'', s''', w_0, w_0', u_1$.
	Also, since $s''' \in B_5^1(u_0)$, it has at least one 1-path, call it $y_0$,
	which differs from $s, s', s'', w_0, w_0', u_1, x_0$.
	Therefore the following 10-cycle is in the graph:
	$s, w_0, s', u_0, s'', w_0', s''', y_0, u_1, x_0$.
	This contradicts our assumption that $s$ (and also $s', s'', s'''$) are free.
\end{proof}

\begin{corollary}
	Assuming $\deg'(u_0) > \deg(u_0) / 2 > 100$,
	we have $|B_5^1(0)| \leq \deg'(u_0)/20$.
	\label{cor:C10_u1}
\end{corollary}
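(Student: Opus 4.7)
The plan is to double-count 1-paths of nodes in $B_5^1(u_0)$, viewing each 1-path as an element of $N(u_1) \setminus \set{u_0}$, and to case-split on whether any two nodes of $B_5^1(u_0)$ share a 1-path. In both cases the required upper bound on $|N(u_1)|$ will come from $|N(u_1)| \leq \deg(u_1) \leq \deg(u_0) < 2\deg'(u_0)$, combining the convention that $\deg(u_0)$ is maximal on the cycle with the hypothesis $\deg'(u_0) > \deg(u_0)/2$.

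In the easy case no two nodes of $B_5^1(u_0)$ share a 1-path, so the 1-path sets of different members of $B_5^1(u_0)$ form pairwise disjoint subsets of $N(u_1) \setminus \set{u_0}$. Since each such node contributes at least $100$ distinct 1-paths, this immediately gives $100 \cdot |B_5^1(u_0)| \leq |N(u_1)| < 2\deg'(u_0)$, hence $|B_5^1(u_0)| < \deg'(u_0)/50 \leq \deg'(u_0)/20$.

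In the harder case there exists a pair $s_1, s_1' \in B_5^1(u_0)$ sharing some 1-path $w_0^1 \in N(s_1) \cap N(s_1') \cap N(u_1)$, and I will invoke Lemma~\ref{lemma:C10_u1} once for this pair. The lemma's conclusion tells me that any two distinct $s'', s''' \in B_5^1(u_0) \setminus \set{s_1, s_1', w_0^1}$ have common 1-paths only inside $\set{s_1, s_1', w_0^1, u_0}$; since $u_0$ itself can never serve as a 1-path, their common 1-paths are confined to the three-element set $\set{s_1, s_1', w_0^1}$. Restricting attention to 1-paths outside this three-element set, each node in $B_5^1(u_0) \setminus \set{s_1, s_1', w_0^1}$ still retains at least $100 - 3 = 97$ 1-paths, and these restricted 1-path sets are pairwise disjoint. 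Double counting them inside $N(u_1)$ then yields $97 \cdot (|B_5^1(u_0)| - 3) \leq |N(u_1)| < 2\deg'(u_0)$, so $|B_5^1(u_0)| < 2\deg'(u_0)/97 + 3$, and the hypothesis $\deg'(u_0) > 100$ supplies enough slack between $2/97$ and $1/20$ to conclude that $|B_5^1(u_0)| \leq \deg'(u_0)/20$.

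The main obstacle I anticipate is the bookkeeping around the three ``exceptional'' nodes $s_1, s_1', w_0^1$ when converting the lemma into a disjointness statement. I must verify that (i) $u_0$ indeed cannot appear as a 1-path because of the node-disjointness clause in the definition of an $i$-path, (ii) removing at most three members of $B_5^1(u_0)$ contributes only an additive $+3$ that is absorbed once $\deg'(u_0) > 100$, and (iii) the restricted 1-path sets really are pairwise disjoint over the \emph{entire} residual $B_5^1(u_0) \setminus \set{s_1, s_1', w_0^1}$, not just over one particular pair $s'', s'''$; this last point follows by applying the conclusion of Lemma~\ref{lemma:C10_u1} to every such pair, which is legitimate because the lemma's hypothesis has to be verified only once (for the initial pair $s_1, s_1'$ and their shared $w_0^1$).
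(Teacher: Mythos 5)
Your proof follows essentially the same route as the paper's: case-split on whether two nodes of $B_5^1(u_0)$ share a 1-path, invoke Lemma~\ref{lemma:C10_u1} in the nontrivial case to make the residual 1-path sets disjoint, and charge them to the degree of a cycle node (the paper writes $\deg(u_2)$, evidently a typo for $\deg(u_1)$, which is what you correctly use). The only cosmetic difference is that you keep the tighter constant $97$ where the paper relaxes to $50$; this actually makes your closing arithmetic cleaner, though you should make explicit that you are rounding $|B_5^1(u_0)|$ to an integer, since for $\deg'(u_0)$ just above $100$ the inequality $2\deg'(u_0)/97 + 3 \leq \deg'(u_0)/20$ only holds after taking floors.
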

\begin{proof}
	We claim that $|B_5^1(u_0)| \leq \deg(u_2)/50+3$.
	Since $\deg(u_2) \leq \deg(u_0) \leq 2\deg'(u_0)$,
	this implies that 
	\begin{equation*}
		|B_5^1(u_0)| \leq \frac{2\deg'(u_0)}{50}+3 < \frac{\deg'(u_0)}{20}.
	\end{equation*}

	If no two nodes $s \neq s' \in B_5^1(u_0)$ have a common 1-path
	$w_0 \in N(s) \cap N(s') \cap N(u_1)$,
	then each node in $B_5^1(u_0)$ contributes at least 100 unique neighbors of $u_1$
	which are not contributed by any other neighbor in $B_5^1(u_0)$,
	and therefore $\deg(u_2) \geq 100|B_5^1(u_0)|$.
	Thus, assume there do exist $s \neq s' \in B_5^1(u_0)$ with a common 1-path $w_0$,
	and fix such $s, s', w_0$.
	The remaining nodes in $B_5^1(u_0) \setminus \set{s, s', w_0}$
	do not have any common 1-paths among themselves,
	except possibly $s, s', w_0, u_0$;
	but each node in $B_5^1(u_0)$ has at least 100 1-paths to $u_1$,
	and at least 50 of them are not $s, s', w_0, u_0$, and as we just said,
	are therefore not shared with any other node in $B_5^1(u_0) \setminus \set{ s, s', w_0}$.
	It follows that each node in $B_5^1(u_0) \setminus \set{ s, s', w_0}$
	contributes at least 50 unique neighbors of $u_2$,
	and hence
	$|B_5^1(u_0)| \leq \deg(u_2)/50 + 3$.

\end{proof}

Let $B_5^2(u_0)$ be the set of free neighbors of $u_0$
that have 100 or more node-disjoint 2-paths to $u_2$.

\begin{lemma}
	Assuming $\deg'(u_0) > \deg(u_0) / 2 > 1000$,
	we have $|B_5^2(u_0)| \leq \deg'(u_0)/10$.
	\label{lemma:C10_u2}
\end{lemma}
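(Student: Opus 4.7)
The plan is to mirror the double-counting strategy used in Corollary~\ref{cor:C10_u1}. Since each $s \in B_5^2(u_0)$ has a collection of $100$ node-disjoint 2-paths to $u_2$, and each such 2-path $s$-$w_0$-$w_1$-$u_2$ terminates at a distinct ``tail'' $w_1 \in N(u_2)$, the node $s$ contributes $100$ distinct elements to $N(u_2)$. Combined with the assumption $\deg(u_2) \leq \deg(u_0) \leq 2\deg'(u_0)$, it will suffice to prove that $|B_5^2(u_0)| \leq \deg(u_2)/50 + O(1)$, which together with $\deg'(u_0) > 1000$ yields the desired bound $|B_5^2(u_0)| \leq \deg'(u_0)/10$.

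The crux is an analogue of Lemma~\ref{lemma:C10_u1}: excessive ``sharing'' of tail nodes among free bad neighbors produces a 10-cycle. Concretely, I would prove that a single $s \in B_5^2(u_0)$ cannot share two \emph{distinct} tails $w_1^{(1)} \neq w_1^{(2)}$ with two other distinct bad neighbors $s', s''' \in B_5^2(u_0)$. Suppose for contradiction we had 2-paths $s$-$w_0$-$w_1^{(1)}$-$u_2$, $s'$-$w_0'$-$w_1^{(1)}$-$u_2$, $s$-$w_0^*$-$w_1^{(2)}$-$u_2$, and $s'''$-$w_0'''$-$w_1^{(2)}$-$u_2$. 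Then the graph contains the 10-cycle
\[ s,\ w_0,\ w_1^{(1)},\ w_0',\ s',\ u_0,\ s''',\ w_0''',\ w_1^{(2)},\ w_0^*, \]
contradicting the freeness of $s$ (and of $s', s'''$). All $10$ nodes can be made genuinely distinct using pigeonhole over the $100$ available node-disjoint 2-paths at each of $s, s', s'''$: any constant-size forbidden set can be avoided while still hitting the required shared tails.

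Given this sharing lemma, the ``tail-sharing'' graph on $B_5^2(u_0)$ — with $s \sim s'$ iff they share some $w_1$-tail — has maximum degree at most $1$, so it is essentially a matching. Each isolated vertex contributes $100$ distinct tails to $N(u_2)$, and each matched pair $\{s,s'\}$ contributes at least $100$ distinct tails (already from one endpoint alone), giving a ratio of at least $50$ distinct $N(u_2)$-contributions per bad neighbor. Hence
\[ |B_5^2(u_0)| \ \leq\ \frac{\deg(u_2)}{50} + O(1) \ \leq\ \frac{2\deg'(u_0)}{50} + O(1) \ <\ \frac{\deg'(u_0)}{10}. \]

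The main obstacle will be verifying node-distinctness in the 10-cycle construction, and handling degenerate sharing patterns — in particular, the case where three or more bad neighbors all share the \emph{same} tail $w_1$, which the sharing lemma above does not immediately rule out. This should be addressed either by a short auxiliary 10-cycle construction in the ``triple-share'' scenario (again using pigeonhole over the $100$ available 2-paths to produce the required disjoint extensions), or by slightly tightening the definition of the sharing graph so that the same bound of at least $50$ distinct contributions per bad neighbor is retained.
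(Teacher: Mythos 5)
Your high-level plan matches the paper's (charge each bad neighbor against $\deg(u_2)$ via the distinct tail nodes $w_1 \in N(u_2)$ of its $100$ node-disjoint $2$-paths, and use freeness to control sharing), but the specific ``sharing lemma'' you propose is the wrong one, and the subsequent ``tail-sharing graph'' bookkeeping doesn't hold up. First, your lemma (one $s$ cannot share two \emph{distinct} tails with two \emph{distinct} partners) does not imply that the tail-sharing graph has maximum degree one: nothing in it excludes $s$ sharing the \emph{same} tail $w_1$ with several partners, nor $s$ sharing two distinct tails with the \emph{same} partner, and either case gives $s$ degree $\geq 2$. You flagged the first of these yourself, but your proposed repair (an auxiliary triple-share construction) is misdirected, because the right claim -- and the one the paper proves -- is much stronger and much simpler: \emph{any} two free bad neighbors sharing a tail at all already yields a $10$-cycle. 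If $s, s'$ both have $2$-paths ending at the same $w_1$, say $s, w_0, w_1, u_2$ and $s', w_0', w_1, u_2$ with $w_0 \neq w_0'$, take a fresh $2$-path $x_0, x_1$ of $s'$ avoiding the constant-size blocked set, and $s, w_0, w_1, w_0', s', x_0, x_1, u_2, u_1, u_0$ is a $10$-cycle (this is the paper's Case 2, which goes through the cycle edges $u_2, u_1, u_0$ rather than $u_0$ alone). Once you see this, the tail-sharing graph (among free nodes) is simply empty, and the matching argument disappears.

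Second, there is a genuine hole in your cycle construction even in the case it targets. Your $10$-cycle $s, w_0, w_1^{(1)}, w_0', s', u_0, s''', w_0''', w_1^{(2)}, w_0^*$ tacitly requires $w_0 \neq w_0'$ and $w_0''' \neq w_0^*$, but these first-layer nodes are \emph{not} free to reselect: the $2$-paths of $s$ and $s'$ that share tail $w_1^{(1)}$ are pinned by the sharing itself (in a maximal node-disjoint family of $s$, at most one path ends at $w_1^{(1)}$), and $w_0 = w_0'$ is entirely possible -- a common neighbor of $s$ and $s'$ adjacent to $w_1^{(1)}$. Your pigeonhole remark only applies to the additional, freely chosen $2$-paths, not to the shared ones. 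The paper handles precisely this: its Case 1 is exactly ``$w_0 = w_0'$'', and it is the only place where the proof-by-contradiction hypothesis $|B_5^2(u_0)| > \deg'(u_0)/10 > 100$ is needed, to produce a third free bad neighbor $s''$ disjoint from a constant-size set and build a $10$-cycle $s'', w_0'', w_1'', u_2, x_1, x_0, s', w_0' = w_0, s, u_0$ from $s''$'s and $s'$'s fresh $2$-paths. With Cases 1 and 2 together, no two $2$-paths of distinct bad neighbors share any node, so $\deg(u_2) \geq 100 |B_5^2(u_0)|$, and the rest is arithmetic: $|B_5^2(u_0)| \leq \deg(u_2)/100 \leq \deg(u_0)/100 < \deg'(u_0)/50$, contradicting $|B_5^2(u_0)| > \deg'(u_0)/10$.
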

\begin{proof}
	Suppose for the sake of contradiction that $|B_5^2(u_0)| > \deg'(u_0)/10 > 100$.
	We claim that no two nodes in $B_5^2(u_0)$ can \emph{share} a 2-path, that is,
	there cannot exist $s, s' \in B_5^2(u_0)$
	and 2 paths $w_0, w_1$ and $w_0', w_1'$ which are \emph{not node-disjoint},
	such that $w_0, w_1$ is a 2-path from $s$ to $u_2$,
	and $w_0', w_1'$ is a 2-path from $s'$ to $u_2$.

	Suppose there exist such $s, s'$ and paths such that $\set{ w_0, w_1} \cap \set{ w_0', w_1'} \neq \emptyset$.
	Since $c(w_0) = c(w_0') = 0$ and $c(w_1) = c(w_1') = 1$,
	either $w_0 = w_0'$ or $w_1 = w_1'$.
	\begin{itemize}
		\item If $w_0 = w_0'$, then there cannot exist any $s'' \in B_5^2(u_0) \setminus \set{s, s', w_0, w_0', w_1, w_1', u_0, u_1, u_2}$,
			contradicting our assumption about the size of $B_5^2(u_0)$:
			if $s''$ exists, then since it has at least 100 node-disjoint 2-paths to $u_2$,
			at least one of these paths, call it $w_0'', w_1''$, excludes nodes $s, s', w_0, w_0', w_1, w_1', u_0, u_1, u_2$.
			In addition, since $s' \in B_5^2(u_0)$, it also has at least one additional 2-path to $u_2$, call it $x_0, x_1$,
			which excludes nodes $s, s'', w_0, w_0', w_1, w_1', u_0, u_1, u_2, w_0'', w_1''$.
			We therefore have the following 10-cycle:
			$s'', w_0'', w_1'', u_2, x_1, x_0, s', w_0' = w_0, s, u_0$.
			This contradicts our assumption that $s, s', s''$ are free neighbors of $u_0$.
		\item If $w_0 \neq w_0'$ but $w_1 = w_1'$: since $s' \in B_5^2(u_0)$, it has at least one additional 2-path to $u_2$, call it $x_0, x_1$,
			which excludes nodes $\set{ s, w_0, w_0', w_1 = w_1', u_0, u_1, u_2}$.
			Therefore the following 10-cycle is in the graph:
			$s, w_0, w_1 = w_1', w_0', s', x_0, x_1, u_2, u_1, u_0$.
			Again, this contradicts our assumption that $s, s'$ are free neighbors of $u_0$.
	\end{itemize}
	We see that each $s \in B_5^2(u_0)$ contributes at least 100 2-paths to $u_2$,
	which are node-disjoint from the 2-paths contributed by any other node in $B_5^2(u_0)$,
	and therefore we can charge each node in $B_5^2(u_0)$ with 100 $1$-colored vertices in the neighborhood of $u_2$
	(which are not double-charged to any other node in $B_5^2(u_0)$).
	It follows that $\deg(u_2) \geq 100B_5^2(u_0)$.
	Since we assume that $\deg(u_0) \geq \deg(u_2)$ and that $\deg'(u_0) > \deg(u_0) / 2$,
	we get that $|B_5^2(u_0)| \leq \deg(u_2) / 100 \leq \deg(u_0) / 100 < \deg'(u_0) / 50$,
	a contradiction to our assumption that $B^2_5(u_0)$ is large.
\end{proof}

Let $B_5^3(u_0)$ be the set of free neighbors of $u_0$
that have at least 10 node-disjoint 3-paths to $u_3$.
\begin{lemma}
	We have $|B_5^3(u_0)| \leq 1$.
	\label{lemma:C10_u3}
\end{lemma}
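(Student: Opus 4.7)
The plan is to mirror the strategy used in Lemma~\ref{lemma:C10_u2}: assume for contradiction that $s, s' \in B_5^3(u_0)$ with $s \neq s'$, and then construct a 10-cycle passing through both $s$ and $s'$, which contradicts the fact that they are free neighbors of $u_0$. The candidate cycle will be
\begin{equation*}
  s \to w_0 \to w_1 \to w_2 \to u_3 \to w_2' \to w_1' \to w_0' \to s' \to u_0 \to s,
\end{equation*}
where $\pi = (w_0,w_1,w_2)$ is a $3$-path from $s$ to $u_3$ and $\pi' = (w_0',w_1',w_2')$ is a $3$-path from $s'$ to $u_3$, chosen so that the cycle is simple.

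The selection of $\pi, \pi'$ proceeds in two steps. First I note that since $s$ has at least $10$ node-disjoint $3$-paths to $u_3$, at most one of them contains $s'$; hence $s$ has at least $9$ node-disjoint $3$-paths to $u_3$ that avoid $s'$, and symmetrically for $s'$. I pick any $3$-path $\pi' = (w_0',w_1',w_2')$ from $s'$ to $u_3$ that avoids $s$. Then, since the $9$ node-disjoint $3$-paths from $s$ to $u_3$ avoiding $s'$ use pairwise disjoint internal vertices, and $\pi'$ contains only three such vertices, at most three of them can intersect $\pi'$, so at least six remain that are node-disjoint from $\pi'$. I choose any such $\pi = (w_0,w_1,w_2)$.

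It remains to verify that the resulting closed walk is a simple $10$-cycle. The ten edges all exist: the three edges inside each of $\pi, \pi'$ come from the $3$-paths; $w_2 \in N(u_3)$, $w_2' \in N(u_3)$ and $w_0 \in N(s)$, $w_0' \in N(s')$ come from the definition of $3$-paths; and $s, s' \in N(u_0)$ because $s, s' \in B_5^3(u_0) \subseteq N(u_0)$. Distinctness of the ten vertices follows from three kinds of facts: (i) the color conditions $c(w_j) = c(w_j') = j$ for $j=0,1,2$ together with $c(u_0)=0, c(u_3)=3$ (under $\mathcal{E}_5$) separate the $w_j, w_j'$ from each other at different indices and from $u_0, u_3$; (ii) the definition of a $3$-path forces $\pi, \pi'$ to avoid the cycle prefix $u_0, u_1, u_2$ (and the color of $u_3$ excludes it); (iii) our choices ensure that $\pi$ avoids $s'$, $\pi'$ avoids $s$, $\pi$ and $\pi'$ are node-disjoint, and $s \neq s'$.

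The main technical point is the counting argument that produces node-disjoint $\pi, \pi'$ avoiding the two ``heads'' $s, s'$; this is where the threshold ``$10$'' in the definition of $B_5^3(u_0)$ is used (we only need that the threshold exceeds $1 + 3 = 4$, leaving genuine slack). Once the disjointness is in place, checking that the candidate $10$-cycle is simple is just a routine case analysis, and the contradiction with the freeness of $s$ (and $s'$) closes the proof.
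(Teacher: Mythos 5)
Your proof is correct and follows essentially the same approach as the paper's: pick a $3$-path of $s$ and a $3$-path of $s'$ that together with the edges $s u_0$, $s' u_0$, $w_2 u_3$, $w_2' u_3$ close up a simple $10$-cycle, contradicting freeness of $s, s'$. You are in fact a bit more careful than the paper, which fixes $\pi$ arbitrarily and only forces $\pi'$ to avoid $s$ and $\pi$, leaving the possibility that $\pi$ contains $s'$ unaddressed; your symmetric choice (first $\pi'$ avoiding $s$, then $\pi$ avoiding $s'$ and $\pi'$) closes that small gap.
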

\begin{proof}
	Suppose not, and let $s, s' \in B_5^3(u_0)$ be distinct nodes.
	Let $w_0, w_1, w_2$ be a 3-path of $s$ to $u_3$,
	and let $w_0', w_1', w_2'$ be a 3-path of $s'$ to $u_3$,
	which avoids nodes $s, w_0, w_1, w_2, u_0, u_1, u_2$ (such a path exists, since every node in $B_5^3(u_0)$
	has at least 10 node-disjoint 3-paths to $u_3$).
	Then the following 10-cycle is in the graph:
	$s, w_0, w_1, w_2, u_3, w_2', w_1', w_0', s', u_0$.
	Therefore nodes $s, s'$ are not free, a contradiction.
\end{proof}

For any $k$, the ``last node in the proof'', $u_{k-1}$, is the easiest to handle, using the following observation:
\begin{observation}
	For any $k \geq 2$, if $u_0,\ldots,u_{2k-1}$ is a $2k$-cycle in the graph,
	and 
	$s \in N(u_0)$ is free,
	then $s$ does not have a $(k-1)$-path $w_0, \ldots,w_{k-2}$ to $u_{k-1}$ which is node-disjoint from $u_0,\ldots,u_{k-2}$.
	\label{obs:last_u}
\end{observation}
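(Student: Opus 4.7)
The plan is a proof by contradiction. Assuming $s \in N(u_0)$ is free and that a $(k-1)$-path $w_0, \ldots, w_{k-2}$ from $s$ to $u_{k-1}$ exists, node-disjoint from $u_0, \ldots, u_{k-2}$, I will exhibit a $2k$-cycle through $s$, contradicting freeness. The candidate cycle is
\[
C\colon\ s,\ w_0,\ w_1,\ \ldots,\ w_{k-2},\ u_{k-1},\ u_{k-2},\ \ldots,\ u_1,\ u_0,\ s.
\]
The edge count is $1 + (k-2) + 1 + (k-1) + 1 = 2k$: the edge $\set{s,w_0}$ (since $w_0 \in N(s)$), the $k-2$ internal edges of the path $\pi = w_0 \cdots w_{k-2}$, the edge $\set{w_{k-2}, u_{k-1}}$ (since $w_{k-2} \in N(u_{k-1})$), the $k-1$ edges $\set{u_{i+1},u_i}$ along the reversed prefix $u_{k-1} \cdots u_0$ of the fixed $2k$-cycle, and finally $\set{u_0,s}$.

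The main work is verifying that the $2k$ listed vertices of $C$ are pairwise distinct. The easy collisions are immediate: $u_0, \ldots, u_{k-1}$ are distinct as consecutive vertices of the fixed $2k$-cycle; $w_0, \ldots, w_{k-2}$ are distinct as vertices of a path; $w_j \neq u_i$ for $i \leq k-2$ by the node-disjointness clause of the $(k-1)$-path definition; and $s \neq u_0$, $s \neq w_0$, $w_{k-2} \neq u_{k-1}$ by the absence of self-loops in the graph. The less routine cases split into three: any collision $s = u_i$ with $i \in \set{1,\ldots,k-1}$ would place $s$ on the fixed $2k$-cycle $u_0,\ldots,u_{2k-1}$, directly contradicting freeness; any collision $w_j = u_{k-1}$ with $j < k-2$ is ruled out by the color constraint $c(w_j) = j < k-1 = c(u_{k-1})$ built into the $(k-1)$-path definition, combined with the coloring $c(u_i) = i$ of the cycle nodes under the event $\mathcal{E}_k$ in whose context this observation is invoked; and any collision $s = w_j$ for $j \geq 1$ may be eliminated by taking $\pi$ to be a minimum-length $(k-1)$-path from $s$ to $u_{k-1}$, since otherwise the suffix $s = w_j, w_{j+1}, \ldots, w_{k-2}$ would give a valid shorter colored path, contradicting minimality.

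The main obstacle is precisely this simplicity check: the "front half" $s \to w_0 \to \cdots \to w_{k-2} \to u_{k-1}$ and the "back half" $u_{k-1} \to u_{k-2} \to \cdots \to u_0 \to s$ could in principle overlap, so I must rule out every cross-collision. The above argument handles all such cases by leveraging freeness for $s$-vs-$u_i$ collisions, the built-in color constraints of the $(k-1)$-path for $w_j$-vs-$u_{k-1}$ collisions, and a minimality choice of $\pi$ for $s$-vs-$w_j$ collisions. Once simplicity is certified, $C$ is a simple $2k$-cycle containing $s$, contradicting the freeness of $s$ and completing the proof.
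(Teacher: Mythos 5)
Your proof uses the same cycle and the same contradiction-from-freeness as the paper: the paper's own proof is a one-liner that simply names the cycle $s, u_0, u_1, \ldots, u_{k-1}, w_{k-2}, \ldots, w_0$ with no simplicity check, and you present the same cycle traversed in reverse, so the approach is identical and your effort to actually verify simplicity is a reasonable expansion. Most of your check is correct, but the minimality argument for the collision $s = w_j$ with $j \geq 1$ does not hold up. By definition every $(k-1)$-path from $s$ to $u_{k-1}$ consists of exactly $k-1$ nodes $w_0,\ldots,w_{k-2}$ with prescribed colors $0,\ldots,k-2$, so there is no ``shorter $(k-1)$-path'' to pass to; moreover, the proposed suffix $w_j,\ldots,w_{k-2}$ starts at color $j \neq 0$, so it fails the color constraint of an $l$-path for every $l$ and cannot ``contradict minimality'' of anything. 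And the collision $s = w_j$ is genuinely not excluded by the observation's hypotheses as literally stated: if it held, the constructed closed walk would visit $s$ twice and decompose into two shorter cycles through $s$, neither of which need have length exactly $2k$, so freeness alone gives no contradiction. The intended reading, consistent with how the paper uses this observation in Corollary~\ref{cor:C10_u4}, is that an $(i,b)$-path of $s$ to $u_{b\cdot i}$ is a \emph{simple} path with $s$ and $u_{b\cdot i}$ as distinct endpoints, which forces $s, u_{k-1} \notin \set{w_0,\ldots,w_{k-2}}$ by fiat; and in that corollary the $d_k \geq k$ node-disjoint paths always allow one to pick a path that avoids $s$ outright. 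So the right fix is to invoke that convention (or to pick a path avoiding $s$ among the node-disjoint family), not a non-existent minimality.
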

\begin{proof}
	If such a path existed, then we would have the following $2k$-cycle in the graph:
	$s, u_0,u_1, \ldots,u_{k-1}, w_{k-2},\ldots, w_0$.
	Therefore $s$ would not be a free node, contradicting our assumption.
\end{proof}

\begin{corollary}
	If $d_k \geq k$, then $B_k^{k-1} = \emptyset$.
	\label{cor:C10_u4}
\end{corollary}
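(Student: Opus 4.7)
The plan is to show that Corollary~\ref{cor:C10_u4} follows almost immediately from Observation~\ref{obs:last_u}. Recall that the set $B_k^{k-1}(u_0)$ is defined (for either $b = +1$ or $b = -1$) as the set of free neighbors $s \in N(u_0)$ that have at least $d_k$ node-disjoint $(b, k-1)$-paths to $u_{b\cdot(k-1)}$. By clause (3) of the definition of a $(b, i)$-path, any such path $w_0, \ldots, w_{k-2}$ is, by construction, node-disjoint from the prefix $u_0, u_b, \ldots, u_{b\cdot(k-2)}$ of the fixed cycle. Thus the hypothesis of Observation~\ref{obs:last_u} applies verbatim: a $(b, k-1)$-path of $s$ is exactly a path of the kind the observation forbids.

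I would therefore proceed as follows. Fix any free neighbor $s \in N(u_0)$ and any $b \in \set{-1, +1}$. Applying Observation~\ref{obs:last_u} (to the fixed cycle $u_0, u_b, u_{2b}, \ldots, u_{b\cdot(2k-1)}$, reading indices cyclically) shows that $s$ has no $(b, k-1)$-path at all to $u_{b\cdot(k-1)}$: if it did, concatenating the path with the cycle prefix would yield a $2k$-cycle through $s$, contradicting freeness. Hence the number of node-disjoint $(b, k-1)$-paths of $s$ is $0$, which is strictly less than $d_k$ under the hypothesis $d_k \geq k \geq 2$. Consequently $s \notin B_k^{k-1}(u_0)$ for either sign of $b$, and so $B_k^{k-1}(u_0) = \emptyset$, as claimed.

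There is no real obstacle here: the content of the corollary is essentially a re-packaging of Observation~\ref{obs:last_u} to fit the $B_k^{b,i}$ bookkeeping used in the main argument. The only thing to double-check is that the definitional clause requiring disjointness from the cycle prefix precisely matches the hypothesis of the observation, which it does by inspection. The condition $d_k \geq k$ is stronger than strictly necessary (any $d_k \geq 1$ suffices once the observation is applied), but it is the uniform threshold used elsewhere in the $k = 5$ analysis, which is why it is recorded in the statement.
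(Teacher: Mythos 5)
Your proof is correct, and it invokes the same key ingredient as the paper, namely Observation~\ref{obs:last_u}, but it reaches the conclusion by a slightly cleaner route. The paper's proof uses a pigeonhole argument: since $s$ has at least $d_k \geq k$ node-disjoint $(k-1)$-paths and the prefix $u_0,\ldots,u_{k-2}$ contains only $k-1$ nodes, at least one path must avoid the prefix, and then Observation~\ref{obs:last_u} yields the contradiction. You instead observe that clause~(3) of the definition of an $(i,b)$-path already forces every $(k-1)$-path to be node-disjoint from the prefix $u_0,u_b,\ldots,u_{b(k-2)}$, so Observation~\ref{obs:last_u} applies to \emph{every} $(k-1)$-path of $s$, not merely to one chosen by counting. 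This bypasses the pigeonhole step entirely and, as you correctly note, shows that the hypothesis $d_k\geq k$ is stronger than needed (a single $(k-1)$-path already gives a contradiction). You also correctly handle the $b=-1$ case by the symmetry/relabeling the paper itself invokes. This is a modest but genuine simplification of the paper's argument and is sound as written.
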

\begin{proof}
	Suppose for the sake of contradiction that there is some node $s \in B_k^{k-1}(u_0)$.
	Since $s$ has at least $k-1$ node-disjoint 4-paths,
	at least one of these paths avoids nodes $u_0, u_1,\ldots,u_{k-2}$.
	By Observation~\ref{obs:last_u}, this cannot be.
\end{proof}

\subsection{Exact Algorithm for Computing the Girth in \congest}
We show that we can exactly compute the girth $g$ of a graph in time $g \cdot n^{1-1/\Theta(g)}$ in \congest.
For $g \geq \log n$, we can cap the running time at $O(n)$, because a graph with girth $\geq \log n$ has $O(n)$ edges;
thus, the running time is $O(\min \set{ g \cdot n^{1-1/\Theta(g)}, n})$.

We say that a $k$-cycle is \emph{light} if all of its nodes have degree at most $n^{\delta_k}$, where $\delta_k = k/2$ if $k$ is even, and $\delta_k = (k-1)/2$ if $k$ is odd.

The meta-algorithm is as follows:
first, we search for triangles, which can be detected in time $\tilde{O}(n^{1/3})$ using the algorithm of~\cite{CS19}.
Any node that finds a triangle outputs ``3'' for the girth.
We proceed to search for $k$ cycles for $k = 4,\ldots$:
\begin{enumerate}
	\item Search for light $k$-cycles, by simultaneously starting a depth-$\lceil k/2 \rceil$ BFS on the subgraph
		of nodes that have degree at most $n^{\delta_k}$:
		each node $u$ with $\deg(u) \leq n^{\delta_k}$ initiates a BFS, by sending a BFS token to its neighbors;
		the BFS token carries the ID of the node that originated it, and the number of hops it has traveled.
		Nodes with degree at most $n^{\delta_k}$ participate in the BFS by forwarded BFS tokens that they receive,
		increasing their hop-count, until a maximum of $\lceil k/2 \rceil$ hops
		(of course, since we are carrying out a BFS, tokens are forwarded only once).

		If node $u$ receives the BFS token of a node $v$ from two distinct neighbors of $u$, such that the total number of hops
		traveled on one side is $\lfloor k/2 \rfloor$ and on the other $\lceil k/2\rceil$, then node $u$ rejects and outputs $k$.
	\item Search for heavy $k$-cycles, by sampling a uniformly random node $s \in V$,
		\begin{enumerate}
			\item Carrying out a $k$-round BFS from $s$, to check if $s$ itself is on a $k$-cycle;
				if node $s$ receives its own BFS token back from some neighbor, it halts and outputs k.
			\item 
			Starting a depth-$\lceil k/2\rceil$ BFS from all neighbors of node $s$.
		Now, each node is allowed to forward only \emph{one} BFS token, after which it stops forwarding tokens.
		Again, if some node $u$ receives the BFS token of a node $v$ from two distinct neighbors, with $\lfloor k / 2 \rfloor$ and $\lceil k/2\rceil$
		hops traveled on the two sides (resp.),
		it halts and outputs $k$.
		\end{enumerate}
		We repeat this entire step (sampling $s$, etc.) $R = \Theta(n^{1-\delta_k})$ times.
\end{enumerate}
For a given $k$, steps (1)-(2) above are called \emph{phase $k$} of the algorithm.

Observe that if $k$ is even, then a $k$-cycle $u_0,\ldots,u_{k-1}$ is detected when node $u_{k/2}$ receives the BFS token of $u_0$ from its neighbors $u_{k/2-1}$ and $u_{k/2+1}$, with a hop count of $k/2$ on both sides;
if $k$ is odd, then a $k$-cycle $u_0,\ldots,u_{k-1}$ is detected by node $u_{(k-1)/2}$, which receives $u_0$'s token from $u_{(k-3)/2}$ and in the next round from $u_{(k+1)/2}$, with hop counts of $(k-1)/2$ and $(k+1)/2$, respectively; and simultaneously, the cycle is also detected by node $u_{(k+1)/2}$,
which receives $u_0$'s token first from $u_{(k+3)/2}$ and then from $u_{(k-1)/2}$.

\begin{lemma}
	If some node halts in phase $k$, and the graph does not contain any cycle of length less than $k$, then the graph contains a $k$-cycle.
	\label{lemma:no_false}
\end{lemma}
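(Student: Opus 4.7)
The plan is to case-split on how a node can halt in phase $k$, and in each case exhibit a simple $k$-cycle in $G$ using the assumption that no shorter cycle exists. The three halting triggers are: (i) during the light-cycle search, some node $u$ receives the BFS token of some origin $v$ from two distinct neighbors with hop counts $\lfloor k/2\rfloor$ and $\lceil k/2\rceil$; (ii) during the heavy-cycle search, the sampled node $s$ receives its own token back after $k$ hops; or (iii) during the heavy-cycle search, again some node $u$ receives $v$'s token from two distinct neighbors at the prescribed hop counts. The unifying observation is that for any fixed origin $v$, the set of nodes that forwarded $v$'s token, together with the ``first-receive'' parent pointers, forms a tree $T_v$ rooted at $v$. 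This holds even under the ``forward only one token'' rule of the heavy search, because a node's parent with respect to $v$'s token is fixed the moment it first accepts and forwards that token.

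For triggers (i) and (iii), let $P_1$ be the $T_v$-path from $v$ to $a$ extended by the edge $(a,u)$, and $P_2$ be the $T_v$-path from $v$ to $b$ extended by the edge $(b,u)$; these are simple paths of lengths $\lfloor k/2\rfloor$ and $\lceil k/2\rceil$ respectively. Since they are tree paths that both start at $v$, they share a common prefix up to some node $w$, after which they diverge and are internally node-disjoint. If $w=v$, then $P_1$ and $P_2$ together form a simple closed walk of length $k$ which is exactly the $k$-cycle we wanted. Otherwise, writing $d\ge 1$ for the tree-distance from $v$ to $w$, the suffixes of $P_1$ and $P_2$ from $w$ to $u$ are two internally node-disjoint paths of lengths $\lfloor k/2\rfloor - d$ and $\lceil k/2\rceil - d$, forming a cycle at $w$ of length $k-2d < k$; this contradicts the assumption that $G$ contains no cycle shorter than $k$, so $w = v$ must hold. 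For trigger (ii), the token of $s$ returns via a walk $s=w_0,w_1,\dots,w_{k-1},w_k=s$, where the prefix $w_0,\dots,w_{k-1}$ is a simple tree path in $T_s$ (tree paths have no repeated vertices), and $w_{k-1}$ is a neighbor of $s$. Since $w_{k-1}$ lies at depth $k-1\ge 2$ in $T_s$ while $w_1$ lies at depth $1$, they are distinct, so $s,w_1,\dots,w_{k-1},s$ is a simple $k$-cycle.

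The main technical point, and the only nontrivial obstacle, is justifying the tree structure $T_v$ in the heavy variant (iii), where multiple origins compete and a node forwards at most one token. The proof is that if a node $x\ne v$ ever forwarded $v$'s token, this was the one token it committed to, and the forward was triggered by a specific earlier-arriving copy from a unique neighbor — that neighbor serves as $x$'s parent in $T_v$ — so the forwarding pattern restricted to $v$'s token is indeed acyclic and rooted at $v$. Once this is in place, the same tree-path divergence argument above applies uniformly to all cases. The fact that the light-cycle BFS runs only on the low-degree subgraph is harmless: any cycle produced inside that subgraph is still a cycle of $G$.
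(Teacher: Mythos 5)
Your proposal is correct and follows essentially the same approach as the paper: in the two-token-arrivals case, form the two walk-paths from the origin $v$ to $u$, argue they must be internally disjoint (else a shorter cycle appears), and conclude that gluing them at $v$ and $u$ gives a simple $k$-cycle; in the returning-own-token case, observe that the token traveled a simple path and the closure is a $k$-cycle. Your LCA/common-prefix phrasing is an equivalent repackaging of the paper's argument, which takes the minimum indices $i,j$ with $x_i = y_j$ and extracts the cycle $x_0,\ldots,x_i=y_j,\ldots,y_0$ of length $i+j<k$. One small point: in your case (ii) you write the returning walk as $s=w_0,\dots,w_{k-1},w_k=s$, i.e., as already having exactly $k$ edges. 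The algorithm lets $s$ halt whenever its token returns at any hop $\le k$, so one must first invoke the no-shorter-cycle hypothesis to rule out a return at hop $\ell+1 < k$ before concluding the walk has length exactly $k$ (as the paper does explicitly). This is minor and fixable by the same reasoning you use elsewhere.
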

\begin{proof}
	Suppose node $u$ outputs $k$, after receiving 
	the token of node $v$ from two neighbors $w_1 \neq w_2$, with a hop-count of $\lfloor k/2 \rfloor$ on $w_1$'s side and $\lceil k/2 \rceil$ on $w_2$'s side.
	Then the graph contains paths $v = x_0,\ldots,x_{\lfloor k/2 \rfloor - 1}=w_1, u$
	and $v = y_0,\ldots,y_{\lceil k/2 \rceil -1}=w_2, u$,
	which together form a $k$-cycle.
	Moreover, the $k$-cycle is simple, as with the exception of $v = x_0 = y_0$ and $u$, these paths share no nodes:
	if there were some $i, j > 0$ such that $x_i = y_j$,
	then, taking the minimum such $i$ and, after fixing $i$, the minimum such $j$,
	the simple cycle $x_0,\ldots,x_i = y_i, y_{i-1},\ldots,y_0 = x_0$ would be in the graph,
	and its length would be $i + j$.
	Either $i < \lfloor k/2 \rfloor$ or $j < \lceil k/2 \rceil$ (or both),
	so $i + j < \lfloor k / 2 \rfloor + \lceil k / 2 \rceil = k$,
	but we assumed that the graph contains no cycles of length less than $k$.

	The remaining case is that in one of the iterations, a sampled node $s$ receives its own token back
	while carrying out a $k$-round BFS.
	Then $s$ participates in a $k$-cycle:
	let $s = v_0, v_1,\ldots,v_{\ell}$, $\ell \leq k-1$, be the path traveled by the token, with node $v_{\ell}$
	forwarding the token back to $s$.
	Since the graph does not contain any cycles of length less than $k$,
	we must have $\ell = k - 1$, and all nodes $v_0,\ldots,v_{k-1}$ must be distinct.
	Therefore the $k$-cycle $s = v_0,\ldots,v_{k-1}$ is in the graph.
\end{proof}

\begin{lemma}
	If we reach phase $k$, and the graph contains a $k$-cycle and has no cycles of length less than $k$,
	then with probability at least $2/3$, some node rejects in phase $k$.
	\label{lemma:reject_girth}
\end{lemma}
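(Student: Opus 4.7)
I would split the analysis into two cases depending on whether $G$ contains a \emph{light} $k$-cycle (one all of whose nodes have degree at most $n^{\delta_k}$). In the first case, fix a light $k$-cycle $u_0,\ldots,u_{k-1}$; all of its nodes qualify for the low-degree BFS of step~(1), and because the hypothesis of the lemma rules out any cycle of length less than $k$, $u_0$'s token travels both arcs of the cycle without interference. A direct trace then shows that for even $k$ the antipode $u_{k/2}$ receives $u_0$'s token from $u_{k/2-1}$ and $u_{k/2+1}$ with hop-count $k/2$ on each side and rejects, and for odd $k$ both middle nodes $u_{(k-1)/2}$ and $u_{(k+1)/2}$ receive $u_0$'s token from their two cycle-neighbors at hop-counts $(k-1)/2$ and $(k+1)/2$ and reject. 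So Case~1 succeeds with probability $1$.

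In the second case, every $k$-cycle has some node of degree exceeding $n^{\delta_k}$. Fix one such cycle $u_0,\ldots,u_{k-1}$ and a heavy node $u_j$ on it. In a single sampling iteration of step~(2), the uniformly random $s$ lies in $\{u_j\}\cup N(u_j)$ with probability at least $(\deg(u_j)+1)/n>n^{\delta_k-1}$. If $s=u_j$, then the $k$-round BFS of step~(2a) returns $s$'s own token to $s$ around the cycle and $s$ rejects. If $s\in N(u_j)\setminus\{u_j\}$, then $u_j$ itself is one of the BFS initiators in step~(2b), and I would show that $u_j$'s token reaches the antipodal cycle node(s) along both arcs, forcing rejection. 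A single iteration thus succeeds with probability $\Omega(n^{\delta_k-1})$, and over $R=\Theta(n^{1-\delta_k})$ independent iterations a standard Chernoff-type amplification lifts the overall success probability above $2/3$.

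The main obstacle is the propagation claim in Case~2 with $s\in N(u_j)\setminus\{u_j\}$: under the ``one token per node'' restriction, one must still guarantee that $u_j$'s token in particular travels along both arcs. I would argue this by induction on the distance $\ell$ along the arc, leveraging the girth-$\geq k$ hypothesis as follows. For each intermediate cycle node $u_{j\pm\ell}$ with $1\leq \ell<\lceil k/2\rceil$, the token that $u_{j\pm\ell}$ forwards is $u_j$'s, because any competing token originating from a different neighbor $v\in N(s)\setminus\{u_j\}$ would need to reach $u_{j\pm\ell}$ along an alternative path of length at most $\ell$; concatenating that path with the cycle arc back to $u_j$ and the two edges from $u_j$ to $s$ and from $s$ to $v$ produces a closed walk of length at most $k$, which by standard shortening arguments either contains a cycle of length strictly less than $k$ (ruled out by hypothesis) or coincides with a distinct $k$-cycle through $s$ and $v$, in which case that cycle itself would be detected by the same algorithm running on it. Either way, rejection occurs, completing the proof.
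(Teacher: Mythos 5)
Your proposal follows essentially the same case split and proof structure as the paper: light cycles are caught directly by the low-degree color-free BFS of step (1), and heavy cycles are caught by sampling into the high-degree node's neighborhood, disposing of the case where $s$ itself lies on a $k$-cycle via step (2a), and then arguing that $u_j$'s token cannot be blocked on the forwarding nodes because a blocking competitor path, concatenated with a cycle arc and the edges through $s$, closes a short cycle through $s$. This is exactly the paper's ``blocking implies a short cycle through $s$'' argument, with the same amplification over $\Theta(n^{1-\delta_k})$ iterations. The only place you should be more careful is the precise arithmetic of the closed walk: the forwarding nodes on the longer arc sit at distance up to $\lceil k/2\rceil - 1$ from $u_j$, and with both the competing path and the cycle arc of that length plus the two edges through $s$, the walk can reach length $2\lceil k/2\rceil$, which for odd $k$ is $k+1$ rather than ``at most $k$'' as you wrote; you would need to either observe that a \emph{strict} early arrival forces length $\le k$ (and argue separately that ties are benign or broken consistently), or otherwise tighten the bound. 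The paper's write-up elides this in a similar way (it states the cycle length as $i+j$ where a careful edge count gives $i+j+2$), so this is not a divergence in approach, just a spot where a fully rigorous version of either argument needs an extra line.
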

\begin{proof}
	Fix a cycle $u_0,\ldots,u_{k-1}$ of length $k$.
	If the cycle is light, it will be found in step (1) of the algorithm:
	since each cycle node has degree at most $n^{\delta_k}$, all these nodes participate in the BFS.
	%and the $i$-neighborhood of each $u_i$ or $u_{-i}$ where $i \in \left[ \lceil k / 2 \rceil \right]$ contains at most $n^{i\cdot \delta_k}$ distinct nodes.
	%Therefore, $u_i$ and $u_{-i}$ do not exceed their budget for $i$-hop tokens, and are able to forward all of them.

	If $k$ is even, then nodes $u_{k/2-1}$ and $u_{k/2+1}$ are able to forward the BFS token of $u_0$ to $u_{k/2}$, and 
	it arrives with hop count $k/2$ on both sides; therefore node $u_{k/2}$ rejects.
	If $k$ is odd, then node $u_{(k-3)/2}$ is able to forward the token of $u_0$ with hop count $(k-1)/2$,
	and node $u_{(k+1)/2}$ is able to forward the token of $u_0$ with hop count $k-(k+1)/2+1 = (2k-k-1+2)/2 = (k+1)/2$,
	causing node $u_{(k-1)/2}$ to reject.

	Now suppose that the cycle is heavy, and that node $u_0$ has $\deg(u_0) \geq n^{\delta_k}$.
	Then when we sample a uniformly random node $s$, with probability at least $n/\deg(u_0) \geq n^{1-\delta_k}$,
	we have $s \in N(u_0)$.
	When this occurs, we find the cycle:
	if node $s$ itself participates in a $k$-cycle, 
	then its $k$-round BFS will detect the cycle,
	because $k$ rounds suffice for the BFS token of $s$ to return to it.
	Otherwise, every neighbor of $s$, including $u_0$, starts a BFS.
	If the BFS of $u_0$ is able to traverse both paths $u_0, u_1,\ldots,u_{\lfloor k / 2 \rfloor}$ and $u_0,u_{k-1},\ldots, u_{\lfloor k/2 \lfloor}$,
	then node $u_{\lfloor k / 2 \rfloor}$ receives it with hop counts $\lfloor k / 2 \rfloor$ and $\lceil k / 2 \rceil$ respectively, and it rejects.

	Recall that in order for the BFS token of $u_0$ to traverse these paths, it must be the first token received by each node on the path.
	We show that the BFS token of $u_0$ cannot be blocked on either side, as that would imply the presence of a smaller cycle:
		suppose some node $u_{i}$ or $u_{-i}$, $i \leq \lfloor k / 2 \rfloor$, receives the BFS token of a node $w_0 \neq u_0$
			before or at the same time as it receives $u_0$'s token.
			Let $i$ be minimal, and assume w.l.o.g.\ that $u_i$ receives the token ($u_{-i}$ is symmetric, since we take $i \leq \lfloor k/2 \rfloor$).
			Then there exists a path $w_0, w_1,\ldots,w_j=u_i$ along which $w_0$'s token travels to $u_i$,
			where $w_0,\ldots,w_{j-1} \not \in \set{u_0,\ldots,u_i}$.
			Also, $w_0 \in N(s)$, since only neighbors of $s$ start a BFS.
			Therefore, the graph contains the cycle $s, w_0, \ldots,w_j=u_i, u_{i-1},\ldots,u_0$, whose length is $i+j$.
			Since $w_0$'s token arrives at $u_i$ with or before $u_0$'s token, we must have $j \leq i$.
			And since $i \leq \lfloor k/2 \rfloor$, the length of the other cycle is $i + j \leq 2\lfloor k/2 \rfloor \leq k$.
			We see that for this to occur, node $s$ must participate in a cycle of length at most $k$, but we have already ruled out this possibility.

			This shows that node $u_0$'s token is able to traverse both paths above: it cannot be blocked until it
			is forwarded by $u_{\lfloor k / 2 \rfloor - 1}$
			and $u_{\lfloor k / 2 \rfloor + 1} = u_{- \left( \lfloor k / 2 \rfloor - 1\right)} $
			to $u_{\lfloor k / 2 \rfloor}$, which then rejects.
\end{proof}

The correctness of the algorithm are implied by the following:
\begin{corollary}
	If the girth of the graph is $g$,
	then no node halts in phase $k < g$.
	Moreover, with probability at least $2/3$, 
	some node halts in phase $g$ and outputs $g$.
	\label{cor:girth_congest}
\end{corollary}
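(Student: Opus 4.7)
The plan is to derive both parts of the corollary directly from Lemmas~\ref{lemma:no_false} and~\ref{lemma:reject_girth}, with an induction on the phase index $k$. The only structural observation is that since the algorithm runs phases $k=3,4,\ldots$ sequentially, ``reaching phase $k$'' is exactly the event that no node halted in any earlier phase; once this is set up, the corollary is essentially bookkeeping.

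For the first assertion, I would prove by induction on $k$ that, assuming the girth equals $g$, no node halts in phase $k$ for any $k<g$. In the inductive step, since the girth is $g$ the graph contains no cycle of length strictly less than $g$, and in particular no cycle of length less than $k$. If some node did halt in phase $k$, Lemma~\ref{lemma:no_false} would immediately produce a $k$-cycle, contradicting $k<g$. This also shows (deterministically) that every node that ever halts outputs a value at most $g$, because its output equals the phase in which it halted and no smaller value can be output.

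For the second assertion, the first part guarantees that phase $g$ is actually reached (no node has halted in any phase $k<g$). At this moment the graph contains a $g$-cycle, namely a girth cycle, and has no shorter cycles. Lemma~\ref{lemma:reject_girth} applied with $k=g$ then yields that with probability at least $2/3$ some node rejects during phase $g$, and by inspection of the algorithm any node that rejects in phase $g$ outputs exactly $g$. Combined with the first part, the first value output anywhere in the network is $g$, as desired.

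I do not expect a genuine obstacle here: the heavy lifting has already been done in Lemma~\ref{lemma:reject_girth} (finding heavy cycles by sampling a uniformly random node and exploiting that a high-degree cycle vertex has many neighbors) and in Lemma~\ref{lemma:no_false} (ruling out false positives by reconstructing a simple $k$-cycle from two BFS arrivals). The only mild care required is to thread the inductive hypothesis ``no earlier phase has halted'' into the ``we reach phase $k$'' precondition of Lemma~\ref{lemma:reject_girth}, and to check that the particular halting event in phase $g$ is accompanied by the output value $g$.
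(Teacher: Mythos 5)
Your proposal is correct and takes essentially the same approach as the paper: combine Lemma~\ref{lemma:no_false} to rule out early halts, deduce that phase $g$ is reached, and apply Lemma~\ref{lemma:reject_girth}. Two small remarks: the induction on $k$ is unnecessary, since ``some node halts in phase $k$'' already presupposes phase $k$ was reached and Lemma~\ref{lemma:no_false} applies directly; and the aside ``every node that ever halts outputs a value at most $g$'' should read ``at least $g$'' (the output equals the halting phase, and phases $<g$ never halt).
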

\begin{proof}
	By the first lemma, we see that when the girth is $g$, no node can halt at any phase $k < g$.
	Now consider phase $k = g$: there are no cycles of length less than $g$,
	and we already said that we do reach phase $k = g$,
	so by the second lemma, with probability at least $2/3$, some node halts.
\end{proof}

The running time of the algorithm is characterized as follows:
with probability at least $2/3$, after $g \cdot n^{1-\delta_g} = g \cdot n^{1 - 1/\lfloor g / 2 \rfloor}$ rounds,
some node halts (and outputs $g$).
However, this is not an upper bound on the \emph{expected} time until the first node halts.
If we want to bound the expected running time, we can increase the number of repetitions in each phase $k$ to $\Omega(n^{1-\delta_g} \cdot \log n)$,
so that the probability of not halting in phase $g$ is reduced to $1/n$.
The expected running time (until the first node halts) is then given by:
\begin{equation*}
	O(g \cdot n^{1 - \delta_g} \cdot \log n) \cdot \left( 1 - \frac{1}{n} \right) + O(n) \cdot \frac{1}{n} =
	O(g \cdot n^{1 - \delta_g} \cdot \log n).
\end{equation*}
(The second term uses the fact that we can cap the running time at $O(n)$ rounds, by switching to learning the entire graph if $g > \log n$.)

\subsection{Implementation of the Exact Girth and Even Cycle Algorithm in \congest}

Our implementation of the meta-algorithm for finding heavy cycles avoids sampling one node $s$ uniformly at random,
because we cannot do so in the \congest model without incurring an additive overhead of $\Omega(D)$.

Notice that in each iteration of the meta-algorithm, nodes can take on one of the following roles:
\begin{itemize}
	\item Type $\mathcal{S}$: node $s$, a unique randomly-sampled node selected in step (1) of the meta-algorithm.
	\item Type $\mathcal{NS}$: neighbors of node $s$. Each neighbor of $s$ that is colored 0 initiates a BFS in step (4) of the meta-algorithm.
	\item Type $\mathcal{O}$: all other nodes. These nodes forward BFS tokens that reach them (assuming there are not too many),
		and reject if they are colored $k$ and receive the same BFS token along two disjoint paths.
\end{itemize}
The steps taken by each node in the meta-algorithm depend only on the type it is assigned.

The implementation $\mathcal{A'}$ 
is similar to the meta-algorithm $\mathcal{A}$, but it
executes each of the $R' = \Theta(n^{1-1/k})$ iterations as follows:
\begin{enumerate}
	\item Each node $u$ chooses a random priority $p(u) \in [n^3]$. 
	\item For $2R \cdot R'$ rounds, each node $u$ forwards the smallest priority it has received so far.
		This priority is stored in the local variable $\var{pmin}(u)$.
	\item If node $u$ has $\var{pmin}(u) = p(u)$ (i.e., node $u$ has not heard any priority smaller than its own),
		it sets $\var{type}(u) = \mathcal{S}$, and informs all its neighbors.
		The neighbors $v \in N(u)$ then set $\var{type}(v) = \mathcal{NS}$.
		Nodes $u$ that do not have $\var{pmin}(u) = p(u)$ and do not have a neighbor $v$ with $\var{pmin}(v) = p(v)$
		set $\var{type}(u) = \mathcal{O}$.
	\item We now execute steps (2)-(5) of the meta-algorithm, with each node following the role it was assigned above.
\end{enumerate}

Let $\mathcal{U}$ be the event that there are no collisions in the choice of priorities, i.e., for each $u \neq v$ we have $p(u) \neq p(v)$.
This occurs with probability at least $1 - 1/n$.
Conditioned on $\mathcal{U}$, let $t$ be the uniformly-random node that has the smallest priority, $p(t) = \min_{v \in V} p(v)$,
and let $U$ be the $2R\cdot R'$-neighborhood of $t$.
Then after step (2), node $t$ is the only node in $U$ that sets $\var{type}(t) = \mathcal{S}$,
and its neighbors $v \in N(t)$ are the only nodes in $U$ that set $\var{type}(v) = \mathcal{N}$.
Thus, inside $U$, the execution of $\mathcal{A}'$ is equivalent to the execution of the meta-algorithm $\mathcal{A}$
where we sample $s = t$, a uniformly random node.
Since $\mathcal{A}$'s running time is $R \cdot R'$ rounds, this suffices to ensure that a heavy $2k$-cycle will be detected with high probability.

\section{Barrier of $\Omega(n^{1/2+\alpha})$ for Lower Bounds on $C_6$-Freeness}
\label{sec:barrier}
In this section, we show that for any $\alpha > 0$, an $\Omega(n^{1/2+\alpha})$ lower bound for $C_{6}$-freeness in CONGEST implies strong circuit complexity lower bounds. 

The proof is as follows:
\begin{enumerate}
	\item First, we reduce the problem of $C_6$-freeness to directed triangle freeness. We do so by showing that given an algorithm $\mathcal{A}_1$ for solving directed triangle freeness in $O(T_1(n))$ rounds, we can solve $C_6$-freeness in $\widetilde{O}(\sqrt{n} \cdot T_1(n))$ rounds w.h.p.;
		thus, if we can prove a lower bound on $C_6$-freeness, we also obtain a lower bound on directed triangle freeness.
	\item It is already known that proving an $\Omega(n^{\alpha})$ lower bound on \emph{undirected} triangle-freeness would imply new
		and powerful circuit lower bounds \cite{EFFKO19};
		the same argument also holds for \emph{directed} triangle-freeness. For the sake of completeness, we give the full argument in Appendix~\ref{sec:app_c6_barrier} below.
\end{enumerate}

The reduction from $C_6$ to directed triangles works as follows: the network runs $R'= O(\sqrt{n}\log{n})$ iterations of the heavy cycles finding procedure of $C_6$. By the correctness of the $C_6$-finding algorithm, if there is a $C_6$ copy with at least one node $v$ with $\deg(v) \geq \sqrt{n}$, then the network rejects with high probability. Otherwise, the network removes all nodes with $\deg(v) \geq \sqrt{n}$ from $G$.

Each node $v \in V$ proceeds to choose a random color $c(v) \in [6]$. Let $G'=(V,E')$ be the directed graph where $(u,v) \in E'$ if and only if there exists $w \in V$ such that both $(u,w),(w,v) \in E$ and $c(u)+2 \equiv c(w)+1 \equiv c(v) \pmod{6}$. 

Similarly to Section~\ref{sec:app_congest}, we say that a $6$-cycle $\set{u_0,\dots,u_5}$ of $G$ is ``colored correctly'' if $c(u_i) = i$ for each $i=0,\dots,5$. The following claim establishes that such $6$-cycles exist if and only if $G'$ has a directed triangle.

\begin{claim}\label{claim:tri}
	$G'$ contains a directed triangle if and only if the colored $G$ contains a $C_6$ copy which is colored correctly.
\end{claim}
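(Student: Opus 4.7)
}
The plan is to verify both implications directly from the definition of $G'$, where $(u,v)\in E'$ precisely when there is a ``middle'' node $w$ such that $(u,w),(w,v)\in E$ and the colors satisfy $c(w)=c(u)+1$ and $c(v)=c(u)+2$ modulo $6$. So every edge of $G'$ corresponds to a colored path of length $2$ in $G$ whose three colors form an arithmetic progression $i,i+1,i+2\pmod 6$.

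For the forward direction, I would suppose that $G'$ contains a directed triangle $(a,b,c)$ with $(a,b),(b,c),(c,a)\in E'$. Reading the color constraint around the triangle gives $c(b)\equiv c(a)+2$, $c(c)\equiv c(b)+2$, and $c(a)\equiv c(c)+2\pmod 6$, which forces $\{c(a),c(b),c(c)\}$ to be a coset of $\{0,2,4\}$ in $\mathbb{Z}/6$. After cyclically relabeling we may assume $c(a)=0$, $c(b)=2$, $c(c)=4$. By definition of $E'$ there exist nodes $w_1,w_2,w_3\in V$ with $c(w_1)=1$, $c(w_2)=3$, $c(w_3)=5$ such that $(a,w_1),(w_1,b),(b,w_2),(w_2,c),(c,w_3),(w_3,a)$ are all edges of $G$. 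Because the six nodes $a,w_1,b,w_2,c,w_3$ receive the six distinct colors $0,1,2,3,4,5$ respectively, they are pairwise distinct, so this is a simple $C_6$ that is colored correctly.

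For the reverse direction, I would suppose $G$ contains a correctly colored $6$-cycle $u_0,u_1,u_2,u_3,u_4,u_5$ with $c(u_i)=i$. Then choosing the middle vertices $w=u_1,u_3,u_5$ in turn and checking the color arithmetic modulo $6$ shows that $(u_0,u_2),(u_2,u_4),(u_4,u_0)\in E'$, which is a directed triangle in $G'$.

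The main subtlety, and essentially the only place where anything has to be checked carefully, is simpleness of the $C_6$ produced in the forward direction: one must ensure the three ``middle'' nodes $w_1,w_2,w_3$ obtained from the definition of $E'$ are distinct from each other and from $a,b,c$. The color-pattern argument above handles this for free since the six nodes are forced to have six different colors; no case analysis on coincidences is needed. Everything else is a one-line verification of the modular color identities $c(u)+2\equiv c(w)+1\equiv c(v)\pmod 6$.
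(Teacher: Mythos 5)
Your proof is correct and takes essentially the same approach as the paper: read off the color arithmetic around the directed triangle, extract the three middle nodes from the definition of $E'$, and stitch them into a six-cycle; conversely, $(u_0,u_2,u_4)$ is a directed triangle via middles $u_1,u_3,u_5$. One small imprecision: cyclically relabeling the triangle cannot always force $c(a)=0$ (the triangle's colors could be $\{1,3,5\}$), but in that case the resulting six-cycle has colors $1,2,3,4,5,0$ in order, which is still a correctly colored $C_6$ after shifting the starting vertex; your key observation that the six distinct colors force the six nodes to be distinct (and hence the cycle to be simple) is valid in both cases and is a nice point that the paper leaves implicit.
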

\begin{proof}
	If $G'$ has a directed triangle $(v_1,v_2,v_3)$, then by definition the colors of its vertices hold $c(v_1)+4 \equiv c(v_2)+2 \equiv c(v_3) \pmod{6}$, and there exist $w_1,w_2,w_3$ of colors $c(v_1)+3,c(v_1)+1,c(v_1)-1 \pmod{6}$ respectively, such that $(v_1,w_1,v_2,w_2,v_3,w_3)$ is a $6$-cycle. On the other hand, if $G$ has a well colored $6$-cycle $(v_1,v_2,v_3,v_4,v_5,v_6)$, then $(v_1,v_3,v_5)$ is a triangle in $G'$.
\end{proof}

If there is a $C_6$-copy in $G$ then it becomes a correctly colored copy of $C_6$ with probability $\geq 1/6^6$. Therefore, if $G$ is not $C_6$-free, then after repeating this algorithm $O(1)$ times, with high probability at least once $G'$ has a directed triangle (from Claim \ref{claim:tri}). On the other hand, if $G$ is $C_6$-free then $G'$ never contains a directed triangle. This shows the correctness of the reduction.

We note that the network $G$ may simulate $G'$ in the following sense: first, every node $v \in V$ can learn its edges in $G'$ in $O(\sqrt{n})$ rounds:
recall that since every node with $\deg(v) \geq \sqrt{n}$ was deleted from the graph, so we can afford to have each node broadcasts all its remaining neighbors in $G$ and their colors; thus, every node learns its neighbors in $G'$.
We call a node $w$ a bridge between $u,v$ if it is a neighbor of both in $G$. We note that every $v$ also learns all its bridges to all of its neighbors in $G'$ in this process. Secondly, if the network $G$ wishes to simulate an $r$ round protocol on $G'$ where at the end each node $v \in V$ knows its output state, it can do so with $O(\sqrt{n} \cdot r)$ rounds in the following manner: for $i=1,\dots r$, let $M_i(v)$ be the set of messages $v$ wishes to send in the $i$-th round of the protocol, and for a message $m$ let $t(m)$ be the target node of that message. First, every node $v$ sends each of its neighbors $w$ in $G$ the subset of messages $\set{m \in M_i(v) \mid \text{$w$ is a bridge between $v$ and $t(m)$}}$. This can be done in $O(\sqrt{n})$ rounds as $\deg(w) \leq \sqrt{n}$ and therefore every $w$ is a bridge between $v$ and at most $O(\sqrt{n})$ other nodes. Following this, for every node $w \in V$ and message $m$ it received, $w$ sends $m$ to $t(m)$. As for any given node $t$ a node $w$ received at most $\sqrt{n}$ messages which have target $t$ (at most one from each of its neighbors), it can send these nodes their messages in $O(\sqrt{n})$ rounds. Overall, the simulation costs $O(r \cdot \sqrt{n})$ rounds.

\begin{figure}[h]
	\begin{center}
		\includegraphics[width=0.4\textwidth]{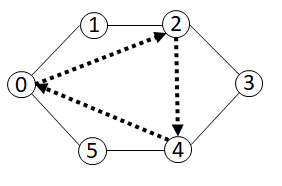}
	\end{center}
	\caption{\label{fig:barrier_reduction} Illustration of the reduction. The undotted edges are a $C_6$ cycle of $G$, and the dotted edges are a triangle of $G'$. We note that $G'$ has a second triangle on vertices $1,3,5$.}
\end{figure}

In Appendix~\ref{sec:app_c6_barrier}, for the sake of completeness, we follow the exact lines of \cite{EFFKO19} to show that for any $\alpha > 0$, showing a lower bound of $\Omega(n^\alpha)$ on directed triangle freeness implies strong circuit complexity lower bounds.

\section{Discussion: Intuition Regarding Round Complexity for $C_{2k}$ Detection in \congest}
\label{appendizx:c2kIntuition}

We believe that the $O(n^{1-1/k})$ round complexity for $C_{2k}$ detection in the \congest model is the best that can be achieved, barring some major improvement and a new approach which could have ramifications also for various other problems in the \clique model.
The reasons are as follows:

\subparagraph*{Listing 6-cycles in graphs with high conductance.}
As shown in \cite{CPZ19}, it takes $\Theta(n^{1-1/k+o(1)})$ rounds in the \congest model, \emph{in graphs with high conductance}, to perform subgraph \emph{listing} for $C_{2k}$. Therefore, primarily, if a faster algorithm for \emph{detection} is found for the general \congest model, it would imply a separation between detection and listing of $C_{2k}$ in the \congest model with high conductance. Such a separation could imply new algorithms for other problems related to subgraph listing also in the \clique model, due to the similarity between the \clique and \congest with high conductance models.

Further, notice it is highly likely that such an algorithm would function differently than any existing algorithm for $C_{2k}$ detection in the \clique model. The currently types of algorithms for $C_{2k}$ detection in the \clique model are split into several categories, as far as we are aware: (1) based on fast matrix multiplication, as in \cite{CHKKLPJ17}, and (2) based on sparsity aware listing, as seen in \cite{DLP12, CHKKLPJ17, CHLT20, PS18} and in this paper, and tend to leverage the T{\'u}ran number of $C_{2k}$ in order to list faster. These types of algorithms fail when moving to the \congest model in graphs with high conductance. The reason that the algorithms break is because in the \congest model, if the input graph is sparser, there is less bandwidth available to the entire network, in contrast with the \clique model. The first type of algorithms break since it is not known how to efficiently utilize input sparsity to improve the running time of fast matrix multiplication. This leads to the case where once the input graph is too sparse, the message complexity of the algorithm remains the same, and since the bandwidth available decreases, the round complexity increases. The latter type of algorithms break since while the T{\'u}ran number of $C_{2k}$ bounds the sparsity of the graph, and thus reduces the message complexity for the listing algorithms, it also implies less bandwidth for the network, effectively canceling out the effect of the reduction in message complexity.

\subparagraph{Finding 6-cycles in regular graphs.}
Since the T{\'u}ran number of 6-cycles is $\Theta(n^{4/3})$, when we restrict attention to \emph{regular} graphs, the ``interesting'' degree for $C_6$-detection 
is $\Theta(n^{1/3})$ (above this degree we know for sure that the graph contains a 6-cycle, and below this degree the problem becomes easier).

Suppose we assign to each node $u$ a random color $c(u)$ in $\set{0,\ldots,5}$ (i.e., we perform color-coding).
Each cycle is assigned consecutive colors $0,\ldots,5$ with constant probability, so we may as well search only for this type of cycle (and then repeat a constant number of times, to ensure that if there is a cycle, at least in one iteration, it will be colored correctly).
For a given node $u$ with $c(u) = 3$, let $L(u)$ be the 0-colored nodes that can be reached from $u$ by traversing a path of length 3 with descending colors $3,2,1,0$,
and let $R(u)$ be the 0-colored nodes that can be reached from $u$ by traversing a path of length 3 with ascending colors $3,4,5,0 = 6 \bmod 6$.
The problem of checking if $u$ participates in a well-colored 6-cycle boils down to checking whether $R(u) \cap L(u) = \emptyset$.

Since we are working with a regular graph of degree $\Theta(n^{1/3})$, in the ``average'' case (i.e., a random graph), we will have $|R(u)|, |L(u)| = \Theta(n)$. Checking whether two sets of size $\Theta(n)$ intersect or not is a famous problem in two-party communication complexity --- the $\textsf{Disjointness}$ problem, which is well-known to require $\Omega(n)$ bits of communication (or, more strongly, bits of \emph{information}) between the two players.
Intuitively, in order to check whether $L(u) \cap R(u) = \emptyset$, node $u$ must collect $\Theta(n)$ bits of information about each set, and since $u$ has degree $\Theta(n^{1/3})$, this requires $\Theta(n^{2/3})$ rounds.

Unfortunately, despite trying for a long time, we have not been able to make this intuition into a formal lower bound --- and now we see that at least there is a good reason for that, in the form of the barrier of Section~\ref{sec:barrier}.

\section{Acknowledgments}
This project was partially supported by the European Union's Horizon 2020 Research and Innovation Programme under grant agreement no.~755839, by the JSPS KAKENHI grants JP16H01705, JP19H04066 JP20H04139 and JP20H00579 and by the MEXT Q-LEAP grant JPMXS0120319794.

\bibliography{References}

\appendix
\section{Barrier of $\Omega(n^{\alpha})$ for Lower Bounds on Directed Triangle Freeness}
\label{sec:app_c6_barrier}

Following the exact lines in \cite{EFFKO19} of the barrier for triangle freeness, we show that for any $\alpha > 0$ showing a lower bound of $\Omega(n^\alpha)$ on directed triangle freeness implies strong circuit complexity lower bounds. This reduction is also very strongly based on the algorithm of \cite{CPZ19,CS19} for triangle enumeration.

The first step is to reduce from general graphs to graphs with high conductance:
let $\mathcal{A}_2$ be an algorithm that solves directed triangle freeness in a communication network with conductance $\phi$ and $n$ nodes, where every node is given as input $O(\deg(v))$ edges. Let $T_2(n,\phi)$ be the round complexity of $\mathcal{A}_2$. We show that given such an algorithm we can solve directed triangle freeness in $O(\mathcal{A}_2(n,\polylog{n})\log{n})$ rounds in \congest.

\begin{theorem}[Theorem 1 in~\cite{CS19}]
	\label{lem:expander_decompose}
	For $\epsilon \in (0,1)$, and a positive integer $k$, the network can partition its edges into two sets $E_r$,$E_m$ satisfying the following conditions:
	\begin{enumerate}
		\item The conductance of each connected component $G_i$ of $E_m$ satisfies $\Phi(G[V_i])\geq \phi$, where $\phi=(\epsilon/\polylog n)^{20 \cdot 3^k}$.
		\item  $|E_{r}| < \epsilon m$.
	\end{enumerate}
	This decomposition can be constructed using randomization in $O\left((\epsilon m)^{1/k}\cdot \left(\frac{\polylog n}{\epsilon} \right)^{20 \cdot 3^k}\right)$ rounds w.h.p.
\end{theorem}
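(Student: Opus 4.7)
The plan is to produce $(E_r,E_m)$ by a recursive expander-decomposition scheme: repeatedly carve off a low-conductance cut of the current (sub)graph, push the crossing edges into $E_r$, and recurse on each side; terminate a branch as soon as the remaining piece has conductance at least $\phi$, at which point all of its internal edges enter $E_m$. The parameter $k$ will control the recursion depth, and the $(\epsilon m)^{1/k}$ factor in the round complexity will come from the fact that each level of recursion shrinks the largest live piece by a factor related to $(\epsilon m)^{1/k}$; the huge exponential $20\cdot 3^k$ in $\phi^{-1}$ will come from the conductance slack that the sparse-cut subroutine loses at each of the $k$ levels.

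The key technical ingredient is a \congest subroutine that, given any subgraph $H$, either certifies $\Phi(H)\ge \phi$ or returns a cut $(S,\bar S)$ with $\Phi_H(S)\le \phi'$ for some $\phi'\ge \phi$, in $\poly(\phi^{-1},\log n)$ rounds independent of the diameter of $H$. I would build it using a distributed cut-matching game in the style of Khandekar--Rao--Vazirani, realized via short random walks or multicommodity flow embedded through the network, together with local ``trimming'' of the small side of the cut (à la Saranurak--Wang) so that the side that is kept is actually a near-expander rather than merely the complement of a sparse cut. Given this subroutine, the steps are: (i) run it in parallel on every current piece; (ii) for pieces that return a sparse cut, move the crossing edges into $E_r$ and split the piece; (iii) repeat for $k$ levels; (iv) after the $k$th level, declare every remaining piece to have conductance at least $\phi$ and dump its internal edges into $E_m$. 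To bound $|E_r|$, I would charge the cut edges at every level against the edges of the piece being cut and sum a geometric series, choosing $\phi$ small enough (hence the exponential dependence on $k$) that the total is at most $\epsilon m$. The round complexity is dominated by the largest piece at each level: a balance argument shows that after $k$ levels the largest live piece has at most $(\epsilon m)^{1/k}$ edges, which combined with the per-call cost gives the stated bound.

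The hard part will be controlling the interaction between balance and conductance. In \congest we cannot cheaply compute truly balanced sparse cuts, so the subroutine may return very unbalanced cuts, which would blow up the recursion depth and ruin the bound on the largest piece. The standard way to finesse this is to use expander \emph{pruning}: after finding any sparse cut, iteratively peel off additional low-degree vertices from the larger side until what remains is a certified near-expander, so that even unbalanced cuts make deterministic progress on a potential function. The second subtlety is keeping the certified conductance from degrading uncontrollably across levels; this forces $\phi$ to depend doubly-exponentially on $k$ (the $20\cdot 3^k$ in the exponent), and the accounting that shows this choice suffices for both $|E_r|<\epsilon m$ and a valid $\phi$-expander guarantee for every surviving component is the most delicate bookkeeping step of the argument.
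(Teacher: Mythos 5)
This theorem is not proved in the paper at all: it is imported as a black box from Chang and Saranurak's work, cited as \cite{CS19} (``Theorem 1 in [CS19]''). The paper uses it only as an external primitive inside the circuit-complexity barrier argument of Appendix~\ref{sec:app_c6_barrier}. So there is no internal proof to compare your sketch against; you are attempting to prove a cited result from a different paper.

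That said, your high-level plan does capture the standard framework underlying that result: recursively either certify conductance or find a sparse cut, push crossing edges to $E_r$, use cut-matching (Khandekar--Rao--Vazirani style) plus expander trimming/pruning (Saranurak--Wang style) so that unbalanced cuts still make progress, and charge $|E_r|$ via a potential/geometric-series argument. That is the right toolbox, and your remarks about unbalanced cuts and conductance degradation across levels correctly identify where the difficulty lies.

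The place where your sketch actually goes wrong, rather than merely being incomplete, is the explanation of the $(\epsilon m)^{1/k}$ factor. You assert in one place that ``each level of recursion shrinks the largest live piece by a factor related to $(\epsilon m)^{1/k}$,'' and in another that ``a balance argument shows that after $k$ levels the largest live piece has at most $(\epsilon m)^{1/k}$ edges.'' These two claims are not consistent with each other (a per-level shrink factor of $(\epsilon m)^{1/k}$ applied $k$ times would leave pieces of size $\Theta(1/\epsilon)$, not $(\epsilon m)^{1/k}$), and neither one would directly bound the round complexity: in \congest you cannot in general process a piece of $s$ edges in $O(s)$ rounds, since you are constrained by diameter and per-edge bandwidth rather than by total volume. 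In the actual [CS19] algorithm the $(\epsilon m)^{1/k}$ factor is a trade-off parameter threaded through the flow/cut subroutine itself (a ``slow'' low-diameter regime versus a ``fast'' high-conductance regime), not a bound on the size of recursion pieces. Your sketch therefore needs to replace the balance argument with the correct mechanism; as written, that step would not survive being made precise. The exponential loss $20\cdot 3^{k}$ in the conductance is, as you note, the most delicate accounting, but you only gesture at it rather than deriving it, so the sketch does not establish that your particular recursion structure yields constants of this form.
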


We apply the theorem with $\epsilon = 1/6$, taking $k$ to be a large enough constant so that the round complexity of the decomposition round complexity is less than $O(n^\alpha)$. We call the set of vertices of every connected component of $E_m$ a \emph{cluster}. A node is called \emph{good} if it has more edges in $E_m$ than in $E_r$, and otherwise \emph{bad}. We call an edge $e \in E_m$ \emph{bad} if at least one of its endpoints is bad.

\begin{lemma}[\cite{CPZ19}]
	The number of bad edges is at most $2\epsilon m$.
\end{lemma}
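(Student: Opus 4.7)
The plan is to prove the bound by a direct double-counting argument: bound the number of bad edges from above by the $E_m$-degrees of bad nodes, replace those $E_m$-degrees by the corresponding $E_r$-degrees using the definition of ``bad,'' and finally bound the resulting sum by $2|E_r|$.

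More concretely, for each node $v$ let $d_m(v)=|\{e\in E_m : v\in e\}|$ and $d_r(v)=|\{e\in E_r : v\in e\}|$, and let $B\subseteq V$ denote the set of bad nodes, so by definition $d_m(v)\le d_r(v)$ for every $v\in B$. Every bad edge $e\in E_m$ has, by definition, at least one endpoint in $B$, so it is counted at least once in the sum $\sum_{v\in B} d_m(v)$ (and at most twice, when both endpoints of $e$ are bad, but we only need the lower bound of one for the inequality we want). Hence the number of bad edges is at most $\sum_{v\in B} d_m(v)$.

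Now I would apply the defining inequality of bad nodes term-by-term to get
\[
\sum_{v\in B} d_m(v) \;\le\; \sum_{v\in B} d_r(v) \;\le\; \sum_{v\in V} d_r(v) \;=\; 2|E_r| \;<\; 2\epsilon m,
\]
where the last equality is the standard handshake identity summed only over edges of $E_r$, and the final inequality is the $|E_r|<\epsilon m$ guarantee from Theorem~\ref{lem:expander_decompose}. Chaining these gives the claimed bound on the number of bad edges.

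The argument is essentially a one-line counting identity, so there is no real obstacle; the only thing worth being careful about is the direction of the counting at the first step (each bad edge contributes at least one, not exactly one, to $\sum_{v\in B}d_m(v)$, which is exactly what we need for an upper bound on the number of bad edges), and making sure one does not accidentally double-count $|E_r|$ when passing from node degrees to edges.
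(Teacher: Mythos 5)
Your proof is correct, and it is the standard counting argument that the cited source uses for this bound: each bad edge contributes at least one to $\sum_{v\in B} d_m(v)$, the defining inequality $d_m(v)\le d_r(v)$ for bad $v$ converts this to a sum of $E_r$-degrees, and the handshake identity gives $2|E_r|<2\epsilon m$. (The paper itself states the lemma as a citation to \cite{CPZ19} without reproducing the proof, so there is nothing to compare beyond confirming that your argument is the intended one.)
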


Each cluster calculates its size $|U|$, and the number of edges in the cluster.
Since $\phi= \widetilde{O}(1)$, the diameter of each cluster is also $\widetilde{O}(1)$, and therefore this can be done in $\widetilde{O}(1)$ rounds (for example, by constructing a spanning tree on the cluster, and collecting the number of edges and nodes up the tree).
Then, each cluster runs $\mathcal{A}_2$ in parallel, where the input of each good node is all its edges (including edges leaving the cluster), and for each bad node, its edges in the cluster. We note that indeed by the definition of a good node, every node has $O(\deg_C(v))$ edges as input for $\mathcal{A}_2$, where $\deg_C(v)$ is $v$'s degree in its cluster. If $\mathcal{A}_2$ outputs that there exists a directed triangle, the cluster rejects and terminates. Otherwise, each cluster $U$ removes all good edges which are contained in $U$. The network then recurses on the remain edges until $O(1)$ edges remain. We note that as $|E_{good}| = m - |E_{bad}| -\epsilon m \geq m - 3\epsilon m = m/2$, in each iteration the network removes half of its edges, and the number of iterations are at most $O(\log{n})$. 

Clearly, if a node rejects then the graph contains a directed triangle. On the other hand, recall that the input graph of $\mathcal{A}_2$ is the edges adjacent to good nodes in $U$; therefore if $\mathcal{A}_2$ returns that there is no directed triangle, the network may safely remove all edges between two good nodes, as the triangle is contained in the union of inputs of both good endpoints.

The following lemma from~\cite{EFFKO19} shows that a dense cluster with good conductance is able to simulate a \emph{circuit},
where the size, depth, and input size of the circuit are related to the size, density and conductance of the cluster:
\begin{lemma}[\cite{EFFKO19}]
	\label{lem:circuit}
	Let $U$  be a graph $U=(V,E)$ with $|V|=n$ vertices and $m=N$ edges, with mixing time $\tau_{\textrm{mix}}$.
	Suppose that for some constant $c$,
	the function $f_N:\{0,1\}^{cN\log{n}}\rightarrow \{0,1\}$ is computed by a circuit $\mathcal{C}$ of depth $P$, consisting of gates with constant fan-in and fan-out,
	and at most $s\cdot N \cdot \log{n}$ wires for $s\leq n$. Then there is an $O(P \cdot s\cdot \tau_{\textrm{mix}}\cdot 2^{O(\sqrt{\log n}\log\log n)})$-round protocol in the \congest model on $U$ that computes $f_N$ in the network assuming the input is partitioned between the nodes such that each node has $O(\deg(v)\log{n})$ bits of input.
\end{lemma}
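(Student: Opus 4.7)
The plan is to simulate the circuit $\mathcal{C}$ layer-by-layer on the host graph $U$. Before describing the simulation I would fix a globally known, load-balanced assignment of the $s\cdot N \log n$ wires (and, separately, the gates of each layer) to the $n$ vertices of $U$, so that every vertex is responsible for carrying the values of $O(s \log n \cdot N/n)$ wires between consecutive layers and for locally evaluating a similar number of gates per layer. Since $m=N$ and gates have constant fan-in/fan-out, the total ``wire-value traffic'' per layer is $O(sN\log n)$ bits, which is precisely $O(s\log n)$ bits per edge on average; this is what makes the assignment feasible. The input hypothesis that each $v$ already holds $O(\deg(v)\log n)$ bits matches this budget (since $\sum_v \deg(v)=2N$), so the initial data can be redistributed to the owners of the layer-$0$ wires within one invocation of the routing primitive described below.

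Next, for each of the $P$ layers I would perform two steps: (a) every vertex locally evaluates the gates assigned to it from the values of the incoming wires it currently stores; and (b) every vertex sends each newly computed wire value to the vertex that will need it at the next layer. Step (a) is free since it is local computation. Step (b) is a routing instance in which each vertex is the source and destination of $O(s\log n)$ messages of $O(\log n)$ bits.

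The crux of the proof is step (b), which I would resolve by invoking a near-mixing-time routing primitive of the Ghaffari--Haeupler style: in any graph of mixing time $\tau_{\text{mix}}$, a permutation-style routing instance with per-node load $\ell$ can be delivered in $O\bigl((\ell+\tau_{\text{mix}})\cdot 2^{O(\sqrt{\log n}\log\log n)}\bigr)$ rounds with high probability. Plugging in $\ell=O(s\log n)$ and folding the $\log n$ factor into the sub-polynomial term, each layer is routed in $O\bigl(s\cdot \tau_{\text{mix}}\cdot 2^{O(\sqrt{\log n}\log\log n)}\bigr)$ rounds. Summing over the $P$ layers, plus the one-shot redistribution of the initial input, yields the claimed $O\bigl(P\cdot s\cdot \tau_{\text{mix}}\cdot 2^{O(\sqrt{\log n}\log\log n)}\bigr)$ bound, after which the designated output-gate node can broadcast $f_N$ within an additive $\tilde O(\tau_{\text{mix}})$ term that is absorbed in the stated complexity.

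The main obstacle is constructing the load-balanced wire-to-vertex assignment in a way that is (i) known to all vertices without extra communication, so that the routing of step (b) is just a deterministic permutation-like task on the known circuit structure, and (ii) compatible with the per-node load budget $O(s\log n)$ uniformly across all layers, not merely on average; this is what prevents the $s$ factor in the final bound from blowing up. Once this is in place, gate evaluation and telescoping over the $P$ layers are essentially bookkeeping, with the near-mixing-time routing primitive treated as a black box that supplies the $\tau_{\text{mix}}\cdot 2^{O(\sqrt{\log n}\log\log n)}$ factor per unit of load.
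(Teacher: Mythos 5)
The paper does not prove this lemma; it is quoted verbatim from~\cite{EFFKO19}, so there is no in-paper proof to compare your argument against. Your reconstruction does follow the right high-level template --- a layer-by-layer circuit simulation where wire values are redistributed each round via an expander-routing primitive --- which is indeed the approach in the cited source. However, there are two concrete gaps in how you set it up.

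First, the load balancing is specified incorrectly. You assign wires uniformly, $O(sN\log n/n)$ per vertex, but the routing primitives you want to invoke (of the Ghaffari--Kuhn--Su / Ghaffari--Li type) require load proportional to \emph{degree}: each vertex $v$ must be source and destination of $O(\deg(v))$ messages per phase, giving $\tau_{\mathrm{mix}}\cdot 2^{O(\sqrt{\log n}\,\log\log n)}$ rounds per phase. A uniform allocation over-burdens low-degree vertices; in a dense graph this can inflate the round complexity by a factor of the average degree. Your own text is internally inconsistent on this point: you first state a per-vertex quota of $O(sN\log n/n)$ wires but later claim each vertex is source/destination of only $O(s\log n)$ messages, which matches only when $N = O(n)$. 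The correct allocation is $O(\deg(v)\cdot s)$ messages of $O(\log n)$ bits per vertex $v$, which is precisely why the lemma hypothesizes that the input is distributed with $O(\deg(v)\log n)$ bits at $v$; you noticed this hypothesis but then redistributed away from it instead of preserving it throughout the simulation.

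Second, the routing primitive you cite with bound $O\bigl((\ell+\tau_{\mathrm{mix}})\cdot 2^{O(\sqrt{\log n}\,\log\log n)}\bigr)$ for ``per-node load $\ell$'' is not a statement that exists in that form. The known bound has a multiplicative dependence: with per-edge load $\ell$ (equivalently, per-vertex load $\ell\cdot\deg(v)$), one runs $\ell$ phases of the base routing scheme, for $O(\ell\cdot\tau_{\mathrm{mix}}\cdot 2^{O(\sqrt{\log n}\,\log\log n)})$ rounds. Plugging $\ell=O(s)$ (from $O(s\log n)$ bits per edge, i.e.~$O(s)$ words) gives the per-layer cost $O(s\cdot\tau_{\mathrm{mix}}\cdot 2^{O(\sqrt{\log n}\,\log\log n)})$, and summing over $P$ layers gives the claim. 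So the final bound you reach is correct, but it is reached via a primitive misstated both in the load model (uniform rather than degree-proportional) and in its dependence on $\ell$ (additive rather than multiplicative). You flag the load-balanced assignment as ``the main obstacle'' at the end, which is honest, but that obstacle is exactly where the argument currently has a hole rather than a resolved step.
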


We consider the following family of functions $f_N:\set{0,1}^{2N\log{N}}\rightarrow\{0,1\}$: given an encoding of a graph\footnote{For a graph with $N$ edges $\{(u_i,v_i)\}_{i=1}^N$, the graph is encoded by the string $u_1.id,v_1.id,\dots,u_N.id,v_N.id$, where each id is padded to $\lceil \log{N} \rceil$ bits, and the rest of the string is padded in such a manner that indicates that there are no further edges.} with at most $N$ edges, does the graph contain a directed triangle?

\begin{corollary}
	If directed triangle freeness cannot be solved in less than $c_1n^{\alpha}$ rounds for any $c_1 > 0$, then there exist constants $c_2,c_3 > 0$ such that there is no  family of circuits that solve for all $N$ the function $f_N$ with $c_2N^{\alpha/4}/2^{c_3\sqrt{\log{n}}\log\log{n}}$ depth and at most $c_2N^{1+\alpha/4}/2^{c_3\sqrt{\log{n}}\log\log{n}}$ wires.
\end{corollary}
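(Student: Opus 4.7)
I would proceed by contrapositive: assume that for every pair of constants $c_2, c_3 > 0$ there is some $N$ and a circuit of depth $P \le c_2 N^{\alpha/4}/2^{c_3\sqrt{\log n}\log\log n}$ and wire count $W \le c_2 N^{1+\alpha/4}/2^{c_3\sqrt{\log n}\log\log n}$ that computes $f_N$, and derive a \congest{} algorithm for directed triangle freeness running in $o(n^\alpha)$ rounds, contradicting the hypothesis that it requires $\Omega(n^\alpha)$. The overall algorithm mirrors the expander-decomposition framework set up earlier in this section (the reduction from general graphs with $n$ nodes to clusters of high conductance $\phi = 1/\polylog n$), and its new ingredient is only the implementation of the high-conductance subroutine $\mathcal{A}_2$ via the assumed circuit family.

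First I would invoke Theorem~\ref{lem:expander_decompose} with $\epsilon = 1/6$ and $k$ a large enough constant, which produces the decomposition in $O(n^{\alpha/4})$ rounds and guarantees that each cluster $U$ has conductance at least $\phi = 1/\polylog n$ (so mixing time $\tau_{\mathrm{mix}} = \polylog n$). At most $3\epsilon m = m/2$ edges are either in $E_r$ or incident to bad nodes in $E_m$, so after recursing on the removed edges we get $O(\log n)$ levels of recursion, each of which only needs to solve directed-triangle-freeness on the good-edge subgraph of each cluster (the input of each good node being all its incident edges, which is $O(\deg_C(v))$).

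Second, on a cluster $U$ with $N$ edges I would encode the triangle-freeness instance as the input of $f_N$ and invoke Lemma~\ref{lem:circuit}. Since $W \le c_2 N^{1+\alpha/4}/2^{c_3\sqrt{\log n}\log\log n}$, we may take $s \le c_2 N^{\alpha/4}/(\log n \cdot 2^{c_3\sqrt{\log n}\log\log n})$ in the statement of Lemma~\ref{lem:circuit}; combined with the depth bound on $P$ and $\tau_{\mathrm{mix}} = \polylog n$, this yields a simulation running in
\[
O\!\left(P \cdot s \cdot \tau_{\mathrm{mix}} \cdot 2^{O(\sqrt{\log n}\log\log n)}\right) \;=\; O\!\left(\frac{N^{\alpha/2} \cdot \polylog n}{2^{2c_3\sqrt{\log n}\log\log n}} \cdot 2^{O(\sqrt{\log n}\log\log n)}\right)
\]
rounds. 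Since $N \le m \le n^2$, we have $N^{\alpha/2} \le n^\alpha$, and by choosing $c_3$ larger than the implicit constant in the $2^{O(\sqrt{\log n}\log\log n)}$ factor, each cluster solves its instance in $o(n^\alpha)$ rounds. Summing over the $O(\log n)$ recursion levels, and adding the $O(n^{\alpha/4})$ decomposition cost, the total round complexity is $o(n^\alpha)$, contradicting the assumed lower bound.

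The main obstacle, and what needs the most care, is the bookkeeping with the sub-polynomial factor $2^{O(\sqrt{\log n}\log\log n)}$ coming from Lemma~\ref{lem:circuit}: one must pick $c_3$ strictly larger than the hidden constant in that lemma, and must also verify that the $\polylog n$ factor from $\tau_{\mathrm{mix}}$, together with the $O(\log n)$ recursion overhead, still fits under the $2^{-c_3\sqrt{\log n}\log\log n}$ slack. Everything else (the correctness of the expander decomposition, the good/bad-edge invariants, the fact that $N \le n^2$ so $\log N = \Theta(\log n)$, and the ability of a good node to feed all its edges into the simulated $f_N$) is essentially inherited from the framework of \cite{CS19,CPZ19,EFFKO19} and requires no new idea beyond substituting the directed-triangle problem for the undirected one.
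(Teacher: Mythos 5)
Your proposal follows essentially the same route as the paper's proof: invoke the expander decomposition (Theorem~\ref{lem:expander_decompose}) with $\epsilon = 1/6$ and a suitably large constant $k$, recurse over $O(\log n)$ levels, and on each high-conductance cluster use Lemma~\ref{lem:circuit} to simulate the hypothetical circuit for $f_N$, picking $s$ and $P$ from the assumed wire and depth bounds so that the $2^{-\Theta(c_3\sqrt{\log n}\log\log n)}$ slack absorbs the $\polylog n$ and $2^{O(\sqrt{\log n}\log\log n)}$ overheads; together with $N \le n^2$ this gives a sub-$n^\alpha$ algorithm, contradicting the hypothesis. This is the same argument the paper gives.

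One wrinkle worth fixing in your write-up: when you take the contrapositive you write ``for every pair of constants $c_2,c_3>0$ there is \emph{some} $N$ and a circuit.'' The negation of the corollary's conclusion is that for every $c_2,c_3>0$ there is a family of circuits that works for \emph{all} $N$. You do need circuits for every $N$, since clusters of varying edge-counts arise across the decomposition and the recursion levels (and you indeed use them that way later); a single $N$ would not suffice. This is a presentational slip rather than a flaw in the underlying argument, but as stated the quantifier is wrong and should be corrected.
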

\begin{proof}
	Let $\mathcal{F}$ be an infinite family of graphs for which directed triangle freeness cannot be solved in less than $c_1n^{\alpha}$ rounds. Let $c_4 > 0$ be a constant such that the conductance of the clusters obtained by Theorem~\ref{lem:expander_decompose} is less than $\log^{c_4}{n}$. Assume by contradiction that for sufficiently large $c_2,c_3$ there is an infinite family of circuits solving for any $N$ the function $f_N$ with  $c_2N^{\alpha/4}/2^{c_3\sqrt{\log{n}}\log\log{n}}$ depth and $c_2N^{1+\alpha/4}/2^{c_3\sqrt{\log{n}}\log\log{n}}$ wires. Then by Lemma~\ref{lem:circuit} taking $s = n^{\alpha/2}/2^{c_3\sqrt{\log{n}}\log\log{n}}$ and  $P = n^{\alpha/2}/2^{c_3\sqrt{\log{n}}\log\log{n}}$ (both of which are larger than $c_2(n^2)^{\alpha/4}/2^{c_3\sqrt{\log{n}}\log\log{n}}$) there exists an algorithm with round complexity $c_1n^{\alpha}/\log{n}$ that solves directed triangle freeness on graphs with conductance at least $\phi = \log^{c_4}{n}$, where the input of each node is $O(\deg(v))$. By the reduction, we get that there is a $c_1 > 0$ such that directed triangle freeness can be solved in any network with $c_1 n^{\alpha}$ rounds, which is a contradiction.
\end{proof}

All together, we see that proving a lower bound of the form $\Omega(n^{\alpha})$ on directed triangle freeness, for any $\alpha > 0$,
would imply \emph{superlinear} lower bounds on the number of wires in circuits of \emph{polynomial depth}.
Such lower bounds on any explicit function are far beyond the reach of current circuit complexity techniques;
currently, the best lower bounds even for \emph{logarithmic}-depth circuits is at most linear in the input size (see, e.g. \cite{W19}).

\end{document}